\newcommand{\tabincell}[2]{\begin{tabular}{@{}#1@{}}#2\end{tabular}}
\newtheorem{theorem}{Theorem}
\newtheorem{lemma}{Lemma}
\newtheorem{claim}{Claim}
\newtheorem{corollary}{Corollary}
\DeclareMathOperator*{\argmax}{arg\,max}
\newcommand{\bE}{\mathbb{E}}
\newcommand{\bR}{\mathbb{R}}
\newcommand{\cA}{{\cal A}}
\newcommand{\cD}{{\cal D}}
\newcommand{\cI}{{\cal I}}
\newcommand{\cM}{{\cal M}}
\newcommand{\des}{{\delta \epsilon}}
\begin{document}

\title{Bayesian Auctions with Efficient Queries\thanks{This work has been partially supported by 
NSF CAREER Award (No. 1553385), National NSF of China (No. 61741209) and the Fundamental Research Funds for the Central Universities. 
Part of this work was done when the first three authors were visiting Shanghai University of Finance and Economics. 
% A brief announcement of this paper has appeared on 45th International Colloquium on Automata, Languages, and Programming.
}}

\author{Jing Chen$^{\dag}$ \hspace{30pt} Bo Li$^{\dag}$ \hspace{30pt} Yingkai Li$^{\dag}$  \hspace{30pt} Pinyan Lu$^{\ddag}$\\
$^{\dag}$Department of Computer Science, Stony Brook University\\
Stony Brook, NY 11794, USA\\
\texttt{\{jingchen, boli2, yingkli\}@cs.stonybrook.edu}\\
$^{\ddag}$Institute for Theoretical Computer Science, Shanghai University of Finance and Economics\\
Shanghai 200433, China\\
\texttt{lu.pinyan@mail.shufe.edu.cn}}

\date{}

\maketitle

\begin{abstract}

Designing dominant-strategy incentive compatible (DSIC) mechanisms to generate (approximately)
optimal revenue is a fundamental problem in Bayesian mechanism design.
However, most existing studies 
assume that the seller knows the entire distribution. 
% and is able to
%optimize perfectly based on the distribution.
Unfortunately, this assumption may not hold in reality:
for example,
when the distributions have exponentially large supports or do not have succinct representations.
In this work we consider, for the first time, the {\em query complexity} of Bayesian mechanisms.
The seller only has limited oracle accesses to the players' distributions,
via {\em quantile queries}
and {\em value queries}.
We prove {\em logarithmic} lower-bounds for any constant approximation DSIC mechanisms in a large class of auction settings. 
For single-item auctions and multi-item auctions with unit-demand or additive valuation functions,
we prove {\em tight} upper-bounds via efficient query schemes. 
Thus, in those settings the seller needs to access much less than the entire distribution to achieve approximately optimal revenue.

\medskip

\noindent
{\bf Keywords:} mechanism design, the complexity of Bayesian mechanisms, query complexity, quantile queries, value queries

\end{abstract}

\thispagestyle{empty}

\setcounter{page}{0}

\newpage

\section{Introduction}\label{sec:intro}

An important problem in Bayesian mechanism design is to design auctions that (approximately)
maximize the seller's expected revenue.
More precisely, in a Bayesian multi-item auction a seller has $m$ heterogenous items to sell to $n$ players.
Each player $i$ has a private value for each item $j$, $v_{ij}$;
and each $v_{ij}$ is independently drawn from some prior distribution $\cD_{ij}$.
When the prior distribution $\cD\triangleq \times_{ij} \cD_{ij}$ is of {\em common knowledge} to both the seller and the players,
optimal Bayesian incentive-compatible (BIC) mechanisms have been discovered
for various auction settings \cite{myerson1981optimal, cremer1988full, cai2012algorithmic, cai2012optimal},
where all players reporting their true values forms a Bayesian Nash equilibrium.
When there is no common prior but the seller knows $\cD$,
many (approximately) optimal dominant-strategy incentive-compatible (DSIC) Bayesian mechanisms have been designed
\cite{myerson1981optimal, ronen2001approximating, chawla2010multi, kleinberg2012matroid, yao2015n, cai2016duality},
where it is each player's {\em dominant strategy} to report his true values.

However, the {\em complexity} for the seller to carry out such mechanisms is largely unconsidered in the literature.
Most existing Bayesian mechanisms require that the seller has full access to the prior distribution $\cD$
and is able to carry out all required optimizations based on $\cD$, so as to compute the allocation and the prices.
Unfortunately the seller may not be so knowledgeable or powerful in real-world scenarios.
If the supports of the distributions are exponentially large (in $m$ and $n$),
or if the distributions are continuous and do not have succinct representations,
it is hard for the seller to write out ``each single bit'' of the distributions or precisely carry out arbitrary optimization tasks based on them.
Even with a single player and a single item, when the value distribution is irregular, computing the
optimal price in time that is much smaller than the size of the support is not an easy task.
Thus, a natural and important question to ask is {\em how much the seller should know about the distributions in order to
obtain approximately optimal revenue.}

In this work we consider, for the first time, the {\it query complexity} of Bayesian mechanisms.
In particular, the seller can only access the distributions by making oracle queries.
Two natural types of queries are allowed, {\em quantile queries} and {\em value queries}.
That is, the seller queries the oracle with specific quantiles (respectively, values),
and the oracle returns the corresponding values (respectively, quantiles) in the underlying distributions.
These two types of queries happen a lot in market study.
Indeed, the seller may wish to know what is the price he should set so that half of the consumers would purchase his product;
or if he sets the price to be 200 dollars, how many consumers would buy it.
Another important scenario where such queries naturally come up is in databases.
Indeed, although the seller may not know the distribution,
some powerful institutes, say
the Office for National Statistics, may have such information figured out and stored in its database.
As in most database applications, it may be neither necessary nor feasible for the seller to
download the whole distribution to his local machines.
Rather, he would like to access the distribution via queries to the database.
Other types of queries are of course possible, and will be considered in future works.

In this work we focus on {\it non-adaptive} queries. That is,
the seller makes all oracle queries simultaneously,
before the auction starts. 
This is also natural in both database and market study scenarios,
and in Section \ref{sec:lower-bound}, 
we will show that the performance of adaptive queries cannot be improved up to a logarithmic factor.

\subsection{Main Results}

We would like to understand both lower- and upper-bounds for the query complexity of approximately optimal Bayesian auctions.
In this work, we mainly consider three widely studied settings: single-item auctions and
multi-item auctions with unit-demand or additive valuation functions.
Our main results are summarized in Table \ref{table:main results}.
%, their implications on sample complexity, as well as the best-known sample complexity in the literature

%, both for bounded distributions
%and
Note that we allow arbitrary unbounded distributions that satisfy {\em small-tail assumptions},
with formal definitions provided in Section \ref{sec:small-tail}.
Similar assumptions are widely adopted in sampling mechanisms \cite{roughgarden2016ironing, devanur2016sample},
to deal with irregular distributions with unbounded supports.
Since distributions with bounded supports automatically satisfy the small-tail assumptions, the lower-bounds listed for the former apply to the latter as well.

\begin{table}[htbp]
\begin{center}
\renewcommand{\arraystretch}{1.4}
\begin{tabular}{|c|c|c|c|c|c|}
\hline
	~& Query & \multicolumn{4}{c|}{Distributions} \\
\cline{3-6}
	~&Complexity & \multicolumn{2}{c|}{Bounded in $[1,H]$} & \multicolumn{2}{c|}{Unbounded $\&$ Small Tail}  \\
\cline{1-6}
	\multirow4*{\rotatebox{90}{Auctions}} &Single-Item & \multicolumn{2}{c|}{$\Theta(n\epsilon^{-1}\log H)$}
	& \multicolumn{2}{c|}{$O(-n\epsilon^{-1}\log h(\frac{2\epsilon}{3(1+\epsilon)})$} \\
\cline{2-6}
	~&Unit-Demand & $\forall c>1$: $\Omega(\frac{mn\log H}{\log c})$ & $\forall c>24$: $O(\frac{mn\log H}{\log (c/24)})$
	& \multicolumn{2}{c|}{$\forall c>24$: $O(-\frac{mn\log h(\frac{2c-48}{3c})}{\log (c/24)})$} \\
\cline{2-6}
	~&Additive & $\forall c>1$: $\Omega(\frac{mn\log H}{\log c})$ &
	$\forall c>8$: $O(\frac{mn\log H}{\log (c/8)})$
	& \multicolumn{2}{c|}{$\forall c>8$:  $O(-\frac{m^2n\log h(\frac{c-8}{10c})}{\log (c/8)})$}\\
\cline{2-6}
	~& \multirow{1}*{Single-Item} & \multicolumn{4}{c|}{Regular Distributions:
$\Omega(n\epsilon^{-1})$, $O(n\epsilon^{-1}\log\frac{n}{\epsilon})$} \\
\hline
\end{tabular}
\caption{Our main results. Here $h(\cdot)<1$ is the tail function in the small-tail assumptions.
For single-item auctions, the revenue is
a $(1+\epsilon)$-approximation to the optimal BIC revenue, with $\epsilon$ sufficiently small.
For multi-item auctions with unit-demand or additive valuation functions, the revenue is a $c$-approximation for some constant $c$.
}
\label{table:main results}
\end{center}
\end{table}

Also note that our lower- and upper-bounds on query complexity are {\em tight} for bounded distributions.
As will become clear in Section \ref{sec:lower-bound},
our lower-bounds allow the seller to make both value and quantile queries,
and apply to any multi-player multi-item auctions
 where
each player's valuation function is {\em succinct sub-additive}:
formal definitions in Section \ref{sec:lower-bound}.
The lower-bounds also allow randomized queries and randomized mechanisms.

For the upper-bounds,
all our query schemes are deterministic and only make one type of queries:
value queries
%the value queries do not make much contribution in estimating the optimal revenue.
%Thus, we use quantile queries for this case.
for bounded distributions and
quantile queries in the other cases; see Sections~\ref{sec:bounded} and \ref{sec:queryunbounded}.
%we show that the value queries alone is sufficient to get all our upper-bound results.
%The query complexities of our mechanisms
%are only logarithmic in the  cut-off value of the tails or value bound.
%Moreover,
%for single-item auctions with unbounded regular distributions, there is no need to impose any small-tail assumption,
%and the query complexity is still significantly less than the sample complexity.
%In the construction and the analysis of our mechanisms,
%we show how to discretize the value space and the quantile space of the prior distribution,
We show that our schemes, despite of being very efficient, only loses a small fraction of revenue compared with when the seller has full access to
the distributions.
%while losing only a small fraction in revenue.

%are simple in the sense that we do not need to
%design new mechanisms with the samples.

%, we show that
%using $poly(m,n,H,\frac{1}{\epsilon})$ random samples,
%we are able to design Bayesian mechanisms for single-item, unit-demand and additive auctions that,
%with high probability, approximate the optimal BIC revenue for the seller.
% maximize the seller's revenue.
%In particular, although \cite{morgenstern2016learning} shows that
%it is possible to approximate the optimal revenue using polynomially many samples,
%we provide the first {\em constructive} sampling mechanisms for multi-parameter auctions,
%%Also, all our results for sampling mechanisms hold even for distributions with unbounded support but with necessary tail assumptions.
%and our sampling mechanism for unit-demand auctions has a better approximation ratio than \cite{morgenstern2016learning}.

\subsection{Discussions}

\subparagraph*{Sample Complexity.}
A closely related area to our work is sampling mechanisms
%which is widely studied in the literature
\cite{cole2014sample,azar2014prophet, huang2015making, dhangwatnotai2015revenue, morgenstern2016learning,
devanur2016sample, goldner2016prior, gonczarowski2017efficient,cai2017learning}.
It assumes that the seller does not know $\cD$
but observes independent samples from $\cD$ before the auction begins.
%say from long-term executions of the same auction.
The {\em sample complexity} measures
how many samples the seller needs so as to obtain a good approximation to the optimal Bayesian revenue.
The best-known sample complexity results are summarized in Table \ref{table:sample:laterature}.
\begin{table}[htbp]
\begin{center}
\begin{tabular}{|c|c|c|c|c|}
\hline
Auctions &
{\tabincell{c}{Single-Item \\(regular) \cite{devanur2016sample}}} &
{\tabincell{c}{Single-Item \\ (bounded) \cite{devanur2016sample}}}
& {\tabincell{c}{Unit-Demand \\(bounded)  \cite{morgenstern2016learning}}} &
{\tabincell{c}{Additive \\(bounded)\cite{cai2017learning}}}  \\
\hline
{\tabincell{c}{Sample \\ Complexity}}
& $\tilde{O}(n\epsilon^{-4})$
& $\tilde{O}(nH\epsilon^{-3})$
& $\tilde{O}(nm^{2}H^{2}\epsilon^{-2})$
& $\tilde{O}(nm^{2}H^{2}\epsilon^{-2})$ \\
\hline
Approximations & $1+\epsilon$ & $1+\epsilon$ & 27 & 32 \\
\hline
\end{tabular}
\end{center}
\caption{Sample Complexity. When a distribution is bounded, it is bounded within $[1,H]$.
For multi-item auctions with unit-demand or additive valuation functions,
the revenue has an extra $\epsilon$ additive loss.
}
\label{table:sample:laterature}

\end{table}%

%As will become clear in the technical part of this paper,
%in some sense,
Oracle queries can be seen as {\em targeted samples},
where the seller actively asks the information he needs rather than passively
learns about it from random samples.
As such, it is intuitive that queries are more efficient than samples,
but it is a priori unclear how efficient queries can be.
Our results answer this question quantitatively and show that query complexity can be exponentially smaller than
sample complexity: the former is {\em logarithmic} in the ``size'' of the distributions,
while the latter is polynomial.
%Indeed, {\em active queries are significantly more powerful than passive notifications.}

Finally, the design of query mechanisms facilitates the design of sampling mechanisms.
If the seller observes enough samples from $\cD$,
then he can mimic quantile queries and apply query mechanisms: see Section \ref{sec:app:sample} for more details.

\subparagraph*{Parametric Auctions.}
Parametric mechanisms \cite{azar2013parametric,azar2013optimal}
assume the seller only knows some specific parameters about the distributions, such as the mean, the median (or a single quantile), and the variance.
%instead of knowing the exact prior distribution.
%Thus the goal is to design mechanisms to approximate the optimal Bayesian revenue
%using only these parameters.
%In our setting, u%\
Note that using quantile or value queries,
one can get the exact value of the median and the approximate value of the mean, and then apply parametric mechanisms.
%Since the parametric mechanisms are essentially VCG mechanisms with mean or median as reserve prices,
%We can then apply parametric mechanisms and obtain (nearly) the same revenue as theirs.
However, existing parametric mechanisms only consider single-parameter auctions, where the distributions are regular or have monotone hazard rate.
Since our mechanisms make non-adaptive oracle queries,
our results imply parametric mechanisms in multi-parameter settings with general distributions,
where the ``parameters'' are the oracle's answers to our query schemes.
Our lower-bounds imply that knowing only the median is not enough to achieve the same approximation ratios as we do.
Finally, it remains unknown whether constant approximations can be achieved for multi-parameter auctions or general distributions, knowing only the mean and the variance.

%In some sense, we can regard our results as a generalization of parametric mechanisms
%to multi-parameter auctions and general distributions.

\subparagraph*{Distributions within Bounded Distance.}
Recently, \cite{cai2017learning} studies auctions where the true prior distribution is unknown to the seller,
but he is given a distribution that is close to the true prior, as measured by the Kolmogorov distance.
On the one hand, the learnt distributions from our query schemes can be far from the true prior in terms of the Kolmogorov distance, thus their mechanisms do not apply.
On the other hand, although a distribution close to the true prior may be learnt via sufficiently many oracle queries, our lower-bounds
imply that the query complexity of this approach will not be better than ours.

\subparagraph*{Using Experts as Oracles.}
If the players' value distributions are known by some experts, then the seller can use
the experts as oracles.
Indeed, we
are able to design proper scoring rules \cite{brier1950verification, cervera1996proper}
for the seller to elicit truthful answers from the experts for his queries.
If the experts are actually players in the auctions, then they have their own stakes about the final allocation and prices,
and it would be interesting to see how the seller can still use them as oracles and
get truthful answers for his queries, while keeping them truthful about their own values.
See \cite{jing2016crowd} for more discussions on this front.

\subsection{Other Related Works}

The complexity of auctions is an important topic
in the literature, and several complexity measures have been considered.
Following the taxation principle \cite{hammond1979straightforward, guesnerie1981taxation},
\cite{hart2013menu} defines the {\em menu complexity} of truthful auctions.
For a single additive buyer,
\cite{daskalakis2013mechanism} shows
the optimal Bayesian auction for revenue can have an infinite menu size or a continuum of menu entries,
and \cite{babaioff2014simple}
shows a constant approximation
under finite menu complexity.
Recently, \cite{dobzinski2016computational}
considers the taxation, communication, query and menu complexities of
truthful combinatorial auctions,
and shows important connections among them.
The queries considered there are totally different
from ours: we are concerned with the complexity of accessing
the players' value distributions in Bayesian settings, while \cite{dobzinski2016computational} is concerned with
the complexity of accessing the players' valuation functions in non-Bayesian settings.

\subsection{Future Directions}

%As this is the first time the query complexity of Bayesian auctions is considered,
Many interesting questions about the query complexity of Bayesian auctions are worth exploring.
First, as mentioned, 
%we focus on non-adaptive queries in this work.
%and we prove that for {\em single-item single-player} auctions, the bounds also apply for adaptive queries. 
%However, 
there is a logarithmic gap between adaptive queries and non-adaptive queries for multi-item auctions. 
It is intriguing to design approximately optimal Bayesian mechanisms with matching query complexity using adaptive queries. 
Another interesting direction is when the answers of the oracle contain noise.
In this case,
the distributions learnt by the seller may be within a small distance from
the ``true distributions'' defined by oracle answers without noise.
This is related to \cite{cai2017learning} and it would be interesting to design mechanisms to handle such noise.

\section{Preliminaries}\label{sec:pre}

\subsection{Bayesian Auctions}
\label{sec:prelim:bayesian}

In a multi-item auction there are $m$ items, denoted by $M = \{1, \dots, m\}$,
and $n$ players, denoted by $N = \{1,\dots, n\}$.
Each player $i\in N$ has a non-negative value for each item $j\in M$, $v_{ij}$,
which is independently drawn from distribution $\cD_{ij}$.
Player $i$'s {\em true valuation} is $(v_{ij})_{j\in [m]}$.
To simplify the notations, we may write $v_{i}$ for $(v_{ij})_{j\in [m]}$ and $v$ for $(v_{i})_{i\in [n]}$.
Letting $\cD_{i}=\times_{j\in M} \cD_{ij}$ and $\cD= \times_{i\in N} \cD_{i}$,
we use $\cI = (N, M, \cD)$ to denote the corresponding Bayesian auction instance
and $OPT(\cI)$ the optimal BIC revenue of $\cI$.
When $\cI$ is clear from the context, we write $OPT$ for short.

We will consider several classes of widely studied auctions.
%multi-item auctions that are widely studied in the literature.
A {\em single-item} auction has $m=1$.
When $m>1$, a bidder $i$ being {\em unit-demand} means his
value for a subset $S$ of items is $\max_{j\in S}v_{ij}$,
and a bidder $i$ being {\em additive} means his value for $S$ is $\sum_{j\in S} v_{ij}$.
When all bidders are unit-demand (respectively, additive),
we call such an auction a {\em unit-demand auction} (respectively, an {\em additive auction}) for short.

\subsection{Query Complexity}
\label{sec:query:oracle}

In this work, we only allow the seller to access the prior distributions via two types of oracle queries:
{\em value queries} and {\em quantile queries}.
Given a distribution $D$ over reals,
in a value query, the seller sends a value $v\in \bR$ and the oracle returns
the corresponding quantile $q(v) \triangleq \Pr_{x\sim D}[x \geq v]$.
In a quantile query, the seller sends a quantile $q\in [0, 1]$
and the oracle returns the corresponding value $v(q)$
%\in Supp(D)$
such that $\Pr_{x\sim D}[x \geq v(q)] = q$.
With {\em non-adaptive} queries, the seller first sends all his queries to the oracle, gets the answers back, and then
runs the auction. % with the players.
The {\em query complexity}
is the number of queries made by the seller.

Note that the answer to a value query is unique.
The quantile queries are a bit tricky, as
for discrete distributions
there may be multiple values corresponding to the same quantile~$q$,
or there may be none.
% not be any value that exactly corresponds to the desired quantile.
When there are multiple values,
to resolve the ambiguity,
let the output of the oracle be the largest one: that is,
$v(q) = \argmax_{z}\{ \Pr_{x\sim D}[x \geq z] = q\}$.
When there is no value corresponding to~$q$,
the oracle returns
the largest value whose corresponding quantile is larger than $q$: that is,
$v(q) = \argmax_{z} \{\Pr_{x\sim D}[x \geq z] > q\}$.
So for any quantile query $q$,
$v(q) = \argmax_z\{\Pr_{x\sim D}[x \geq z] \geq q\}$ in general.
For any discrete distribution $D$ and quantile query $q>0$,
$v(q)$ is always in the support of $D$. When $q=0$, $v(q)$ may be $+\infty$.

\section{Lower Bounds}
\label{sec:lower-bound}

In this section, we prove lower bounds for the query complexity of Bayesian mechanisms, and we focus on DSIC mechanisms.  
As a building block for our general lower bound,
we first have the following for {\em single-item single-player} auctions. 
Note that although the main focus of this paper is about non-adaptive queries,
here we prove a stronger result by showing that even with adaptive queries, 
our bounds for {\em single-item single-player} auctions are still tight.
%
%which allows the seller to make queries depending on the response of former queries.
%However, we can show that for the single-player single-items auctions with arbitrary distributions bounded in $[1,H]$,
%we show that $\log_{1+\epsilon} H$ adaptive queries are still necessary.
Formally, we have the following lemma.

\begin{lemma}\label{lem:adaptive:single}
For any constant $c > 1$, there exists a constant $C$ such that,
for any large enough $H$,
any Bayesian mechanism $\cM$
making less than $C \log_c H$
adaptive value and quantile queries to the oracle,
there exists a single-player single-item Bayesian auction instance $\cI = (N, M, \cD)$
where the values are bounded in $[1,H]$, such that
$Rev(\cM(\cI))<
\frac{OPT(\cI)}{c}$.
\end{lemma}

\begin{proof}
Here we consider the equal revenue curve,
i.e., $F(v) = 1-\frac{1}{v}, v\in [1,H)$ and $F(H) = 1$.
For this distribution, posting any price from $[1,H]$ has exactly the same revenue 1.
For any constant $H$, let $k \triangleq \lfloor \frac{1}{4} \log_{(4c)^{4c+2}} H\rfloor$.
We divide the value interval $[1, H]$
into $k+1$ sub-intervals as follows:
from right to left,
$u_{k} = H$, and $u_s = \frac{u_{s+1}}{(4c)^{4c+2}}$ for each $s\in \{k-1, \dots, 0\}$.
For each value interval $(u_s, u_{s+1})$, there exists a corresponding quantile interval $(q_s, q_{s+1}) \triangleq (\frac{1}{u_s}, \frac{1}{u_{s+1}})$.

%We divide the quantile interval $[0, 1]$
%and the value interval $[1, H]$
%into $k+1$ sub-intervals each, with the right-end points defined as follows:
% from right to left,
%$q_{k} = 1$, $q_s = \frac{q_{s+1}}{(4c)^{4c+2}}$ for each $s\in \{k-1, \dots, 0\}$,
%$u_{k} = H$, and $u_s = \frac{u_{s+1}}{(4c)^{4c+2}}$ for each $s\in \{k-1, \dots, 0\}$.

For each value $s\in \{k-1, \dots, 0\}$, consider pair of interval $(u_s, u_{s+1})$ and $(q_s, q_{s+1})$,
we construct $[\lceil 4c \rceil]$ distributions for it.
More precisely, the distribution $\cD^s_z$ for each $z\in[\lceil 4c \rceil]$
is defined as follows.
For $v_z < u_s$ or $u_{s+1} < v_z < H$,
the density of $v_{z}$ is $\frac{1}{v^2_z}$.
The probability of $v_z = H$ is $\frac{1}{H}$.
The probability of $(4c)^z u_s$ is $q_{s+1}-q_s$.

By construction, given parameter $s$,
all $[\lceil 4c \rceil]$ distributions coincide outside the value and quantile range $(u_s, u_{s+1})$ and $(q_s, q_{s+1})$.
Therefore, only a query inside the range can distinguish those distributions from each other.
Moreover, for any two parameter $s, s' \in \{k-1, \dots, 0\}$, the value and quantile queries will get the exactly same response for queries outside their own value and quantile intervals.
Thus, for any adaptive value and quantile queries, a single query can distinguish at most one set of $[\lceil 4c \rceil]$ distributions.
Letting $c' \triangleq 1-\frac{1}{2c}$ and
$C \triangleq  \frac{1-c'}{8(4c+2)\log_c (4c)} = \frac{1}{16c(4c+2)\log_c (4c)}$,
we have $C \log_c H < k (1-c')$.
Accordingly, for any Bayesian mechanism $\cM$ that makes less than
$C \log_c H$ adaptive value and quantile queries, % in expectation,
there exists a value $s$ such that,
with probability at least $c'$,
$\cM$ cannot distinguish $\cD^s_z$'s from each other.

We now analyze the optimal BIC revenue for those instances.
For any $\cI_z = (N, M, \cD_z)$,
Myerson's mechanism is optimal:
it sets a (randomized) threshold for the unique player,
if the player bids at least the threshold then he gets the item and pays the threshold payment,
otherwise the item is unsold.
It is not hard to verify that
$OPT(\cI_z) = (4c)^z u_s q_{s+1}$ for each $\cI_z$.

Next, we analyze the revenue of $\cM$.
Since $\cM$ is DSIC, the allocation rule must be monotone in the player's bid,
and he will pay the threshold payment set by $\cM$, denoted by $P$.
Here $P$ may also be randomized.
Note that for all instances,
setting $P<4cu_s$
is strictly worse than setting $P = 4cu_s$,
and setting $P>(4c)^{\lceil 4c \rceil} u_s$
is strictly worse than setting $P = (4c)^{\lceil 4c \rceil} u_s< u_{s+1}$.
Also, for any instance $\cI_z$ and
%all instances, for
any $z'\in \{1, \dots, \lceil 4c \rceil - 1\}$,
setting $P\in ((4c)^{z'} u_s, (4c)^{z'+1} u_s)$
is strictly worse than setting $P = (4c)^{z'+1} u_s$.
Thus, when mechanism $\cM$ cannot distinguish the $\cI_z$'s,
it must use the same $P$ for all $\cI_z$'s, and
the best it can do is to set $P = (4c)^z u_s$ with some probability $\rho_z$
for each $z\in [\lceil 4c \rceil]$.
Because $\sum_{z \in [\lceil 4c \rceil]} \rho_z = 1$,
there exists $z^*$ such that $\rho_{z^*} \leq \frac{1}{4c}$.
Thus we have
\begin{eqnarray*}
Rev(\cM(\cI_{z^*})) &\leq& \frac{1}{4c} \cdot (4c)^{z^*} \cdot u_s \cdot q_{s+1}
+ (1-\frac{1}{4c}) (4c)^{z^*-1} \cdot u_s \cdot q_{s+1} \\
&<& \frac{1}{2c} \cdot (4c)^{z^*} \cdot u_s \cdot q_{s+1}
= \frac{1}{2c} OPT(\cI_{z^*}),
\end{eqnarray*}
where the first inequality is because for any threshold other than $(4c)^{z^*}u_s$,
the resulting expected revenue is no larger than that with the threshold being
$(4c)^{z^*-1}u_s$.
That is, when $\cM$ cannot distinguish the $\cI_z$'s, it cannot be a $2c$-approximation for $\cI_{z^*}$.

As the revenue of $\cM$ under $\cI_{z^*}$
is at most $OPT(\cI_{z^*})$ when it is able to distinguish $\cI_{z^*}$ from all the other instances,
we have
$$Rev(\cM(\cI_{z^*}))
\leq (1-\frac{1}{2c})\frac{1}{2c}OPT(\cI_{z^*}) + \frac{1}{2c}OPT(\cI_{z^*})
< \frac{1}{c}OPT(\cI_{z^*}).$$
Thus $\cM$ is not a $c$-approximation for $\cI_{z^*}$, and Lemma \ref{lem:adaptive:single} holds.
\end{proof}

We extend this lemma to arbitrary
multi-player multi-item Bayesian auctions with {\em succinct sub-additive} valuations,
as follows. 
To make our exposition clearer, we first introduce some notations.
A very broad class of Bayesian auctions, {\em (monotone) sub-additive} auctions,
is such that each player~$i$ has a valuation function
$v_i: 2^{[m]}\rightarrow \bR$, which
satisfies $v_i(S) + v_i(T) \geq v_i(S \cup T) \geq v_i(S) \geq 0$ for any subsets of items $S$ and~$T$.
As such a valuation function in general needs $2^m$ values to describe,
following the conventions in Bayesian auction design
\cite{rubinstein2015simple, chawla2016mechanism, cai2017simple},
we will consider {\em succinct sub-additive} auctions, where
only the item-values, that is, the $v_{ij}$'s, are independently drawn from
the underlying distribution $\cD = \times_{i\in[n], j\in [m]} \cD_{ij}$.
Given $(v_{ij})_{j\in [m]}$, it is publicly known how to compute player $i$'s value for any subset of items.
That is, the valuation function $v_i$ now takes a vector of item-values $(v_{ij})_{j\in [m]}$ and a subset $S\subseteq [m]$ as inputs,
such that for any vector $(v_{ij})_{j\in [m]}$,
the resulting function $v_i((v_{ij})_{j\in [m]}, \cdot)$ is sub-additive and
$v_i((v_{ij})_{j\in [m]}, \{j\}) = v_{ij}$ for each item $j$.
Note that such auctions include single-item, unit-demand
and
additive auctions as special cases.

\begin{theorem}\label{thm:adaptive:multi}
For any constant $c > 1$,
there exists a constant $C$ such that, for any $n\geq 1, m\geq 1$,
for any large enough $H$,
any monotone sub-additive valuation function profile $v = (v_i)_{i\in [n]}$,
and any Bayesian mechanism $\cM$ making less than $C nm \log_{m \ln H} H$
adaptive value and quantile queries to the oracle,
there exists a multi-item Bayesian auction instance $\cI = (N, M, \cD)$,
where $|N|=n, |M|=m$ and the values are bounded in $[1,H]$, such that
$Rev(\cM(\cI)) < \frac{OPT(\cI)}{c}$.
\end{theorem}
\begin{proof}

Similar to the proof of Lemma \ref{lem:adaptive:single},
for any $H$, let $k \triangleq \lfloor \frac{1}{4} \log_{(4cx)^{4c+2}} H\rfloor$,
where $x$ is a parameter to be determined later.
Let $C' \triangleq \frac{1}{24c(4c+2)\log_c (4cx)}$
and $H$ be large enough so that $k\geq 1$.
We divide the value interval $[1, H]$
into $k+1$ sub-intervals as follows:
from right to left,
$u_{k} = H$, and $u_s = \frac{u_{s+1}}{(4cx)^{4c+2}}$ for each $s\in \{k-1, \dots, 0\}$.
For each value interval $(u_s, u_{s+1})$, there exists a corresponding quantile interval $(q_s, q_{s+1}) \triangleq (\frac{1}{u_s}, \frac{1}{u_{s+1}})$.

It is easy to see that $C'nm\log_c H < \frac{nmk}{3c}$. 
Thus, for any Bayesian mechanism $\cM$
making less than $C'nm\log_c H$ adaptive value and quantile queries,
with probability at least~$1-\frac{1}{3c}$,
there exists a player-item pair $(i^{*},j^{*})$,
a value interval $(u_s, u_{s+1})$ with corresponding
quantile interval $(q_s, q_{s+1})$ such that
$\cM$ does not query these two intervals for $\cD_{i^* j^*}$
and does not distinguishes $\cD_{i^*j^*;s}^z$'s from each other.

%We construct $mnk$ sets of distributions.
Therefore, for each $i\in [n], j\in [m], s\in [k]$, 
we construct $\lceil 4c \rceil$ Bayesian instances $\{\cI^z = (N, M, \cD^z)\}_{z\in[\lceil 4c \rceil]}$,
where each $\cD^z_{ij}$ is equal revenue distribution if $i \neq i^{*}$ or $j \neq j^{*}$.
For any $z\in [\lceil 4c \rceil]$,
we construct the distribution $\cD_{ij;s}^z$ such that
for $v_z < u_s$ or $u_{s+1} < v_z < H$,
the density of $v_{z}$ is $\frac{1}{v^2_z}$.
The probability of $v_z = H$ is $\frac{1}{H}$.
The probability of $(4cx)^z u_s$ is $q_{s+1}-q_s$.
All distributions other than $\cD_{ij}$ are equal revenue distribution. 
For any queries outside the interval 
$(u_s, u_{s+1})$ and $(q_s, q_{s+1})$, it cannot distinguish those instances. 
Thus, with less than $Cnm\log_c H$ adaptive value and quantile queries,
with probability at least~$1-\frac{1}{3c}$, 
the mechanism cannot distinguish those instances from each other.

We now analyze the optimal BIC revenue for those instances.
For any $\cI_z$,
Myerson's mechanism is optimal:
it sets a (randomized) threshold for the unique player,
if the player bids at least the threshold then he gets the item and pays the threshold payment,
otherwise the item is unsold.
Letting $\delta \triangleq \frac{1}{H}$, it is not hard to verify that
$OPT(\cI_z) = (4cx)^z u_s q_s$ for each $\cI_z$.

Given any succinct sub-additive valuation function profile $v = (v_i)_{i\in [n]}$ where each
$v_i$ takes a vector of
item-values $(v_{ij})_{j\in [m]}$
as part of its input,
we would like to compare the optimal revenue for the sub-additive instances
defined by the $\cI^z$'s with the corresponding expected revenue of~$\cM$.
By construction, the $\cD^z$'s
differ only at the $\cD^z_{i^*j^*}$'s, within the value interval
$(u_s, u_{s+1})$ and the
quantile interval $(q_s, q_{s+1})$.
%with quantile function $q_i(v)$ and $q'_i(v)$ defined in Equation \ref{eq:d1} and \ref{eq:d2}.
%Following the proof of Lemma~\ref{lem:adaptive:single},
Accordingly, with probability at least $1-\frac{1}{3c}$,
mechanism $\cM$ cannot distinguish the $\cI^z$'s from each other.
Eventually, we will analyze the revenue of $\cM$ conditional on this event happening.

For now, to compare the optimal revenue and that of $\cM$, let us first introduce some notations.
For any item-value profile
$\hat{v} = (\hat{v}_{ij})_{i\in [n], j\in [m]}$,
when the players bid $\hat{v}$,
we denote by $x_i(\hat{v})$
the (randomized) allocation
of $\cM$ to a player $i$.
It is defined by the probabilities $\sigma_{iS}(\hat{v})$ for
all the subsets $S\subseteq [m]$:
each $\sigma_{iS}(\hat{v})$
is the probability that player $i$ receives $S$ under bid $\hat{v}$.
Accordingly, the expected value of player $i$ for allocation $x_{i}(\hat{v})$
%denoted by $v_i(\hat{v}, x_i)$
is
$v_i((\hat{v}_{ij})_{j\in [m]}, x_{i}(\hat{v})) =
\sum_{S}v_i((\hat{v}_{ij})_{j\in [m]}, S)\cdot \sigma_{iS}(\hat{v})$.
Moreover, for each item $j$,
let $x_{ij}(\hat{v})$ be the probability that player $i$ receives item $j$ according to $x_i(\hat{v})$:
that is,
$x_{ij}(\hat{v}) = \sum_{S: j\in S}\sigma_{iS}(\hat{v})$.

We upper-bound the revenue of $\cM$ in three steps.
To begin with, we reduce the multi-player sub-additive instances to {\em single-player} sub-additive instances,
and construct a DSIC Bayesian mechanism $\cM^*$
%with oracle accesses to the distributions,
%which
that only sells the items to player~$i^*$.
Given
%any truthful Bayesian mechanism $\cM$ and
any instance $\cI^z$,
mechanism $\cM^*$ runs on the single-player sub-additive instance $\cI^z_{i^*} = (\{i^*\}, M, \cD^z_{i^*})$.
It first simulates the item values of players in $N \setminus \{i^*\}$, which are all 1's, and
then runs $\cM$.
Mechanism $\cM^*$ answers the oracle queries of $\cM$ truthfully.
The allocation and the payment for player $i^*$ under $\cM^*$
is the same as those under $\cM$.
For any player $i\neq i^*$, mechanism $\cM^*$ assigns nothing to him and charges him  $0$,
because $i$ is an imaginary player to $\cM^*$.
It is easy to see that mechanism $\cM^*$ is DSIC.
Moreover,
\begin{equation}\label{eq:cm1}
Rev(\cM^*(\cI^z_{i^*})) \geq Rev(\cM(\cI^z))
- \bE_{\hat{v} \sim \cD^z} \sum_{i\neq i^*}
v_i((\hat{v}_{ij})_{j\in [m]}, x_{i}(\hat{v})),
\end{equation}
because the revenue generated by $\cM$ from players in $N\setminus\{i^*\}$
is at most their total value for the allocation.

Next, we reduce the single-player sub-additive instances to single-player {\em additive} instances,
and
construct a
%truthful
DSIC Bayesian mechanism
$\cM^+$
that runs on the single-player
additive instances $\cI^{+z}_{i^*} = (\{i^*\}, M, \cD^z_{i^*})$,  with $z\in [\lceil 4c \rceil]$.
Note that each $\cI^{+z}_{i^*}$ has the same item-value distributions as $\cI^z_{i^*}$, but
player $i^*$'s value for any subset of items is additive.

For each single-player sub-additive instance defined by $\cI^z_{i^*}$
and the valuation function profile $v$,
by the taxation principle~\cite{guesnerie1981taxation},
%after making its oracle queries to the distributions,
mechanism $\cM^*$ is equivalent to providing
a menu of options to player $i^*$
and then
letting $i^*$ choose a menu entry maximizing his expected utility according to his true valuation. %Denote the menu by $(\phi_{\hat{v}}, p_{\hat{v}})$.
Given any instance $\cI^{+z}_{i^*}$, mechanism $\cM^+$
%makes the same oracle queries as $\cM^*$ and
provides the same menu as mechanism $\cM^*$ under $\cI^{z}_{i^*}$ and $v$,
 except that the payment in each entry is discounted by a multiplicative $1-\hat{\epsilon}$.
Here $\hat{\epsilon}$ is a sufficiently small constant in $(0, 1)$ to be determined later in the analysis.
The truthfulness of $\cM^+$ is immediate, because it lets~$i^*$ choose a menu entry maximizing his expected utility under
his true additive values.
%fraction of the payment of mechanism $\cM$.
Let
$$\bar{\delta} \triangleq \bE_{\hat{v}_{i^*} \sim \cD^z_{i^*}}
\max_{S \subseteq [m]}
(\sum_{j \in S} \hat{v}_{i^* j} - v_{i^*}((\hat{v}_{i^*j})_{j\in [m]}, S)),$$
the expected maximum difference between the additive values and the succinct sub-additive values.
Following Lemma 3.4 in \cite{rubinstein2015simple}, which compares the revenue in the sub-additive instance with that in the
corresponding additive instance, we have
\begin{equation}\label{eq:cm2}
Rev(\cM^+(\cI^{+z}_{i^*})) \geq (1-\hat{\epsilon})(Rev(\cM^*(\cI^z_{i^*})) - \bar{\delta} / \hat{\epsilon}).
\end{equation}

Finally, we reduce the single-player additive instances to single-player {\em single-item} instances, and
construct a DSIC Bayesian mechanism $\cM'$ that only sells item $j^{*}$ to player $i^*$.
Mechanism~$\cM'$
runs on the single-player single-item instances
$\cI^z_{i^*j^*} = (\{i^*\}, \{j^*\}, \cD^z_{i^*j^*})$, with $z\in [\lceil 4c \rceil]$.
%Then we set the allocation $x'_{i^*}(\hat{v})$ and payment $p'_{i^*}(\hat{v})$ for mechanism $\cM'$ as follows.
Given any~$\cI^z_{i^*j^*}$, it first lets player $i^*$ report $\hat{v}_{i^*j^*}$.
Then it simulates the $\hat{v}_{i^* j}$'s from $\cD^z_{i^*j}$ for $j\neq j^*$, which are all~1's,
and runs $\cM^+$ on the
augmented additive instance $\cI^{+z}_{i^*}$ to obtain allocation $x^+_{i^*}(\hat{v}_{i^*})$ and payment
$p^+_{i^*}(\hat{v}_{i^*})$.
For each item $j$, let $x^+_{i^*j}(\hat{v}_{i^*})$ be the probability that
player $i^*$ receives item $j$ in the allocation.
%, that is,
%$x^+_{i^*j}(\hat{v}_{i^*}) = \sum_{S: j\in S}
Mechanism $\cM'$ sets its outcome to be the following:
\begin{itemize}
\item $x'_{i^*j^{*}}(\hat{v}_{i^*j^*}) = x^+_{i^*j^{*}}(\hat{v}_{i^*})$; and
\item $p'_{i^*}(\hat{v}_{i^*j^*}) = p^+_{i^*}(\hat{v}_{i^*}) - \sum_{j \in [m] \setminus \{j^*\}}\hat{v}_{i^*j}x^+_{i^*j}(\hat{v}_{i^*})$.
\end{itemize}

\noindent
Note that $p'_{i^*}(\hat{v}_{i^*j^*})$ may be negative.
By Lemma 21 of \cite{hart2012approximate}, mechanism $\cM'$ is DSIC and
\begin{equation}\label{eq:cm3}
Rev(\cM'(\cI^z_{i^*j^*})) \geq Rev(\cM^+(\cI^{+z}_{i^*}))
- \sum_{j\neq j^*} \bE_{\hat{v}_{i^*j} \sim \cD^z_{i^*j}} \hat{v}_{i^*j}.
\end{equation}

Now we combine the above three reduction steps together
and consider the event when mechanism $\cM$ cannot distinguish the $\cI^z$'s from each other.
When this happens,
mechanism $\cM$ produces the same outcome for all the instances.
Accordingly, although mechanism $\cM^*$ is given the distributions $\cD^z_{i^*}$,
by simulating $\cM$, it still produces the same outcome for all the $\cI^z_{i^*}$'s,
thus the same menu for all of them.
So mechanism $\cM^+$ also produces the same menu for all the $\cI^{+z}_{i^*}$'s:
that is, the menu produced by $\cM^*$ with the payments discounted by $1-\hat{\epsilon}$.
As a result, although  mechanism $\cM'$ is given the $\cD^z_{i^*j^*}$'s,
it still cannot ``distinguish'' the $\cI^z_{i^*j^*}$'s from each other and produces the
same outcome for all of them.
Using the similar argument in the proof of Lemma~\ref{lem:adaptive:single},
in this case
%mechanism $\cM$ cannot distinguish the $\cI^z$'s from each other,
there exists $z^* \in [\lceil 4c \rceil]$ such that
$$Rev(\cM'(\cI^{z^*}_{i^*j^*})) < \frac{1}{2c} OPT(\cI^{z^*}_{i^*j^*}).$$
Combining this inequality with Equations \ref{eq:cm1}, \ref{eq:cm2} and \ref{eq:cm3}, we have
\begin{eqnarray}\label{eq:revz*}
&& Rev(\cM(\cI^{z^*})) \leq Rev(\cM^*(\cI^{z^*}_{i^*}))
+ \bE_{\hat{v} \sim \cD^{z^*}} \sum_{i\neq i^*}
v_i((\hat{v}_{ij})_{j\in [m]}, x_{i}(\hat{v})) \nonumber\\
&\leq& \frac{Rev(\cM^+(\cI^{+z^*}_{i^*}))}{1-\hat{\epsilon}} + \bar{\delta} / \hat{\epsilon}
+ \bE_{\hat{v} \sim \cD^{z^*}} \sum_{i\neq i^*}
v_i((\hat{v}_{ij})_{j\in [m]}, x_{i}(\hat{v})) \nonumber\\
&\leq& \frac{1}{1-\hat{\epsilon}}\left(Rev(\cM'(\cI^{z^*}_{i^*j^*}))
+ \sum_{j\neq j^*} \bE_{\hat{v}_{i^*j} \sim \cD^{z^*}_{i^*j}} \hat{v}_{i^*j}\right)
+ \bar{\delta} / \hat{\epsilon}
+ \bE_{\hat{v} \sim \cD^{z^*}} \sum_{i\neq i^*}
v_i((\hat{v}_{ij})_{j\in [m]}, x_{i}(\hat{v})) \nonumber\\
&<& \frac{1}{1-\hat{\epsilon}}\left(\frac{1}{2c}OPT(\cI^{z^*}_{i^*j^*})
+ \sum_{j\neq j^*} \bE_{\hat{v}_{i^*j} \sim \cD^{z^*}_{i^*j}} \hat{v}_{i^*j}\right)
+ \bar{\delta} / \hat{\epsilon}
+ \bE_{\hat{v} \sim \cD^{z^*}} \sum_{i\neq i^*}
v_i((\hat{v}_{ij})_{j\in [m]}, x_{i}(\hat{v})). 
\end{eqnarray}

Note that $OPT(\cI^{z^*}_{i^*j^*}) \leq OPT(\cI^{z^*})$, since selling a single item to a single player is a feasible outcome.
Moreover, since $\cD^{z^*}_{ij}$ is constantly $1$ when $i\neq i^*$ or $j\neq j^*$,
and since the
valuation function profile $v$ is succinct sub-additive, we have
\begin{eqnarray*}
& & \sum_{j\neq j^*} \bE_{\hat{v}_{i^*j} \sim \cD^{z^*}_{i^*j}} \hat{v}_{i^*j} = (m - 1)\ln H,\\
& & \bar{\delta} / \hat{\epsilon} =
\frac{1}{\hat{\epsilon}}
\bE_{\hat{v}_{i^*} \sim \cD^{z^*}_{i^*}}
\max_{S \subseteq [m]}
(\sum_{j \in S} \hat{v}_{i^* j} - v_{i^*}((\hat{v}_{i^*j})_{j\in [m]}, S))
\leq \frac{m \ln H}{\hat{\epsilon}},\\
& & \bE_{\hat{v} \sim \cD^{z^*}} \sum_{i\neq i^*}
v_i((\hat{v}_{ij})_{j\in [m]}, x_{i}(\hat{v}))
\leq m \ln H.
\end{eqnarray*}
Here the second equation is because $\sum_{j \in S} \hat{v}_{i^* j} - v_{i^*}((\hat{v}_{i^*j})_{j\in [m]}, S) \leq \sum_{j \in [m]} \hat{v}_{i^* j}$ for any $\hat{v}_{i^*}$ and $S$.
The third equation is because $\sum_{i\neq i^*}
v_i((\hat{v}_{ij})_{j\in [m]}, x_{i}(\hat{v})) \leq 
\sum_j \sum_{i\neq i^*} x_{ij}(\hat{v}) \hat{v}_{ij}$ for any $\hat{v}$: indeed, each item can be sold to at most one player, generating expected value $\ln H$.

%Consider $\lceil 4c \rceil$ Bayesian instances $\{\cI^z = (N, M, \cD^z)\}_{z\in[\lceil 4c \rceil]}$,
%where each $\cD^z_{ij}$ is equal revenue distribution if $i \neq i^{*}$ or $j \neq j^{*}$.
%Let $\cD^z_{i^{*}j^{*};s}$ be the distribution defined above.
%Let $\cI^{z}_{i^*j^*}$ be the corresponding single-player single-item instance.
%Following the proof of Theorem~\ref{thm:lower:multi}, 
%there exists $z^* \in [\lceil 4c \rceil]$ such that
%$Rev(\cM'(\cI^{z^*}_{i^*j^*})) < \frac{1}{2c} OPT(\cI^{z^*}_{i^*j^*})$
%and
%Moreover, we have 
%\begin{eqnarray}\label{eq:revz*2}
%&& Rev(\cM(\cI^{z^*})) \nonumber\\
%&<& \frac{1}{1-\hat{\epsilon}}(\frac{1}{2c}OPT(\cI^{z^*}_{i^*j^*})
%+ \sum_{j\neq j^*} \bE_{\hat{v}_{i^*j} \sim \cD^{z^*}_{i^*j}} \hat{v}_{i^*j})
%+ \bar{\delta} / \hat{\epsilon}
%+ \bE_{\hat{v} \sim \cD^{z^*}} \sum_{i\neq i^*}
%v_i((\hat{v}_{ij})_{j\in [m]}, x_{i}(\hat{v})).
%\end{eqnarray}
%
%Since the
%valuation function is monotone sub-additive, we have
%\begin{eqnarray*}
%& & \sum_{j\neq j^*} \bE_{\hat{v}_{i^*j} \sim \cD^{z^*}_{i^*j}} \hat{v}_{i^*j} = (m - 1)\ln H,\\
%& & \bar{\delta} / \hat{\epsilon} =
%\frac{1}{\hat{\epsilon}}
%\bE_{\hat{v}_{i^*} \sim \cD^{z^*}_{i^*}}
%\max_{S \subseteq [m]}
%(\sum_{j \in S} \hat{v}_{i^* j} - v_{i^*}((\hat{v}_{i^*j})_{j\in [m]}, S))
%\leq \frac{(m - 1)\ln H}{\hat{\epsilon}},\\
%& & \bE_{\hat{v} \sim \cD^{z^*}} \sum_{i\neq i^*}
%v_i((\hat{v}_{ij})_{j\in [m]}, x_{i}(\hat{v}))
%\leq m \ln H.
%\end{eqnarray*}

\noindent
Combining the equations above with Equation \ref{eq:revz*}, we have
\begin{equation}
Rev(\cM(\cI^{z^*})) < \frac{1}{1-\hat{\epsilon}}(\frac{1}{2c}OPT(\cI^{z^*})
+ (m - 1)\ln H)
+ \frac{m \ln H}{\hat{\epsilon}}
+ m \ln H. \nonumber
\end{equation}

\noindent
Setting $\hat{\epsilon} = \frac{1}{4}$, we have
\begin{equation}\label{equ:37}
Rev(\cM(\cI^{z^*})) < \frac{2}{3c}OPT(\cI^{z^*}) + \frac{19m \ln H}{3}.
\end{equation}

Now we combine Equation \ref{equ:37} with the probability that
$\cM$ cannot distinguish the $\cI^z$'s.
As $OPT(\cI^{z^*})\geq 4cx$,
when $x > \frac{57 m \cdot c \cdot \ln H}{8}$, we have
$OPT(\cI^{z^*}) > \frac{57}{2}m \cdot c^2 \cdot \ln H$
and
$$Rev(\cM(\cI^{z^*}))\leq (1-\frac{1}{3c})(\frac{2}{3c}OPT(\cI^{z^*}) + \frac{19m \ln H}{3})
+ \frac{1}{3c}OPT(\cI^{z^*}) < \frac{1}{c}OPT(\cI^{z^*}).$$
Letting $C = \frac{1}{24c(4c+2)}$, 
which is a constant with respect to $c$, 
we have $Cnm \log_{m\ln H} H \leq C'nm\log_c H$.
Thus finishes the proof of Theorem \ref{thm:adaptive:multi}.
\end{proof}

When there are multiple items to sell, for non-adaptive queries, 
we can make some changes to the constructed distributions to improve the query complexity bound. 
Formally, consider the $\lceil 4c \rceil$ Bayesian instances $\{\cI^z = (N, M, \cD^z)\}_{z\in[\lceil 4c \rceil]}$ 
that are indistinguishable using only $Cnm\log_c H$ non-adaptive queries. 
Here for $i \neq i^{*}$ or $j \neq j^{*}$, 
$\cD^z_{ij}$ is the distribution that is constantly 1. 
For distribution $\cD^z_{i^*j^*}$, 
we construct them by shifting them in the interval with no value or quantile queries, i.e. $(u_s, u_{s+1})$ and $(q_s, q_{s+1})$,  
based on the equal revenue distribution with cumulative probability function
$\max\{0, 1 - \frac{\sqrt{H}}{v}\}$. 
Note that the optimal revenue for those distributions is at least $\sqrt{H}$. 
Applying the same analysis as in Theorem \ref{thm:adaptive:multi}, we have the following theorem, with proof omitted.

\begin{theorem}\label{thm:lower:multi}
For any constant $c > 1$,
there exists a constant $C$ such that,
for any $n\geq 1, m\geq 1$,
 any large enough $H$,
any succinct sub-additive valuation function profile $v = (v_i)_{i\in [n]}$,
and any DSIC Bayesian mechanism $\cM$
making less than $C nm \log_c H$
non-adaptive value and quantile queries to the oracle,
there exists a multi-item Bayesian auction instance $\cI = (N, M, \cD)$ with valuation profile $v$,
where $|N|=n, |M|=m$ and
the item values are bounded in $[1,H]$, such that
$Rev(\cM(\cI)) < \frac{OPT(\cI)}{c}$.
\end{theorem}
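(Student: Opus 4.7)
The plan is to reduce the multi-item multi-player lower bound to the single-player single-item Lemma~\ref{thm:low:eps} via a pigeonhole argument on the queries. Suppose for contradiction that there is a DSIC mechanism $\cM$ making strictly fewer than $Cnm\log_c H$ non-adaptive queries, yet achieving a $c$-approximation to $OPT(\cI)$ on every multi-item instance $\cI$ with the given valuation profile $v$. Since queries are non-adaptive, each query is issued to a specific marginal oracle $\cD_{ij}$, and the total count is the sum over the $nm$ pairs. By pigeonhole, there exists some pair $(i^*, j^*)$ to which $\cM$ directs fewer than $C\log_c H$ queries.

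I would then build the adversarial instance by degenerating every marginal other than $\cD_{i^*j^*}$: set each $\cD_{ij}$ with $(i,j)\neq (i^*,j^*)$ to the point mass at $0$, and leave $\cD_{i^*j^*}$ free. Any monotone sub-additive valuation satisfies $v_i(S)\leq \sum_{j\in S} v_i(\{j\})$, so players other than $i^*$ have zero value for every bundle almost surely, and $v_{i^*}(S) = v_{i^*j^*}\cdot \mathbf{1}[j^*\in S]$. Under individual rationality, no revenue can be collected from the zero-valued players or from items other than $j^*$, so $OPT(\cI)$ equals the single-player single-item Myerson revenue of $\cD_{i^*j^*}$.

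The final step is to extract from $\cM$ a single-player single-item DSIC mechanism $\cM'$ that touches $\cD_{i^*j^*}$ at most as often as $\cM$ does. Concretely, $\cM'$ inherits $\cM$'s queries to $\cD_{i^*j^*}$ (discarding the rest), and on a single bid $b$ for player $i^*$ on item $j^*$ it simulates $\cM$ with every other report fixed to $0$, returning $i^*$'s allocation for $j^*$ and payment. Truthfulness of $\cM'$ follows because fixing the other reports in a DSIC mechanism preserves $i^*$'s dominant strategy, and its expected revenue equals $Rev(\cM(\cI))$ since the other players pay $0$ by IR. But $\cM'$ uses fewer than $C\log_c H$ queries, so Lemma~\ref{thm:low:eps} produces a distribution $\cD_{i^*j^*}$ on which $Rev(\cM')< OPT/c$, contradicting the assumed $c$-approximation of $\cM$.

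The main obstacle is making the embedding step precise under whichever formal notion of ``succinct sub-additive'' Appendix~\ref{apx:low} adopts: one must verify that (i) the valuation $v_{i^*}$, evaluated on bundles when all non-$j^*$ singleton values are $0$, really collapses to a single-item valuation (monotonicity together with sub-additivity suffices), (ii) the derived $\cM'$ is genuinely DSIC rather than merely BIC, and (iii) no spurious revenue is siphoned from the zero-valued marginals through randomized tie-breaking, which is ruled out by IR. Once this embedding lemma is in hand, the pigeonhole bound and Lemma~\ref{thm:low:eps} combine immediately to yield the theorem.
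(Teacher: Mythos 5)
Your overall architecture (pigeonhole over the $nm$ marginal oracles, then reduce to the single-player single-item Lemma~\ref{thm:low:eps}) is the same skeleton the paper uses, but your embedding step has a genuine gap: you degenerate every marginal other than $\cD_{i^*j^*}$ to a point mass at $0$, whereas the theorem requires the adversarial instance to have all item values in $[1,H]$. This is not a cosmetic normalization — the lower bound is meant to match upper bounds proved for the class of instances with values in $[1,H]$, and Lemma~\ref{thm:low:eps} itself is stated for that class. Once the other marginals are forced to be at least $1$ (the paper sets them to the point mass at $1$), your collapse $v_{i^*}(S)=v_{i^*j^*}\cdot\mathbf{1}[j^*\in S]$ fails: a succinct sub-additive valuation evaluated on a bundle whose non-$j^*$ singleton values are all $1$ can sit anywhere between the max and the sum, and the mechanism can legitimately extract up to $\Theta(m)$ additional revenue from the other items and the other players. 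Your IR argument (``no revenue can be siphoned from zero-valued marginals'') has nothing to grip onto when those values are $1$ instead of $0$.

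This is exactly the part of the paper's proof that carries the real work. After the pigeonhole step, the paper performs a three-stage reduction — multi-player sub-additive to single-player sub-additive (losing the other players' total value, at most $m$), single-player sub-additive to single-player additive via the taxation principle and Lemma~3.4 of \cite{rubinstein2015simple} (losing $\bar{\delta}/\hat{\epsilon}\leq (m-1)/\hat{\epsilon}$), and single-player additive to single-player single-item via Lemma~21 of \cite{hart2012approximate} (losing $\sum_{j\neq j^*}\bE[\hat{v}_{i^*j}]=m-1$). Each stage contributes an additive $O(m)$ loss, and these are absorbed only because the construction guarantees $OPT(\cI^{z^*})\geq u_0q_0\geq\sqrt{H}$, which can be made to dominate $m$ by taking $H$ large enough. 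Your proposal, as written, proves a correct statement for a different (and easier) instance class allowing zero values; to prove the theorem as stated you need to supply the machinery that controls the revenue leakage through the unit-valued marginals.
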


Succinct sub-additive valuations is a very broad class and contains single-item,
unit-demand, and additive auctions as special cases.
Thus Theorem \ref{thm:adaptive:multi} and \ref{thm:lower:multi} automatically apply to those cases.
We also note that it is shown in \cite{yao2017dominant}
that the optimal BIC revenue exceeds the optimal DSIC revenue by a constant factor even for two i.i.d. additive bidders and two identical items.
So even with infinite samples, there exist constants $c > 1$ such that no $c$-approximation to $OPT$ is possible.
However, Theorem \ref{thm:lower:multi} is stronger:
for {\em every} constant $c > 1$,
one needs at least the given number of queries to get a $c$-approximation.

\section{The Query Complexity for Bounded Distributions}
\label{sec:bounded}

In this section, we consider settings where all distributions are bounded within $[1, H]$,
and we construct efficient query mechanisms whose query complexity matches our lower-bounds.
We show that it is sufficient to use only value queries, and we define in Section \ref{subsec:vq}
%Algorithm \ref{QueryProcedure:value}
a universal query scheme~$\cA_{V}$, which will
be used as a black-box in our mechanisms.
The seller uses algorithm $\cA_V$ to learn a distribution $\cD' = \times_{i\in N,j\in M} \cD'_{ij}$
that approximates the prior distribution $\cD$ and is stochastically dominated by $\cD$.
The seller then runs existing DSIC Bayesian mechanisms using $\cD'$, while the players' values are drawn from $\cD$.
In this sense, all our mechanisms are simple, but they are not given a true Bayesian instance as input.

%However, $\cD'$ only has probability masses
%on logarithmically many values.

%For single-player single-item settings the revenue approximation ratio is straightforward,
%since the optimal mechanism is simply post price auction.
The multi-player single-item setting is already non-trivial, but still easy,
 since we have a good understanding of the optimal mechanism, which is Myerson's auction~\cite{myerson1981optimal}.
In particular, in the analysis it suffices to apply the  {\em revenue monotonicity} theorem of \cite{devanur2016sample}.
The situation for unit-demand auctions and additive auctions is much more subtle.
The optimal auction could be very complicated
and may involve lotteries and bundling,
and revenue monotonicity may not hold \cite{hart2012maximal}.
Even (disregarding complexity issues and) assuming
we can design an optimal Bayesian mechanism for $\cD'$,
it is unclear how much revenue it guarantees when the players' values come from the true distribution $\cD$.
To overcome this difficulty, we rely on
 recent developments on {\em simple} Bayesian mechanisms with approximately optimal revenue.
%For these mechanisms, we are able to carry out the desired analysis.

The mechanism for unit-demand auctions is sequential post-price~\cite{kleinberg2012matroid} and the analysis is relatively easy.
For additive auctions, the Bayesian mechanism either runs Myerson's auction separately
for each item or runs the VCG mechanism with a per-player entry fee~\cite{yao2015n, cai2016duality}.
However, an easy and direct analysis would lose a factor of $m$ in the query complexity.
To achieve a tight upper-bound, we need to really open the box and
analyze the mechanism differently in several crucial places, exploring its behavior under oracle queries.

To sum up, given our query scheme, our mechanisms are black-box reductions to simple Bayesian mechanisms, thus are
simple, natural, and easy to implement in practice,
while the analysis is non-black-box, non-trivial and reveals interesting connections between Bayesian mechanisms and query schemes.

\subsection{The Value-Query Algorithm}
\label{subsec:vq}

The query algorithm~$\cA_{V}$ is defined in Algorithm \ref{QueryProcedure:value}.
Here $D\in \Delta(\bR)$ is the distribution to be queried. The
algorithm takes two parameters, the value bound $H$ and the precision factor $\delta>0$, makes $O(\log_{1+\delta}H)$
value queries to the oracle, and then returns a discrete distribution $D'$.
It is easy to verify that $D'$ is stochastically dominated by $D$.
Moreover, it is worth mentioning that
the mean of $D'$ approximates that of $D$.
Indeed, $mean(D)=\int_{1}^{H}vdF(v)\leq \sum_{l=0}^{k-1}v_{l+1}Pr[v_{l}\leq v <v_{l+1}] +v_{k}D(v_{k})
\leq (1+\delta)\sum_{l=0}^{k}v_{l}D'(v_{l})=(1+\delta)mean(D')$.
Therefore, by directly applying the parametric mechanism in \cite{azar2013optimal} with parameter $mean(D')$ (for single-parameter auctions where the distributions are regular or MHR),
we will get at least a $(1+\delta)$ fraction of their revenue.

\begin{algorithm}[htbp]
  \caption{\hspace{-3pt} The Value-Query Algorithm $\cA_{V}$}
 \label{QueryProcedure:value}
  \begin{algorithmic}[1]
\REQUIRE  The value bound $H$ and the precision factor $\delta$.

\STATE Let $k = \lceil \log_{1+\delta}H\rceil$ and define the {\em value vector} as
$v=(v_{0},v_{1},\dots, v_{k-1}, v_{k})=(1,(1+\delta), (1+\delta)^{2}, \dots, (1+\delta)^{k-1}, H).$\\

\STATE Query the oracle for $D$ with $v$, and receive a non-increasing quantile vector
$q = (q(v_{0}),\cdots,q(v_{k}))=(q_{l})_{l\in \{0, \dots, k\}}$. Note $q_0 = 1$.

\STATE Construct a discrete distribution $D'$ as follows:
$D'(v_{l}) = q_{l} - q_{l+1}$ for any $l\in \{0, \dots, k\}$, where $q_{k+1}\triangleq 0$.

\ENSURE Distribution $D'$.
\end{algorithmic}
\end{algorithm}

\subsection{Single-Item Auctions and Unit-Demand Auctions}

Denoting by $\cM_{MRS}$ Myerson's mechanism for single-item auctions,
Mechanism \ref{alg:single:value_query} defines our {\em efficient value Myerson} mechanism $\cM_{EVM}$.

\begin{algorithm}[htbp]
\floatname{algorithm}{Mechanism}
  \caption{\hspace{-3pt} Efficient Value Myerson Mechanism $\cM_{EVM}$}
 \label{alg:single:value_query}
   \begin{algorithmic}[1]
\STATE Given the value bound $H$ and a constant $\epsilon > 0$,
run the value-query algorithm $\cA_{V}$ with $H$ and $\delta = \epsilon$
for each player $i$'s distribution $\cD_{i}$.
Denote by $\cD'_{i}$ the returned distribution. Let $\cD'=\times_{i\in N}\cD'_{i}$.

\STATE Run $\cM_{MRS}$ with $\cD'$ and the players' reported values, $b=(b_{i})_{i\in N}$,
to get allocation $x = (x_{i})_{i\in N}$ and price profile $p = (p_i)_{i\in N}$ as the outcome.
  \end{algorithmic}
\end{algorithm}

The query complexity of $\cM_{EVM}$ is $O(n\log_{1+\epsilon} H)$,
since each distribution~$\cD_i$ needs $O(\log_{1+\epsilon} H)$ value queries in $\cA_{V}$.
When $\epsilon$ is sufficiently small, $O(n\log_{1+\epsilon} H)\approx O(n\epsilon^{-1}\log H)$.
Also, $\cM_{EVM}$ is DSIC since $\cM_{MRS}$ is so.

In this section and throughout the paper, we often analyze ``mismatching'' cases
where a Bayesian mechanism $\cM$ uses
distribution $\cD'$ while the actual Bayesian instance is $\cI = (N, M, \cD)$ (i.e.,
the players' true values are drawn from~$\cD$).
We use $Rev(\cM(\cI; \cD'))$ to denote the expected revenue in this case.
By construction, $Rev(\cM_{EVM}(\cI))= Rev(\cM_{MRS}(\cI; \cD'))$.

Because the distribution $\cD'$ constructed in $\cM_{EVM}$ is stochastically dominated by $\cD$,
letting $\cI' = (N, M, \cD')$ be the Bayesian instance under $\cD'$,
by revenue monotonicity \cite{devanur2016sample} we have
$Rev(\cM_{MRS}(\cI;\cD')) \geq Rev(\cM_{MRS}(\cI'))$.
By Lemma~5 of~\cite{devanur2016sample},
$Rev(\cM_{MRS}(\cI')) \geq \frac{OPT(\cI)}{1+\epsilon}$.
Thus we have the following simple result.

\begin{theorem}
\label{thm:single:value_query}
$\forall \epsilon > 0$, for any single-item instance $\cI = (N,M,\cD)$
with values in $[1,H]$,
mechanism $\cM_{EVM}$ is DSIC, has query complexity
$O(n\log_{1+\epsilon} H)$, and
$Rev(\cM_{EVM}(\cI)) \geq \frac{ OPT(\cI)}{1+\epsilon}$.
\end{theorem}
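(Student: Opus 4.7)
The plan is to decompose the theorem into three separate pieces — incentive compatibility, query complexity, and the revenue bound — the first two of which are essentially bookkeeping, with the revenue bound being handled by two successive invocations of known results. Since $\cM_{EVM}$ builds the auxiliary distribution $\cD'$ purely from the oracle answers (independently of the reports $b$) and then runs $\cM_{MRS}$ with $\cD'$ as its input prior, DSIC follows immediately from DSIC of Myerson's mechanism: fixing $\cD'$, the mechanism is just ordinary Myerson, for which truth-telling is dominant. For the query complexity, each call to $\cA_{V}$ queries the value vector $(v_{0},\dots,v_{k})$ of length $k+1 = \lceil\log_{1+\epsilon}H\rceil + 1$, and one such call is made per player, giving $O(n\log_{1+\epsilon}H) \approx O(n\epsilon^{-1}\log H)$ for small $\epsilon$.

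The revenue bound is where the content lies, and my plan is to prove it in two steps connected by the auxiliary instance $\cI' = (N,M,\cD')$. First I would show that $\cD'$ is stochastically dominated by $\cD$: in $\cA_{V}$, the atom at $v_{l}$ has mass $q_{l}-q_{l+1} = \Pr_{x\sim \cD_i}[v_{l}\le x < v_{l+1}]$, so $\cD'$ is exactly the distribution obtained by rounding each draw from $\cD$ down to the nearest grid point $v_{l}$; in particular $\Pr_{\cD'}[x\ge v_{l}] = q_{l} = \Pr_{\cD}[x\ge v_{l}]$ at every grid point, and $\cD$ stochastically dominates $\cD'$ everywhere. Applying the revenue-monotonicity theorem of Devanur–Huang–Psomas (which guarantees that running Myerson tailored for a smaller distribution on a larger one only increases revenue) then yields
\[
Rev(\cM_{EVM}(\cI)) \;=\; Rev(\cM_{MRS}(\cI;\cD')) \;\ge\; Rev(\cM_{MRS}(\cI')).
\]

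Second, I would argue $Rev(\cM_{MRS}(\cI')) \geq OPT(\cI)/(1+\epsilon)$ by invoking Lemma~5 of Devanur et al., but I should also be prepared to re-derive it from scratch in case an independent argument is preferred. The direct proof is short: because every value in the support of $\cD$ is within a multiplicative $(1+\epsilon)$ factor of its rounded-down image under $\cD'$, any DSIC mechanism for $\cI$ can be simulated on $\cI'$ by scaling all prices down by $(1+\epsilon)$, so $OPT(\cI') \ge OPT(\cI)/(1+\epsilon)$. Since Myerson is optimal on $\cI'$, we get $Rev(\cM_{MRS}(\cI')) = OPT(\cI') \ge OPT(\cI)/(1+\epsilon)$. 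Chaining this with the stochastic-dominance inequality above gives the desired $(1+\epsilon)$-approximation.

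The main obstacle here is really only notational/sanity-checking: making sure the quantile returned by $\cA_{V}$ is the one that turns rounding-down into exact stochastic dominance (the oracle convention from Section~\ref{sec:query:oracle} matters, since quantiles are defined with the $\ge$ convention), and confirming that the revenue-monotonicity statement cited applies in the form required — i.e., for Myerson's mechanism designed with a stochastically smaller prior but executed on the true (stochastically larger) valuations. No harder ingredients are needed; the hard content is deferred to the two cited lemmas.
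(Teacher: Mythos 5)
Your proposal is correct and follows essentially the same route as the paper's proof: DSIC and the query count are immediate from the construction, and the revenue bound is obtained by chaining stochastic dominance of $\cD'$ by $\cD$ with the revenue-monotonicity theorem and Lemma~5 of Devanur et al., exactly as the paper does. Your optional re-derivation of the second step (rounding down loses at most a multiplicative $1+\epsilon$ in the optimal revenue) is the standard argument behind that lemma, so nothing is missing.
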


%\paragraph{Unit-Demand Auctions.}
%Theorem \ref{thm:single:value_query} can be easily generalized to unit-demand auctions.
The construction for unit-demand auctions is similar,
except
the seller
%uses the same algorithm $\cA_{V}$ with the same parameters to reconstruct the distributions
%and
uses as a blackbox the DSIC mechanism of \cite{kleinberg2012matroid}, denoted by $\cM_{UD}$:
%. % for unit-demand auctions.
see Mechanism~\ref{alg:unit:valuequery}.
% for mechanism $\cM_{EVUD}$.

\begin{algorithm}[htbp]
\floatname{algorithm}{Mechanism}
  \caption{\hspace{-3pt}  Mechanism ${\cM}_{EVUD}$ for Unit-Demand Auctions}
  \label{alg:unit:valuequery}
  \begin{algorithmic}[1]
\STATE Given $H$ and $\epsilon > 0$,
run the value-query algorithm $\cA_{V}$ with $H$ and
$\delta = \epsilon$
for each player~$i$'s distribution $\cD_{ij}$ for each item $j$.
Denote by $\cD'_{ij}$ the returned distribution. Let $\cD'_{i}=\times_{j\in M}\cD'_{ij}$ and $\cD'=\times_{i\in N}\cD'_{i}$.

\STATE Run $\cM_{UD}$ with $\cD'$ and the players' reported values, $b=(b_{ij})_{i\in N,j\in M}$,
to get allocation $x = (x_{ij})_{i\in N, j\in M}$ and price profile $p = (p_i)_{i\in N}$ as the outcome.
  \end{algorithmic}
\end{algorithm}

The main difficulty for unit-demand auctions is that we no longer have revenue monotonicity
as in single-item auctions.
%and thus we cannot directly use Theorem 1 of.
Our analysis then comes in a non-blackbox way
and relies on the {\it COPIES setting} \cite{chawla2010multi,kleinberg2012matroid},
which provides an upper-bound for the optimal BIC revenue.
By properly upper-bounding the optimal revenue in the COPIES setting under $\cD'$,
we are able to upper-bound the optimal revenue in unit-demand auctions using the expected revenue of $\cM_{EVUD}$.
More precisely, we have the following theorem.

\begin{theorem}
\label{thm:unit-demand:value_query}
$\forall \epsilon > 0$, for any unit-demand instance $\cI = (N,M,\cD)$
with values in $[1,H]$,
mechanism $\cM_{EVUD}$ is DSIC, has
query complexity $O(mn\log_{1+\epsilon} H)$, and
$Rev(\cM_{EVUD}(\cI)) \geq \frac{OPT(\cI)}{24(1+\epsilon)}$.
\end{theorem}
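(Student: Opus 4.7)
The first two conclusions are direct from the construction: $\cA_V$ issues $O(\log_{1+\epsilon}H)$ value queries per distribution $\cD_{ij}$, so across the $mn$ distributions the total query count is $O(mn\log_{1+\epsilon}H)$, and $\cM_{EVUD}$ inherits DSIC from $\cM_{UD}$ because $\cD'$ is computed from the oracle's answers before the auction begins and is therefore independent of the reported bids.

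For the revenue bound the plan is to route everything through the COPIES relaxation of~\cite{chawla2010multi,kleinberg2012matroid}. Writing $\cI_{COP}$ and $\cI'_{COP}$ for the COPIES instances induced by $\cD$ and $\cD'$, each decomposes into $m$ independent single-item single-parameter auctions. I would establish the chain
\begin{equation*}
OPT(\cI)\;\le\;\alpha\cdot OPT(\cI_{COP})\;\le\;\alpha(1+\epsilon)\cdot OPT(\cI'_{COP})\;\le\;\alpha\beta(1+\epsilon)\cdot Rev(\cM_{EVUD}(\cI)),
\end{equation*}
where the constants satisfy $\alpha\beta=24$. The first inequality is the standard CHMS upper bound of unit-demand optimal revenue by COPIES optimal revenue. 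The middle inequality reduces, via the item-wise decomposition of $\cI_{COP}$, to $m$ single-item single-parameter comparisons, on each of which $\cD'_{\cdot j}$ is precisely the $\cA_V$-output of $\cD_{\cdot j}$; summing Lemma~5 of~\cite{devanur2016sample}, already invoked in the proof of Theorem~\ref{thm:single:value_query}, across the $m$ items then delivers the bound.

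The third inequality is where the analysis becomes non-black-box and is the main obstacle. The Kleinberg-Weinberg guarantee yields $Rev(\cM_{UD}(\cI'))\ge OPT(\cI'_{COP})/\beta$ only when the design prior coincides with the true prior, while $\cM_{EVUD}$ runs $\cM_{UD}$ with prices derived from $\cD'$ but evaluates it on values drawn from $\cD$; in the multi-parameter unit-demand setting, revenue monotonicity with respect to stochastic dominance can fail pointwise, so the mismatch cannot be absorbed by a single-item-style monotonicity step. The key observation I would exploit is a grid-alignment property of $\cA_V$: its output $\cD'_{ij}$ is supported on the grid $\{(1+\epsilon)^{\ell}\}\cup\{H\}$, and by construction $\Pr_{\cD_{ij}}[v\ge v_\ell]=\Pr_{\cD'_{ij}}[v\ge v_\ell]$ at every grid point $v_\ell$. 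Since all posted prices selected by $\cM_{UD}$ from $\cD'$ already lie on this grid, the per-element acceptance probabilities fed into the Kleinberg-Weinberg prophet-inequality accounting are preserved exactly when values are resampled from $\cD$. The plan is to reopen the Kleinberg-Weinberg argument and trace this invariance through its per-item revenue decomposition, obtaining $Rev(\cM_{EVUD}(\cI))\ge OPT(\cI'_{COP})/\beta$ directly; combined with the first two inequalities this closes the chain and yields the stated $24(1+\epsilon)$ approximation.
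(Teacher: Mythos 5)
Your proposal is correct and follows essentially the same route as the paper: the identical chain $OPT(\cI)\le 4\,OPT(\cI^{CP})\le 4(1+\epsilon)OPT(\cI'^{CP})\le 24(1+\epsilon)Rev(\cM_{EVUD}(\cI))$, with the grid-alignment/round-down observation doing the work in the last step. The only cosmetic difference is that the paper packages that observation as a standalone coupling lemma --- the sequential posted-price mechanism's prices always lie in the support of $\cD'$, so its outcome is pointwise identical under $v\sim\cD$ and its round-down $v'\sim\cD'$ for every adversarial ordering of the copies --- which lets it invoke the Kleinberg--Weinberg guarantee black-box on the matched instance $\cI'^{CP}$ instead of reopening their proof.
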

Before analyzing mechanism $\cM_{EVUD}$, let us first recall the  {\em sequential post-price} mechanism $\cM_{UD}$.
% which is
% and the {\em COPIES} setting.
%a {\em sequential post-price} mechanism.
This mechanism processes the players one by one according to
an arbitrary order, computes a price for each player $i$ based on remaining items, remaining players and the prior distribution,
and lets $i$ choose his utility-maximizing item (or choose none).
The revenue of this mechanism is analyzed by reducing the unit-demand instance to the {\em COPIES} setting, which we introduce below.

For a unit-demand auction instance $\cI=(N,M,\cD)$, the corresponding COPIES instance is denoted by $\cI^{CP}=(N^{CP}, M^{CP},\cD)$,
where each player $i\in N$ has $m$ copies and each item $j\in M$ has $n$ copies,
and player $i$'s copy~$j$
 is only interested in item $j$'s copy $i$, with value $v_{ij}$ drawn independently from~$\cD_{ij}$.
Thus $N^{CP} = M^{CP} = N\times M$,
% $(i, j)\in N^{CP}$, $(j, i)\in M^{CP}$,
 %  and no two player copies have the same interest.
and $\cI^{CP}$ is a single-parameter instance.
Denote by $N_{i}$ the set of player $i$'s copies and by $M_{j}$ the set of item $j$'s copies.
Note that both $\{N_{i}\}_{i\in N}$ and $\{M_{j}\}_{j\in M}$ are partitions of $N^{CP}$ (and $M^{CP}$).
Two natural constraints are imposed on feasible allocations under the COPIES setting, so as to connect it with the original unit-demand setting:
(1) for each player $i$, at most one of his copies gets an item; and
(2) for each item $j$, at most one of its copies gets allocated.
Accordingly, letting $q_{s}$ be the probability that a feasible mechanism allocates an item to a player copy $s\in N^{CP}$, we have
$\sum_{s\in N_{i}} q_{s} \leq 1$ for each $i\in N$ and $\sum_{s\in M_{j}} q_{s} \leq 1$ for each $j\in M$.

The corresponding mechanism $\cM_{UD}^{CP}$ for the COPIES setting works in the same way as $\cM_{UD}$, except that
it considers an arbitrary order of the players in $N^{CP}$, thus different copies of the same player may not be processed together.
When evaluating the performance of mechanism $\cM_{UD}^{CP}$,
the order of the players is chosen
by an {\em online adaptive adversary,}
% defined in Algorithm 2 of \cite{kleinberg2012matroid}.
who tries to minimize the expected revenue of the mechanism.
Because this adversary is the worst-case for mechanism $\cM_{UD}^{CP}$,
$$Rev(\cM_{UD}(\cI;\cD')) \geq Rev(\cM_{UD}^{CP}(\cI^{CP};\cD'))$$
for any distribution $\cD'$, where the latter is the expected revenue of $\cM_{UD}^{CP}$
under the online adaptive adversary.
Indeed,
mechanism $\cM_{UD}$ can be considered as $\cM_{UD}^{CP}$ under a specific order where all copies of each player come together,
thus the revenue is at least that when the order of $N^{CP}$ is adaptively chosen by the adversary.
Now we are ready to prove Theorem \ref{thm:unit-demand:value_query}.

\begin{proof}
[Proof of Theorem \ref{thm:unit-demand:value_query}]
It is easy to see that the query complexity of $\cM_{EVUD}$ is
$O(mn\log_{1+\epsilon} H)$,
since each distribution $\cD_{ij}$ needs $O(\log_{1+\epsilon} H)$ value queries. Also, it is immediate that $\cM_{EVUD}$ is DSIC.

Below we prove the revenue bound.
By construction,
\begin{equation}\label{eq:value:ud:cnstct}
Rev(\cM_{EVUD}(\cI)) = Rev(\cM_{UD}(\cI;\cD')).
\end{equation}
Let $\cI'=(N,M,\cD')$ and $\cI'^{CP}=(N^{CP},M^{CP},\cD')$.
We state the following key lemma, which is proved after the proof of Theorem \ref{thm:unit-demand:value_query}.

\begin{lemma} \label{lem:value:ud:chain}
$Rev(\cM_{UD}(\cI;\cD'))
\geq Rev(\cM_{UD}^{CP}(\cI^{CP};\cD'))
= Rev(\cM_{UD}^{CP}(\cI'^{CP})).$
\end{lemma}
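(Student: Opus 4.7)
}

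The plan is to handle the inequality and the equality separately, since they rely on different ingredients. For the inequality $Rev(\cM_{UD}(\cI;\cD')) \geq Rev(\cM_{UD}^{CP}(\cI^{CP};\cD'))$, I would argue that the original sequential post-price mechanism $\cM_{UD}$ is realized as a particular, non-adaptive ordering of the COPIES agents, namely one that processes all $m$ copies $N_i$ of each player $i$ consecutively before moving on to the next player. Under this specific order, the COPIES mechanism offers the same menu of take-it-or-leave-it prices to player $i$ that $\cM_{UD}$ does, so the two revenues coincide on this order. Since $Rev(\cM_{UD}^{CP}(\cI^{CP};\cD'))$ is defined with respect to a worst-case online adaptive adversary choosing the order, its value is at most the revenue under any fixed order, giving the inequality. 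This is essentially the observation already made in the paragraph preceding the lemma; I would simply formalize it.

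For the equality $Rev(\cM_{UD}^{CP}(\cI^{CP};\cD')) = Rev(\cM_{UD}^{CP}(\cI'^{CP}))$, the central observation is that $\cA_V$ constructs $\cD'$ by rounding each value down to the nearest element of the geometric grid $\{v_0, v_1, \dots, v_k\}$, and therefore for every grid point $v_l$,
\[
\Pr_{v\sim \cD_{ij}}[v \geq v_l] \;=\; \Pr_{v\sim \cD'_{ij}}[v \geq v_l] \;=\; q_l.
\]
Because the COPIES mechanism computes its posted prices (Myerson-style reserves for a single-parameter setting) from $\cD'$, which is discrete and supported on the grid, every price it ever posts to any copy lies in $\{v_0, \dots, v_k\}$. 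Posting such a price to a single-parameter agent reduces to a threshold test: the agent accepts iff his value is at least the price. Because the acceptance probability at each grid price is identical under $\cD_{ij}$ and under $\cD'_{ij}$, the distribution of every individual accept/reject event is the same in the two scenarios.

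I would then conclude by induction on the sequence of steps taken by the adaptive adversary together with the mechanism. At each step, the adversary's choice of the next copy to process and the mechanism's choice of price depend only on the history of past accept/reject events (and on $\cD'$, which is fixed). By the matching of acceptance probabilities at grid prices, the joint distribution over full transcripts (orderings, prices posted, accept/reject outcomes, payments collected) is identical under the two scenarios $\cM_{UD}^{CP}(\cI^{CP};\cD')$ and $\cM_{UD}^{CP}(\cI'^{CP})$, so their expected revenues coincide. The main subtlety I expect is verifying that the mechanism indeed only ever posts prices on the grid; this is immediate for Myerson-style reserves on a discrete distribution, but I would state it explicitly as the step where the structure of $\cA_V$ is actually used.
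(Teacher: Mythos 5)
Your overall strategy is the same as the paper's: the inequality follows because $\cM_{UD}$ is the COPIES mechanism run under one particular (fixed) order, so the worst-case adaptive adversary can only do worse; and the equality rests on the two facts you identify, namely that every posted price lies in the support of $\cD'$ and that at every such grid point $v_l$ the quantiles of $\cD_{ij}$ and $\cD'_{ij}$ coincide. The paper implements the second part via an explicit coupling --- sample $v\sim\cD$ and round each coordinate down to the support of $\cD'$ to get $v'$, which is then distributed exactly as $\cD'$ --- and shows by induction over any \emph{fixed} order that the mechanism's prices and accept/reject outcomes are identical under $v$ and $v'$. Your "matching acceptance probabilities at grid prices" formulation is the distributional shadow of that same coupling.

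The one step that does not quite go through as written is your final induction over "the sequence of steps taken by the adaptive adversary together with the mechanism," where you assume the adversary's choice of the next copy depends only on the history of accept/reject events. The worst-case online adaptive adversary may also condition on the \emph{realized values} of the copies already processed, and those values are drawn from $\cD$ in one scenario and from $\cD'$ in the other; hence the full transcripts are not literally identically distributed, and the optimal adversaries in the two scenarios need not coincide. The paper closes this with a two-sided simulation argument: a (randomized) adversary for $\cI'^{CP}$ simulates the adversary for $(\cI^{CP};\cD')$ by resampling $v_s\sim\cD_s$ conditioned on rounding down to the observed $v'_s$ before deciding who arrives next, and symmetrically in the other direction, which yields both $\geq$ and $\leq$ and hence the stated equality. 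If the adversary is only allowed to observe accept/reject outcomes, your argument is complete as is; for the stronger adversary you need to add this resampling step.
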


%The following proof is similar with single-item auctions.
By Theorem 1 of \cite{kleinberg2012matroid},
the sequential post-price mechanism is at least a 6-approximation to the optimal BIC revenue in the COPIES setting. Thus
\begin{equation}\label{eq:value:ud:thm1inK2012}
Rev(\cM_{UD}^{CP}(\cI'^{CP}))
\geq \frac{1}{6}OPT(\cI'^{CP}).
\end{equation}
Next, because the COPIES setting is a single-parameter setting, and because of the way we discretize the value space in algorithm $\cA_V$, by Lemma 5 of \cite{devanur2016sample} we have
\begin{equation}\label{eq:value:ud:useClaim}
OPT(\cI'^{CP})\geq \frac{1}{1+\epsilon} OPT(\cI^{CP}).
\end{equation}
Finally, by Theorem~6 of \cite{cai2016duality}, the optimal BIC revenue in the COPIES setting is a 4-approximation to the optimal BIC revenue
in the original unit-demand setting. Thus
\begin{equation}\label{eq:value:ud:useCPtoOrg}
OPT(\cI^{CP}) \geq \frac{1}{4}OPT(\cI).
\end{equation}

Combining Equations \ref{eq:value:ud:cnstct}, \ref{eq:value:ud:thm1inK2012}, \ref{eq:value:ud:useClaim}, \ref{eq:value:ud:useCPtoOrg}
and Lemma \ref{lem:value:ud:chain},
Theorem \ref{thm:unit-demand:value_query} holds.
\end{proof}

\begin{proof}[Proof of Lemma \ref{lem:value:ud:chain}]
The inequality is already explained.
Now we prove the equality.
For any value profile $v\sim\cD$, let
$v'$ be $v$ rounded down to the support of ${\cD'}$. That is,
for each~$v_{ij}$, $v'_{ij}$ is the largest value in the support of $\cD'_{ij}$ that is less than or equal to $v_{ij}$.
Recall that the support of $\cD'_{ij}$ is the set $\{v_{0}, \cdots, v_{k}\}$ as defined in the query algorithm $\cA_V$.
By
%the construction of $\cA_V$ and
the definition of $\cD'_{ij}$, for any $0\leq l\leq k-1$,
$$\Pr_{v_{ij} \sim \cD_{ij}} [v'_{ij} =v_{l}] = \Pr_{v_{ij} \sim \cD_{ij}} [v_{ij} \geq v_{l}] - \Pr_{v_{ij} \sim \cD_{ij}} [v_{ij} \geq v_{l+1}]
=q(v_{l}) - q(v_{l+1}) = q_l-q_{l+1}=\cD'_{ij}(v_{l}),$$
 and
$$\Pr_{v_{ij} \sim \cD_{ij}} [v'_{ij} =v_{k}] = \Pr_{v_{ij} \sim \cD_{ij}} [v_{ij} \geq v_{k}] =q(v_{k}) = q_k = \cD'_{ij}(v_k).$$
That is, if $v$ is distributed according to $\cD$ then $v'$ is distributed according to ${\cD'}$.

For any value profile $v$ and the corresponding $v'$,
arbitrarily fix an order $\sigma$ of the players in $N^{CP}$,
which is a bijection from $\{1,\cdots,mn\}$ to $\{1,\cdots,mn\}$.
Without loss of generality, each player $\sigma(s)$ gets the corresponding item $\sigma(s)$ whenever
his true value is greater than or equal to the posted price for him.
Below we show that mechanism $\cM_{UD}^{CP}$ produces the same outcome no matter
the players' true values are $v$ or $v'$.
That is, for any $s\in \{1,\dots, mn\}$,
(1)
$\cM_{UD}^{CP}$ produces the same price $p_{\sigma(s)}$ under $v$ and $v'$ for player $\sigma(s)$,
and
(2) $v_{\sigma(s)}\geq p_{\sigma(s)}$ if and only if $v'_{\sigma(s)}\geq p_{\sigma(s)}$.

To prove these two properties, note that by the construction of mechanism $\cM_{UD}^{CP}$,
the price $p_{\sigma(s)}$ posted to $\sigma(s)$ depends only on the distribution~$\cD'$ and
the set $A_{\sigma(s)}$ of items sold to the players arriving before $\sigma(s)$.
Here $p_{\sigma(s)}$ may be randomized if $\cD'_{\sigma(s)}$ is irregular,
but it always takes value in the support of $\cD'_{\sigma(s)}$ (except that,
if selling the corresponding item $\sigma(s)$ to player $\sigma(s)$ is not feasible anymore, then $p_{\sigma(s)} = + \infty$).

We prove the two desired properties by induction.
When $s=1$, property (1) trivially holds, because $A_{\sigma(1)} = \emptyset$ under both value profiles.
Furthermore,
because a realization of $p_{\sigma(1)}$ is always in
 the support of $\cD'_{\sigma(1)}$,
 and because $v'_{\sigma(1)}$ is $v_{\sigma(1)}$ rounded down to the support of $\cD'_{\sigma(1)}$,
property (2) holds
when $s=1$.

Now assume (1) and (2) hold for any $s\leq t$ with $t < mn$.
We show they also hold for $s=t+1$.
Indeed, the inductive hypothesis implies that
for any $s\leq t$, $A_{\sigma(s)}$ is the same under the two value profiles.
In particular, $A_{\sigma(t+1)}$ is the same,
which means the price $p_{\sigma(t+1)}$ is the same. Thus property (1) holds.
Property (2) also holds because a realization of $p_{\sigma(t+1)}$ is always in the support of $\cD'_{\sigma(t+1)}$.
In sum, for any order $\sigma$, mechanism $\cM_{UD}^{CP}$ produces the same outcome under the two value profiles $v$ and $v'$,
thus the same revenue.

Accordingly, under the online adaptive adversary for $(\cI^{CP}; \cD')$,
the revenue $Rev(\cM_{UD}^{CP}(\cI^{CP};\cD'))$
is the same as the revenue
when the players' true values are obtained by rounding $v\sim \cD$ to $v'$.
Because the resulting $v'$ is distributed according to $\cD'$,
$Rev(\cM_{UD}^{CP}(\cI^{CP};\cD'))$ is at least the expected revenue of $\cM_{UD}^{CP}$ under the online adaptive adversary for
$\cI'^{CP}$.
Indeed, a randomized adversary for $\cI'^{CP}$ can simulate the
adversary for $(\cI^{CP}; \cD')$: in each step, given $v'_s$ with $s\in N^{CP}$ being the player in this step,
the former first samples $v_s$ from $\cD_s$ conditional on $v_s$ rounded down to $v'_s$,
and then uses the latter to decide which player arrives next.
Thus,
$$Rev(\cM_{UD}^{CP}(\cI^{CP};\cD')) \geq Rev(\cM_{UD}^{CP}({\cI'}^{CP})).$$
Similarly,
$$Rev(\cM_{UD}^{CP}(\cI^{CP};\cD')) \leq Rev(\cM_{UD}^{CP}({\cI'}^{CP})).$$
Therefore
$Rev(\cM_{UD}^{CP}(\cI^{CP};\cD'))  = Rev(\cM_{UD}^{CP}({\cI'}^{CP}))$ and Lemma  \ref{lem:value:ud:chain} holds.
\end{proof}

Letting $c=24(1+\epsilon)$, we have the query complexity in Table \ref{table:main results}.

\subsection{Additive Auctions}\label{subsec:additive}

For additive auctions,
the DSIC Bayesian mechanism in \cite{yao2015n, cai2016duality}
chooses between two mechanisms, whichever generates higher expected revenue  under the true prior $\cD$.
The first is the ``individual Myerson'' mechanism, denoted by $\cM_{IM}$, which
sells each item separately using Myerson's mechanism.
The second is
the VCG mechanism with optimal per-player entry fees, denoted by~$\cM_{BVCG}$.

In our mechanism $\cM_{EVA}$,
the seller queries about $\cD$ using algorithm~$\cA_V$ with properly chosen parameters.
Given the resulting distribution $\cD'$,
the seller either runs $\cM_{IM}$ or runs $\cM_{BVCG}$ as a blackbox,
resulting in query mechanisms $\cM_{EVIM}$ and $\cM_{EVBVCG}$.
 We only define the latter in Mechanism~\ref{mechanism:evbvcg},
 and the former simply replaces $\cM_{BVCG}$ with $\cM_{IM}$.
Note that $Rev(\cM_{EVIM}(\cI)) = Rev(\cM_{IM}(\cI; \cD'))$ and
 $Rev(\cM_{EVBVCG}(\cI)) = Rev(\cM_{BVCG}(\cI; \cD'))$.
However, the seller cannot compute these two revenue and choose the better one, because
he does not know $\cD$.
Thus he randomly chooses between the two, according to
probabilities defined in our analysis, to optimize the approximation ratio. 
We have the following theorem. 

Theorem \ref{thm:veca} is harder to show. Indeed, one cannot use revenue monotonicity or the 
COPIES setting to easily upper-bound the optimal BIC revenue.
Our analysis is based on
the duality framework of \cite{cai2016duality} for Bayesian auctions, properly adapted for the query setting.
Finally, letting $c = 8(1+\epsilon)$, we have the query complexity in Table \ref{table:main results}.

\begin{algorithm}
\floatname{algorithm}{Mechanism}
  \caption{\hspace{-3pt} Mechanism ${\cM}_{EVBVCG}$ to Approximate $\cM_{BVCG}$ via Value Queries}
  \label{mechanism:evbvcg}
    \begin{algorithmic}[1]
\STATE Given $H$ and $\epsilon > 0$,
run the value-query algorithm $\cA_{V}$ with $H$ and $\delta = \sqrt{\epsilon+1}-1$
for each player~$i$'s distribution $\cD_{ij}$ for each item $j$.
Denote by $\cD'_{ij}$ the returned distribution. Let $\cD'_{i}=\times_{j\in M}\cD'_{ij}$ and $\cD'=\times_{i\in N}\cD'_{i}$.

\STATE Run $\cM_{BVCG}$ with $\cD'$ and the players' reported values, $b=(b_{ij})_{i\in N, j\in M}$,
to get allocation $x = (x_{ij})_{i\in N, j\in M}$ and price profile $p = (p_i)_{i\in N}$ as the outcome.
  \end{algorithmic}
\end{algorithm}

%Formally, we have the following theorem.

\begin{theorem}\label{thm:veca}
$\forall \epsilon > 0$, for any additive instance $\cI = (N,M,\cD)$
with values in $[1,H]$,
mechanism ${\cM}_{EVA}$ is DSIC, has query complexity $O(mn\log_{1+\epsilon} H)$, and
$Rev({\cM}_{EVA}(\cI))
\geq \frac{OPT(\cI)}{8(1+\epsilon)}$.
\end{theorem}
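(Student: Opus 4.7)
The argument naturally splits into three pieces: (i) confirming that $\cM_{EVA}$ is DSIC and meets the stated query complexity, (ii) producing a benchmark inequality that upper-bounds $OPT(\cI)$ via $Rev(\cM_{IM}(\cI))$ and $Rev(\cM_{BVCG}(\cI))$, and (iii) transferring each of these two revenue bounds from the true prior $\cD$ to the queried proxy $\cD'$ while losing only a $(1+\epsilon)$ factor. For (i), $\cM_{EVA}$ is a convex combination of $\cM_{EVIM}$ and $\cM_{EVBVCG}$, each of which is $\cM_{IM}$ or $\cM_{BVCG}$ run on a query-determined, bid-independent distribution $\cD'$; DSIC is therefore inherited. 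The query complexity is $mn$ invocations of $\cA_V$ with parameter $\delta=\sqrt{1+\epsilon}-1$, giving $O(mn\log_{1+\delta}H)=O(mn\log_{1+\epsilon}H)$. For (ii), I would invoke the duality-style decomposition of \cite{yao2015n, cai2016duality} for additive bidders, which gives
\[
OPT(\cI)\;\leq\; 4\,Rev(\cM_{IM}(\cI))\;+\;4\,Rev(\cM_{BVCG}(\cI)).
\]
Choosing the two branches of $\cM_{EVA}$ with equal probability $p=1/2$ then reduces the theorem to proving $Rev(\cM_{IM}(\cI;\cD'))\geq Rev(\cM_{IM}(\cI))/(1+\epsilon)$ and $Rev(\cM_{BVCG}(\cI;\cD'))\geq Rev(\cM_{BVCG}(\cI))/(1+\epsilon)$, since then the averaged revenue is at least $OPT(\cI)/(8(1+\epsilon))$.

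The $\cM_{IM}$ branch is the easier piece and can be handled by reducing to the single-item analysis of Theorem~\ref{thm:single:value_query}. Because $\cM_{IM}$ sells each item $j$ separately with Myerson, and because $\cD'_{ij}$ is stochastically dominated by $\cD_{ij}$ coordinate-wise (by construction of $\cA_V$), revenue monotonicity \cite{devanur2016sample} gives $Rev(\cM_{MRS}(\cI_j;\cD'_j))\geq Rev(\cM_{MRS}(\cI'_j))$ per item, and Lemma~5 of \cite{devanur2016sample} loses only a $(1+\delta)$ factor in passing from $\cD_j$ to $\cD'_j$. Summing over items, and using $(1+\delta)^2=1+\epsilon$ to absorb slack from the two-layered error in $\cA_V$ with $\delta=\sqrt{1+\epsilon}-1$, yields the desired $1/(1+\epsilon)$-approximation on the $\cM_{IM}$ branch.

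The main obstacle is the $\cM_{BVCG}$ branch, where revenue monotonicity is not available: BVCG couples the mechanism to the prior via per-player entry fees, so changing the prior simultaneously changes the allocation value and the fees, and one cannot treat it as a black box. The plan is to open $\cM_{BVCG}$ and decompose its revenue into the VCG core payments plus the entry-fee payments, and then compare each piece under $\cD$ versus $\cD'$. For the VCG core, stochastic dominance of $\cD$ over $\cD'$ makes per-item second-highest bids only larger under $\cD$, so the core revenue under $\cD$ dominates its value under $\cD'$. For the entry fees, I would use the mean-approximation property of $\cA_V$ observed in Section~\ref{subsec:vq}, namely $mean(\cD_{ij})\leq (1+\delta)\,mean(\cD'_{ij})$: since each player's VCG surplus is a sum of item-wise gains whose expectations are controlled by these means, the entry fee computed under $\cD'$ is within a $(1+\delta)$ factor of the one computed under $\cD$, and stochastic dominance additionally ensures the acceptance probability under $\cD$ is at least that under $\cD'$ (at least a constant). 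Combining these pieces, the entry-fee revenue collected with $\cD'$-fees from $\cD$-valued players is at least $Rev(\cM_{BVCG}(\cI))/(1+\delta)$, and a further $(1+\delta)$ factor from the welfare comparison gives overall $(1+\delta)^2=1+\epsilon$ loss. Feeding this back into the benchmark inequality and the $p=1/2$ averaging yields $Rev(\cM_{EVA}(\cI))\geq OPT(\cI)/(8(1+\epsilon))$, as required. The delicate step will be the joint entry-fee/acceptance-probability comparison, because any slack taken there multiplies directly into the approximation constant and must be controlled exactly by the $(1+\delta)$ mean-approximation supplied by $\cA_V$.
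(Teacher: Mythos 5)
Your handling of DSIC, the query count, and the $\cM_{IM}$ branch (revenue monotonicity plus Lemma~5 of \cite{devanur2016sample}, item by item, with $(1+\delta)^2=1+\epsilon$) matches the paper. The rest of the plan diverges from the paper's proof and contains a genuine gap, concentrated in two places.

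First, the benchmark. The inequality $OPT(\cI)\leq 4\,Rev(\cM_{IM}(\cI))+4\,Rev(\cM_{BVCG}(\cI))$ is not what \cite{yao2015n,cai2016duality} prove; the established bound has the form $OPT(\cI)\leq 6\,Rev(\cM_{IM}(\cI))+2\,Rev(\cM_{BVCG}(\cI))$ (whence the mixing probabilities $3/4$ and $1/4$, not $1/2$ each). Neither inequality implies the other, so your starting point needs its own justification. More importantly, the paper does not use the benchmark as a black box at all: it re-derives the duality decomposition $OPT\leq \mbox{Single}+\mbox{Under}+\mbox{Over}+\mbox{Tail}+\mbox{Core}$ with every threshold $\beta_{ij}(v_{-i})$ inflated to $(1+\delta)\beta_{ij}(v_{-i})$, precisely so that the Core term can be charged to $\cM_{EVBVCG}$ rather than to $\cM_{BVCG}$ under the true prior.

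Second, and this is the real gap, your reduction for the BVCG branch asserts $Rev(\cM_{BVCG}(\cI;\cD'))\geq Rev(\cM_{BVCG}(\cI))/(1+\epsilon)$ and proposes to prove it from the mean-approximation property $mean(\cD)\leq(1+\delta)\,mean(\cD')$. The optimal entry fee and its acceptance probability are functionals of the entire distribution of the surplus $\sum_j(v_{ij}-\beta_{ij})^+$, not of its mean; a $(1+\delta)$ bound on means gives no control over either quantity, and for discrete distributions the acceptance probability at a fixed fee can change discontinuously under a small multiplicative perturbation of values. The natural pointwise attempt also fails: rounding $v_{ij}$ down to $v'_{ij}\geq v_{ij}/(1+\delta)$ gives $(v'_{ij}-\beta_{ij})^+\geq \frac{1}{1+\delta}(v_{ij}-(1+\delta)\beta_{ij})^+$, i.e.\ the surplus is only comparable to a surplus computed against \emph{inflated} thresholds, which is exactly why the paper's decomposition inflates $\beta_{ij}$ everywhere. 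The paper then never compares $\cM_{BVCG}(\cI;\cD')$ with $\cM_{BVCG}(\cI)$; instead it defines the explicit entry fee $e'_i(v_{-i})=e_i(v_{-i})/(1+\delta)$ with $e_i(v_{-i})=\sum_j\bE[c_{ij}(v_{-i})]-2r_i(v_{-i})$, shows its acceptance probability under $\cD'$ is at least $1/2$ via the concentration argument of Lemma~12 of \cite{cai2016duality} (not via stochastic dominance), and thereby bounds Core by $2(1+\delta)\left[Rev(\cM_{EVBVCG}(\cI))+Rev(\cM_{EVIM}(\cI))\right]$. Without this (or an equivalent) argument your BVCG step does not go through.
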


\begin{proof}
% [Proof of Theorem \ref{thm:veca}]
First, it is easy to see that the query complexity of mechanism ${\cM}_{EVA}$ is $O(mn\log_{1+\delta} H)$, since there are in total $mn$ distributions
and each one of them needs $O(\log_{1+\delta} H)$ value queries in the algorithm $\cA_{V}$.
Since $\delta=\sqrt{\epsilon+1}-1$, $O(mn\log_{1+\delta} H) = O(mn\log_{1+\epsilon} H)$.
Second, since mechanisms $\cM_{BVCG}$ and $\cM_{IM}$ are both DSIC, ${\cM}_{EVA}$ is DSIC.

Recall that
mechanism ${\cM}_{EVA}$ randomly chooses between running $\cM_{EVIM}$ and running $\cM_{EVBVCG}$.
Therefore, to upper-bound the optimal revenue $OPT(\cI)$ using $Rev({\cM}_{EVA}(\cI))$,
we only need to upper-bound $OPT(\cI)$
using $Rev(\cM_{EVIM}(\cI))$ and $Rev(\cM_{EVBVCG}(\cI))$.

As in \cite{cai2016duality}, we only need to consider the prior distribution $\cD$ with finite support.
Let $V_{ij}$ be the support of $\cD_{ij}$ for each player $i$ and item $j$,
$V_{i}=\times_{j\in M} V_{ij}$ and $V=\times_{i\in N} V_i$.
In the optimal BIC mechanism,
when player $i$ bids $v_i$,
let $\pi_{ij}(v_i)$ be the probability for him to get item $j$
and $p_{i}(v_{i})$ be his expected payment,
taken
over the randomness of the other players' values and the randomness of the mechanism.
Let $\pi=(\pi_{ij}(v_{i}))_{i\in N, j\in M, v_{i}\in V_{i}}$ and $p=(p_{i}(v_{i}))_{i\in N, v_{i}\in V_{i}}$.
The pair $(\pi,p)$ is called the {\em reduced form} (of the optimal BIC mechanism) \cite{cai2012algorithmic}.

Denote by $\tilde{\varphi}_{ij}(v_{ij})$ Myerson's (ironed) virtual value when player $i$'s value on item $j$ is $v_{ij}$.
For any value sub-profile $v_{-i}$ of the players other than $i$,
let $\beta_{ij}(v_{-i}) = \max_{i'\neq i} v_{i'j}$: that is, the highest bid on item $j$ excluding player $i$.
Moreover, let $r_{ij} (v_{-i}) = \max_{x \geq \beta_{ij} (v_{-i})} \{x \cdot \Pr_{v_{ij} \sim \cD_{ij}} [v_{ij} \geq x]\}$,
$r_i(v_{-i}) = \sum_j r_{ij}(v_{-i})$, $r_i = \bE_{v_{-i} \sim \cD_{-i}} [r_i(v_{-i})]$, and finally $r = \sum_i r_i$.
Note that $r$ is the expected revenue by running the 1-look-ahead mechanism of \cite{ronen2001approximating} for each item separately,
and $r \leq Rev(\cM_{IM}(\cI))$. % where $\cM_{IM}$ is selling items separately with Myerson's mechanism.

Next, we use a different method from \cite{cai2016duality} to partition each player $i$'s value space $V_{i}$
into $m+1$ subsets.
More precisely, given $\delta>0$ and $v_{-i}$, let $R^{(v_{-i})}_{0}=\{  v_{i}\in V_i \  | \ v_{ij} < (1+\delta) \beta_{ij}(v_{-i}),\forall j\}$.
For any $v_i\notin R^{(v_{-i})}_{0}$, let $j =  \argmax\{v_{ij} - (1+\delta) \beta_{ij}(v_{-i})\}$ with ties broken lexicographically,
and add $v_i$ to the set $R^{(v_{-i})}_{j}$: note that $v_{ij} - (1+\delta) \beta_{ij}(v_{-i}) \geq 0$ in this case.
Similar to Theorem 3 of~\cite{cai2016duality}, the optimal BIC revenue can be upper-bounded by the sum of the following terms,
where $\cD_i(v_i)$ and $\cD_{-i}(v_{-i})$ are respectively the probabilities of $v_i$ and $v_{-i}$ under $\cD$, and ${\bf I}$
is the indicator function:
\begin{equation}\label{eq:add:val:decompose}
OPT(\cI) \leq \mbox{Single $+$ Under $+$ Over $+$ Tail $+$ Core},
\end{equation}
where
\begin{equation*}
\mbox{Single} =
\sum_i \sum_{v_i\in V_i} \sum_j \cD_i(v_i) \cdot \pi_{ij}(v_i)
\cdot \tilde{\varphi}_{ij}(v_{ij}) \cdot \Pr_{v_{-i}\sim \cD_{-i}}[v_i\in R_j^{(v_{-i})}],
\end{equation*}

\begin{equation*}
\mbox{Under} =
 \sum_i \sum_{v_i\in V_i} \sum_j \cD_i(v_i) \cdot \pi_{ij}(v_i)
\cdot \sum_{v_{-i}\in V_{-i}} v_{ij} \cdot \cD_{-i}(v_{-i})
{\bf I}_{v_{ij}<(1+\delta)\beta_{ij}(v_{-i})},
\end{equation*}

\begin{equation*}
\mbox{Over} =
\sum_i \sum_{v_i\in V_i} \sum_j \cD_i(v_i) \cdot \pi_{ij}(v_i) \cdot
\sum_{v_{-i}\in V_{-i}} (1+\delta)\beta_{ij}(v_{-i})\cD_{-i}(v_{-i})
{\bf I}_{v_{ij} \geq (1+\delta)\beta_{ij}(v_{-i})},
\end{equation*}

\begin{equation*}
\begin{aligned}
\mbox{Tail} =
\sum_i \sum_{v_{-i}\in V_{-i}} \cD_{-i}(v_{-i}) \sum_j
\sum_{v_{ij} > (1+\delta)\beta_{ij}(v_{-i}) + r_i(v_{-i})} \cD_{ij}(v_{ij}) \cdot (v_{ij}-(1+\delta)\beta_{ij}(v_{-i})) \\
\quad\quad\quad \cdot \Pr_{v_{i,-j}\sim \cD_{i,-j}}
[\exists k\neq j, v_{ik} - (1+\delta)\beta_{ik}(v_{-i}) \geq v_{ij} - (1+\delta)\beta_{ij}(v_{-i})],
\end{aligned}
\end{equation*}
and
\begin{equation*}
\begin{aligned}
\mbox{Core} =
\sum_i \sum_{v_{-i}\in V_{-i}} \cD_{-i}(v_{-i}) \sum_j
\sum_{(1+\delta)\beta_{ij}(v_{-i})\leq v_{ij}\leq (1+\delta)\beta_{ij}(v_{-i}) + r_i(v_{-i})} \cD_{ij}(v_{ij})\\
\mbox{ $\quad\quad\quad \quad\quad\quad\quad \quad\quad\quad \quad\quad\quad\quad \quad\quad $}
\cdot (v_{ij}-(1+\delta)\beta_{ij}(v_{-i})).
\end{aligned}
\end{equation*}

In the following, 
we bound these terms in Inequality \ref{eq:add:val:decompose} separately.
Note that when $\cM_{EVIM}$ uses the value-query algorithm $\cA_{V}$ to learn a distribution,
the parameters are also set to be $H$ and $\delta=\sqrt{\epsilon+1}-1$.
Thus, applying Theorem~\ref{thm:single:value_query} to each item, we have
$$ Rev(\cM_{IM}(\cI)) \leq (1+\delta)Rev(\cM_{EVIM}(\cI)).$$
For the terms Single, Under, Over and Tail, we are able to upper-bound them using $Rev(\cM_{EVIM}(\cI))$.
Following Lemma 13 of \cite{cai2016duality},
although the term Single has changed from its original form, we still have
\begin{eqnarray}
\mbox{Single} &=& \sum_i \sum_{v_i\in V_i} \sum_j \cD_i(v_i) \cdot \pi_{ij}(v_i)
\cdot \tilde{\varphi}_{ij}(v_{ij}) \cdot \Pr_{v_{-i}\sim \cD_{-i}}[v_i\in R_j^{(v_{-i})}] \nonumber\\
&\leq&Rev(\cM_{IM}(\cI)) \leq (1+\delta) Rev(\cM_{EVIM}(\cI)).\label{eq:add:val:single}
\end{eqnarray}

Next, using Lemmas 14 and 15 of \cite{cai2016duality}, we upper-bound the term Under as follows:
\begin{eqnarray}
\mbox{Under} &=& \sum_i \sum_{v_i\in V_i} \sum_j \cD_i(v_i) \cdot \pi_{ij}(v_i)
\cdot \sum_{v_{-i}\in V_{-i}} \cD_{-i}(v_{-i})\cdot v_{ij} \cdot
{\bf I}_{v_{ij}<(1+\delta)\beta_{ij}(v_{-i})} \nonumber\\
&=& \sum_i \sum_{v_i\in V_i} \sum_j \cD_i(v_i) \cdot \pi_{ij}(v_i)\cdot \sum_{v_{-i}\in V_{-i}} \cD_{-i}(v_{-i})\cdot v_{ij} \nonumber \\
& & \quad\quad\quad \quad\quad\quad\quad\quad\quad \quad\quad\quad \quad\quad\quad   \cdot ( {\bf I}_{v_{ij}<\beta_{ij}(v_{-i})} +
{\bf I}_{\beta_{ij}(v_{-i})\leq v_{ij}<(1+\delta)\beta_{ij}(v_{-i})}) \nonumber\\
&\leq& \sum_i \sum_{v_i\in V_i} \sum_j \cD_i(v_i) \cdot \pi_{ij}(v_i) \cdot \sum_{v_{-i}\in V_{-i}} \cD_{-i}(v_{-i})\nonumber\\
& & \quad\quad\quad \quad\quad\quad\quad\quad\quad \quad\quad\quad\quad  \cdot ( v_{ij} \cdot {\bf I}_{v_{ij}<\beta_{ij}(v_{-i})}+(1+\delta)\beta_{ij}(v_{-i}) \cdot {\bf I}_{v_{ij}\geq \beta_{ij}(v_{-i})} ) \nonumber\\
&\leq& Rev(\cM_{IM}(\cI))  + (1+\delta) Rev(\cM_{IM}(\cI)) \leq  2(1+\delta)^{2} Rev(\cM_{EVIM}(\cI)). \nonumber
\end{eqnarray}
The second inequality above is by Lemmas 14 and 15 of \cite{cai2016duality}, which respectively upper-bound
the term Over and the term Under in the original setting.
Indeed, we split our term Under into the sum of the original terms Under and Over.
Using the above equation, the approximation ratio to $OPT(\cI)$
%in Theorem~\ref{thm:veca}
will be $9(1+\epsilon)$ eventually.
To get the desired $8(1+\epsilon)$-approximation,
we prove a variant of Lemma 15 of \cite{cai2016duality}, which directly upper-bounds our term Under
as
\begin{equation}\label{eq:add:val:under}
\mbox{Under}  \leq (1+\delta) Rev(\cM_{IM}(\cI)) \leq (1+\delta)^{2} Rev(\cM_{EVIM}(\cI)).
\end{equation}
%Equation \ref{equ:add:under_2} will lead to the $8(1+\epsilon)$-approximation.
The actual proof of this alternative lemma is tedious and does not provide new insights to our result, thus has been omitted.

Next, we upper-bound the term Over:
\begin{eqnarray}
\mbox{Over} &=& \sum_i \sum_{v_i\in V_i} \sum_j \cD_i(v_i) \cdot \pi_{ij}(v_i)
\cdot \sum_{v_{-i}\in V_{-i}} (1+\delta)\beta_{ij} (v_{-i})\cD_{-i}(v_{-i})
{\bf I}_{v_{ij} \geq (1+\delta)\beta_{ij}(v_{-i})} \nonumber\\
&\leq& (1+\delta) \sum_i \sum_{v_i\in V_i} \sum_j \cD_i(v_i) \cdot \pi_{ij}(v_i)
\cdot \sum_{v_{-i}\in V_{-i}} \beta_{ij} (v_{-i})\cD_{-i}(v_{-i})
{\bf I}_{v_{ij} \geq \beta_{ij}(v_{-i})} \nonumber\\
&\leq& (1+\delta)Rev(\cM_{IM}(\cI)) \leq  (1+\delta)^{2} Rev(\cM_{EVIM}(\cI)). \label{eq:add:val:over}
\end{eqnarray}
The second inequality above is by Lemma 14 of \cite{cai2016duality}.

Next, we upper-bound the term Tail, which is similar to the analysis of \cite{cai2016duality}, but with
the threshold price $\beta_{ij}(v_{-i})$ scaled up by a factor of $(1+\delta)$.
\begin{eqnarray}
\mbox{Tail} &=& \sum_i \sum_{v_{-i}\in V_{-i}} \cD_{-i}(v_{-i}) \sum_j
\sum_{v_{ij} > (1+\delta)\beta_{ij}(v_{-i}) + r_i(v_{-i})} \cD_{ij}(v_{ij}) \cdot
(v_{ij}-(1+\delta)\beta_{ij}(v_{-i})) \nonumber\\
&&  \quad\quad \quad\cdot \Pr_{v_{i,-j}\sim \cD_{i,-j}}
[\exists k\neq j, v_{ik} - (1+\delta)\beta_{ik}(v_{-i}) \geq v_{ij} - (1+\delta)\beta_{ij}(v_{-i})] \nonumber\\
&\leq& \sum_i \sum_{v_{-i}\in V_{-i}} \cD_{-i}(v_{-i}) \sum_j
\sum_{v_{ij} > (1+\delta)\beta_{ij}(v_{-i}) + r_i(v_{-i})} \cD_{ij}(v_{ij}) \cdot
(v_{ij}-(1+\delta)\beta_{ij}(v_{-i})) \nonumber\\
&& \quad\quad\quad \quad\quad \quad\cdot \Pr_{v_{i,-j}\sim \cD_{i,-j}}
[\exists k\neq j, v_{ik} - \beta_{ik}(v_{-i}) \geq v_{ij} - (1+\delta)\beta_{ij}(v_{-i})] \nonumber\\
&\leq&  \sum_i \sum_{v_{-i}\in V_{-i}} \cD_{-i}(v_{-i}) \sum_j
\sum_{v_{ij} > (1+\delta)\beta_{ij}(v_{-i}) + r_i(v_{-i})} \cD_{ij}(v_{ij}) \cdot
(v_{ij}-(1+\delta)\beta_{ij}(v_{-i})) \nonumber\\
&&\quad \quad\quad \quad \quad\quad \quad\quad \quad \cdot \sum_{k=1}^{m}  \Pr_{v_{ik}\sim \cD_{ik}}  [ v_{ik}  \geq v_{ij} - (1+\delta)\beta_{ij}(v_{-i}) + \beta_{ik}(v_{-i})] \nonumber\\
 &\leq&  \sum_i \sum_{v_{-i}\in V_{-i}} \cD_{-i}(v_{-i}) \sum_j
\sum_{v_{ij} > (1+\delta)\beta_{ij}(v_{-i}) + r_i(v_{-i})}  \cD_{ij}(v_{ij}) \nonumber\\
&&\cdot  \sum_{k=1}^{m} (v_{ij}-(1+\delta)\beta_{ij}(v_{-i}) + \beta_{ik}(v_{-i}))
%
%&& \quad \quad \quad \quad\quad \quad\quad \quad \quad\quad \quad\quad\quad \quad\quad\quad \quad\cdot
 \Pr_{v_{ik}\sim \cD_{ik}}  [ v_{ik}  \geq v_{ij} - (1+\delta)\beta_{ij}(v_{-i}) + \beta_{ik}(v_{-i})] \nonumber\\
&\leq& \sum_i \sum_{v_{-i}\in V_{-i}} \cD_{-i}(v_{-i}) \sum_j
\sum_{v_{ij} > (1+\delta)\beta_{ij}(v_{-i}) + r_i(v_{-i})} \cD_{ij}(v_{ij}) \sum_{k=1}^{m}  r_{ik}(v_{-i})  \nonumber\\
&=& \sum_i \sum_{v_{-i}\in V_{-i}} \cD_{-i}(v_{-i}) \sum_j r_{i}(v_{-i})
\sum_{v_{ij} > (1+\delta)\beta_{ij}(v_{-i}) + r_i(v_{-i})} \cD_{ij}(v_{ij})  \nonumber\\
&\leq& \sum_i \sum_{v_{-i}\in V_{-i}} \cD_{-i}(v_{-i}) \sum_j ((1+\delta)\beta_{ij}(v_{-i}) + r_{i}(v_{-i}))  \nonumber\\
&&\quad \quad\quad \quad  \quad\quad \quad\quad \quad\quad \quad \quad\quad \quad\quad \quad
\cdot \Pr_{v_{ij} \sim \cD_{ij}} [v_{ij} > (1+\delta)\beta_{ij}(v_{-i}) + r_i(v_{-i})] \nonumber\\
&\leq& \sum_i \sum_{v_{-i}\in V_{-i}} \cD_{-i}(v_{-i}) \sum_j  r_{ij}(v_{-i})  =
\sum_i \sum_{v_{-i}\in V_{-i}} \cD_{-i}(v_{-i}) r_i(v_{-i}) = \sum_i r_i \nonumber\\
&=& r \leq Rev(\cM_{IM}(\cI)) \leq  (1+\delta)Rev(\cM_{EVIM}(\cI)). \label{eq:add:val:tail}
\end{eqnarray}
The second inequality above is by union bound. The fourth and sixth inequalities use twice the definition of
$r_{ij}(v_{-i})$, which sets the optimal price to maximize the expected revenue generated by selling item~$j$ to $i$.
The second equality is by the definition of $r_{i}(v_{-i})$.

Finally, we upper-bound the term Core.
The Core part is the most complicated, and we use $\cM_{EVBVCG}$ and $\cM_{EVIM}$ together to upper-bound it.
To do so, below we rewrite Core into a different form.
Similar to~\cite{cai2016duality}, arbitrarily fixing $v_{-i}$ and letting $v_{ij}\sim \cD_{ij}$,
define the following two new random variables, which again scale the threshold price $\beta_{ij}(v_{-i})$ up by a factor of $(1+\delta)$:
$$b_{ij}(v_{-i}) = (v_{ij} -(1+\delta) \beta_{ij}(v_{-i}))
{\bf I}_{v_{ij}\geq (1+\delta)\beta_{ij}(v_{-i})},$$
and
$$c_{ij}(v_{-i}) = b_{ij}(v_{-i})
{\bf I}_{b_{ij}(v_{-i}) \leq r_i(v_{-i})}.$$
Therefore, we have
$$\mbox{Core} = \sum_i \sum_{v_{-i}\in V_{-i}} \cD_{-i}(v_{-i}) \sum_j \bE_{v_{ij} \sim \cD_{ij}}[c_{ij}(v_{-i})].$$
Letting $e_i(v_{-i}) = \sum_{j} \bE_{v_{ij} \sim \cD_{ij}}[c_{ij}(v_{-i})] - 2r_i(v_{-i})$,
% be an entry fee for player $i$.
following the proof of Lemma 12 in \cite{cai2016duality},
 %and applying Chebyshev inequality to our setting,
 we still have
$$\Pr[\sum_j b_{ij}(v_{-i}) \geq e_i(v_{-i})] \geq \frac{1}{2}.$$

In the following, we use the revenue of mechanisms ${\cM}_{EVBVCG}$ and $\cM_{EVIM}$ to bound the Core.
To do so, first note that by the construction of mechanism ${\cM}_{EVBVCG}$,
$$Rev({\cM}_{EVBVCG}(\cI)) = Rev(\cM_{BVCG}(\cI; \cD')).$$
Let $V'_{ij}$ be the support of $\cD'_{ij}$,  $V'_{i}=\times_{j\in M}V'_{ij}$, $V'=\times_{i\in N}V'_{i}$.
As before, given $v_i\sim \cD_i$, denote
 by $v'_{i}\in V'_{i}$ the value vector obtained by rounding $v_i$ down to the support of $\cD'_i$.
  That is, each $v'_{ij}$ is the largest value in $V'_{ij}$ that is less than or equal to $v_{ij}$.
Then,
\begin{eqnarray*}
Rev(\cM_{BVCG}(\cI; \cD'))& \geq& \sum_{i}\bE_{v_{-i}\sim\cD_{-i}} \bE_{v_{i}\sim\cD_{i}}Rev(\cM_{BVCG}(v_{i}',v_{-i}; \cD'))\\
&=&\sum_{i}\bE_{v_{-i}\sim\cD_{-i}} \bE_{v'_{i}\sim \cD'_{i}}Rev(\cM_{BVCG}(v_{i}',v_{-i}; \cD')).
\end{eqnarray*}
The inequality is because each player $i$ can potentially buy item $j$ only when $j$ is in his winning set
(i.e., he is the highest bidder for $j$),
and $i$'s winning set under $v'_{i}$ is a subset of his wining set under $v_{i}$.
Moreover, the entry fee of $i$ is the same under both $(v_i, v_{-i})$ and $(v'_i, v_{-i})$,
%valuation profiles,
as it only depends on $\cD'_i$ and $v_{-i}$.
Thus the revenue inside the expectation does not increase when $v_i$ is replaced by~$v_i'$.
The equality is again because drawing $v_{i}$ from $\cD_{i}$ and then rounding down to $v'_{i}$
is equivalent to drawing $v'_{i}$ from $\cD'_{i}$ directly.

Next, we lower-bound $\sum_{i}\bE_{v_{-i}\sim\cD_{-i}} \bE_{v'_{i}\sim \cD'_{i}}Rev(\cM_{BVCG}(v_{i}',v_{-i}; \cD'))$.
As before, arbitrarily fixing $v_{-i}$ and letting $v'_{ij}\sim \cD'_{ij}$, define
$$b'_{ij}(v_{-i}) = (v'_{ij} - \beta_{ij}(v_{-i})) {\bf I}_{v'_{ij}\geq \beta_{ij}(v_{-i})}.$$
Note that
 $b'_{ij}(v_{-i})$ is a random variable that represents player $i$'s utility in the second price mechanism
on item $j$ with value $v'_{ij}\sim \cD'_{ij}$, when the other players' bids are $v_{-i,j}$.
Also note that $\cM_{BVCG}$ uses the optimal entry fee for each $i$ with respect to $v_{-i}$ and $\cD'$,
which generates expected revenue from $i$ (over $\cD'_i$) greater than or equal to that by using the following entry fee,
$$e'_i(v_{-i}) = \frac{e_i(v_{-i})}{1+\delta}.$$
Now we show player $i$ accepts the entry fee $e'_i(v_{-i}) $ with probability at least $\frac{1}{2}$. Indeed,
for any $v_i$ and the corresponding $v'_i$,
\begin{eqnarray*}
\sum_{j} b'_{ij}(v_{-i})& =& \sum_{j} (v'_{ij} - \beta_{ij}(v_{-i}))
{\bf I}_{v'_{ij}\geq \beta_{ij}(v_{-i})}
\geq \sum_{j} (\frac{v_{ij}}{1+\delta} - \beta_{ij}(v_{-i}))
{\bf I}_{\frac{v_{ij}}{1+\delta} \geq \beta_{ij}(v_{-i})}\\
&=& \frac{1}{{1+\delta} }\sum_{j} (v_{ij}- (1+\delta)\beta_{ij}(v_{-i}))
{\bf I}_{v_{ij} \geq (1+\delta)\beta_{ij}(v_{-i})}
= \frac{1}{1+\delta}\sum_{j}b_{ij}(v_{-i}).
\end{eqnarray*}
The inequality is because $v'_{ij} \geq \frac{v_{ij}}{1+\delta}$,
and because $\frac{v_{ij}}{1+\delta} \geq \beta_{ij}(v_{-i})$ implies $v'_{ij} \geq \beta_{ij}(v_{-i})$.
Therefore
\begin{eqnarray*}
\Pr_{v'_i\sim \cD'_i}[\sum_j b'_{ij}(v_{-i}) \geq e'_i(v_{-i})]
&\geq & \Pr_{v_i\sim \cD_i}[ \frac{1}{1+\delta} \sum_j b_{ij}(v_{-i}) \geq \frac{e_i(v_{-i})}{1+\delta} ]\\
&=&\Pr_{v_i\sim \cD_i}[\sum_j b_{ij}(v_{-i}) \geq e_i(v_{-i})] \geq \frac{1}{2},
\end{eqnarray*}
as desired.
Thus we have
\begin{eqnarray*}
&&Rev(\cM_{EVBVCG}(\cI))\\
& \geq &\sum_{i}\bE_{v_{-i}\sim\cD_{-i}} \bE_{v'_{i}\sim \cD'_{i}}Rev(\cM_{BVCG}(v_{i}',v_{-i}; \cD'))\\
& \geq & \frac{1}{2} \sum_i \sum_{v_{-i}\in V_{-i}} \cD_{-i}(v_{-i}) \cdot \frac{e_i(v_{-i})}{1+\delta} \\
&=& \frac{1}{2(1+\delta)} \sum_i \sum_{v_{-i}\in V_{-i}} \cD_{-i}(v_{-i}) \left( \sum_{j} \bE_{v_{ij} \sim \cD_{ij}}[c_{ij}(v_{-i})] - 2r_i(v_{-i}) \right)\\
&=& \frac{1}{2(1+\delta)}\mbox{Core} -\frac{r}{1+\delta}.
\end{eqnarray*}
That is,
\begin{eqnarray}
\mbox{Core} &\leq& 2(1+\delta) Rev(\cM_{EVBVCG}(\cI))+ 2r \nonumber \\
&\leq& 2(1+\delta) \left[ Rev(\cM_{EVBVCG}(\cI))+ Rev(\cM_{EVIM}(\cI))\right]. \label{eq:add:val:core}
\end{eqnarray}

Combining Inequalities \ref{eq:add:val:decompose}, \ref{eq:add:val:single}, \ref{eq:add:val:under}, \ref{eq:add:val:over},
\ref{eq:add:val:tail} and \ref{eq:add:val:core},
\begin{eqnarray*}
OPT(\cI) &\leq& (1+\delta)^{2}\left( 2 Rev(\cM_{EVBVCG}(\cI))+ 6 Rev(\cM_{EVIM}(\cI))\right)\\
&=& (1+\epsilon)\left( 2 Rev(\cM_{EVBVCG}(\cI))+ 6 Rev(\cM_{EVIM}(\cI))\right).
\end{eqnarray*}
Accordingly, by running mechanism $\cM_{EVBVCG}$ with probability $\frac{1}{4}$
and mechanism $\cM_{EVIM}$ with probability $\frac{3}{4}$,
the expected revenue of mechanism $\cM_{EVA}$ is
%the maximum of $Rev(\cM_{EVBVCG}(\cI))$ and $Rev(\cM_{EVIM}(\cI))$,
$$Rev({\cM}_{EVA}(\cI))
\geq \frac{1}{8(1+\epsilon)} OPT(\cI).
$$
This finishes the proof of Theorem \ref{thm:veca}.
\end{proof}

\section{The Query Complexity for Unbounded Distributions}\label{sec:queryunbounded}

Next, we construct efficient query mechanisms for arbitrary distributions whose supports can be unbounded.
For a mechanism to approximate the optimal Bayesian revenue using {\em finite} non-adaptive queries to such distributions,
it is intuitive that
some kind of {\em small-tail assumption} for the distributions
is needed.
Indeed, given any mechanism with query complexity $C$, there always exists a distribution that
has a sufficiently small probability mass around a sufficiently large value, such that the mechanism cannot find it using
$C$ queries.
If this probability mass is where all the revenue comes (e.g., all the remaining probability mass is around value 0),
then the mechanism cannot be a good approximation to $OPT$.
Following the literature  \cite{roughgarden2016ironing, devanur2016sample},
the small-tail assumptions are such that
the expected revenue generated
from the ``tail'' of the distributions
is negligible compared to the optimal revenue; see Section \ref{sec:small-tail}. Distributions with bounded supports 
automatically satisfy these assumptions, so are regular distributions in single-item auctions.

Even with small-tail assumptions,
it is hard to generate good revenue from unbounded distributions with finite {\em value} queries.
Instead, we show it is sufficient to use only
{\em quantile} queries.
As before, the seller uses our quantile-query algorithm $\cA_{Q}$ (defined in Section \ref{subsec:Aq})
to learn a distribution $\cD'$ that approximates $\cD$, and then reduces to simple mechanisms under $\cD'$.
However, even for single-item auctions, it is not so simple to show why the combination of these two parts work.
Indeed, under value queries it is easy to ``under-price'' the item so that the probability of sale is the same
as in the optimal mechanism for $\cD$.
Under quantile queries, under-pricing may lose a large amount of revenue because, for given quantiles,
there is no guarantee on where the corresponding values are.
Instead, the main idea in using quantile queries is to {\em ``over-price''} the item.
This is risky in many auction design scenarios, because it may significantly reduce the probability of sale, and thus lose a lot of revenue.
We prove a key technical lemma in Lemma \ref{lem:single:discretized} for single-item auctions, 
where we show that by discretizing the quantile space properly,
we can over-price the item while almost preserving the probability of sale as in the optimal mechanism under $\cD$.
In Lemma \ref{lem:additive:bvcg}, we prove another technical lemma showing that proper over-pricing can also be done in additive auctions.

Note that we can get the median of a distribution simply by querying the quantile $1/2$.
Then, for single-parameter auctions with regular distributions, using the parametric mechanism in \cite{azar2013optimal} we get the same revenue as theirs.
However, our query mechanisms can handle multi-parameter auctions and irregular distributions.

\subsection{Small-Tail Assumptions}
\label{sec:small-tail}

A Bayesian auction instance $\cI$ satisfies the {\em Small-Tail Assumption 1} if
there exists a function%
\footnote{If computation complexity is a concern, then one can further
require that the function is efficiently computable.} $h: (0,1) \to (0,1)$ such that,
for any constant $\delta_1 \in (0,1)$ and any BIC mechanism $\cM$,
letting $\epsilon_1 = h(\delta_1)$, we have
\begin{equation}\label{eq:tail:1}
\mathop\bE\limits_{v \sim \cD} {\bf I}_{\exists i,j, q_{ij}(v_{ij}) \leq \epsilon_1}
Rev(\cM(v;\cI))
\leq \delta_1 OPT(\cI).
\end{equation}
Here
$q_{ij}(v_{ij})$ is the quantile of $v_{ij}$ under distribution $\cD_{ij}$,
$Rev(\cM(v;\cI))$ is the revenue of $\cM$ under the Bayesian instance $\cI$ when the true valuation profile is $v$, and ${\bf I}$ is the indicator function.
For discrete distributions, Equation~\ref{eq:tail:1}
is imposed on the $\epsilon_1$ probability mass over the highest values.

Equation \ref{eq:tail:1}
immediately implies the following weaker {\em Small-Tail Assumption 2}:
there exists a
function $h: (0,1) \to (0,1)$ such that,
for any constant $\delta_1 \in (0,1)$ ,
letting $\epsilon_1 = h(\delta_1)$, we have
\begin{equation}\label{eq:tail:2}
\mathop\bE\limits_{v \sim \cD} {\bf I}_{\exists i,j, q_{ij}(v_{ij}) \leq \epsilon_1}
Rev_{OPT}(v;\cI) \leq \delta_1 OPT(\cI).
\end{equation}
Here $Rev_{OPT}(v; \cI)$ is the revenue generated by the
optimal BIC mechanism for $\cI$
when the true valuation profile is $v$.
Similar assumptions are widely adopted in sampling mechanisms
to deal with irregular distributions with unbounded supports.

\subsection{The Quantile-Query Algorithm}
\label{subsec:Aq}

We define our quantile-query algorithm $\cA_{Q}$ in Algorithm \ref{QueryProcedure:quantile}.
As before, $D\in \Delta(\bR)$ is the distribution to be queried.
The algorithm takes two parameters, the tail length~$\epsilon_1$ and the precision factor $\delta$,
makes $O(\log_{1+\delta}\frac{1}{\epsilon_1})$ quantile queries to the oracle, and
then returns a discrete distribution $D'$.

\begin{algorithm}[htbp]
  \caption{\hspace{-3pt} The Quantile-Query Algorithm $\cA_{Q}$}
 \label{QueryProcedure:quantile}
   \begin{algorithmic}[1]
\REQUIRE  the tail length $\epsilon_1$ and the precision factor $\delta$.

\STATE Let $k = \lceil \log_{1+\delta}\frac{1}{\epsilon_1}\rceil$ and define the {\em quantile vector} as
$q=(q_{0},q_{1},\dots, q_{k-1}, q_{k})
= (1,\epsilon_1(1+\delta)^{k-1}, \dots, \epsilon_1(1+\delta), \epsilon_1)$.\\

\STATE Query the oracle for $D$ with $q$, and receive a non-decreasing value vector
$(v_l)_{l\in \{0, \dots, k\}}$.\\

\STATE Construct a distribution $D'$ as follows:
$D'(v_{l}) = q_l - q_{l+1}$ for each $l\in \{0, \dots, k\}$,
where $q_{k+1}\triangleq 0$. \\

\ENSURE Distribution $D'$.
 \end{algorithmic}
\end{algorithm}

\subsection{Single-Item Auctions}
\label{sec:single}

Mechanism \ref{alg:single:rep} defines our {\em efficient quantile Myerson} mechanism $\cM_{EQM}$.

\begin{algorithm}[htbp]
\floatname{algorithm}{Mechanism}
  \caption{\hspace{-3pt} Efficient Quantile Myerson Mechanism $\cM_{EQM}$}
 \label{alg:single:rep}
\begin{algorithmic}[1]
\STATE Given $\epsilon > 0$, run algorithm $\cA_{Q}$ with $\delta = \frac{\epsilon}{3}$ and $\epsilon_{1} = h(\frac{2\epsilon}{3(1+\epsilon)})$ (i.e., $\delta_1 = \frac{2\epsilon}{3(1+\epsilon)}$ for Small Tail Assumption 2),
for each player $i$'s distribution~$\cD_{i}$. Denote by $\cD'_i$ the returned distribution.
Let $\cD'=\times_{i\in N}\cD'_{i}$.

\STATE Run Myerson's mechanism $\cM_{MRS}$ with $\cD'$ and the players' reported values, $b=(b_{i})_{i\in N}$,
to get allocation $x = (x_{i})_{i\in N}$ and price profile $p = (p_i)_{i\in N}$ as the outcome.
\end{algorithmic}
\end{algorithm}

\begin{theorem}
\label{thm:single}
$\forall \epsilon > 0$, any single-item instance $\cI = (N,M,\cD)$
satisfying Small-Tail Assumption~2,
%as in Equation \ref{eq:tail:2},
 $\cM_{EQM}$ is DSIC, has query complexity
$O(- n\log_{1+\frac{\epsilon}{3}} {h(\frac{2\epsilon}{3(1+\epsilon)})})$, and
$Rev(\cM_{EQM}(\cI)) \geq \frac{OPT(\cI)}{1+\epsilon}$.
\end{theorem}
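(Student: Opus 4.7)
The plan is to reduce to Myerson's mechanism on the learnt distribution $\cD'$, use stochastic dominance plus revenue monotonicity to bound the true revenue below by $OPT(\cI')$, and then lower-bound $OPT(\cI')$ in terms of $OPT(\cI)$ using Small-Tail Assumption~2 together with the over-pricing lemma (Lemma~\ref{lem:single:discretized}). DSIC-ness and query complexity are immediate: $\cA_Q$ chooses its queries without looking at any bid and $\cM_{MRS}$ is DSIC, so $\cM_{EQM}$ is DSIC; and $\cA_Q$ issues $k+1 = O(\log_{1+\delta}(1/\epsilon_1))$ queries per bidder, which with $\delta = \epsilon/3$ and $\epsilon_1 = h(\tfrac{2\epsilon}{3(1+\epsilon)})$ gives the stated total $O(-n\log_{1+\epsilon/3} h(\tfrac{2\epsilon}{3(1+\epsilon)}))$.

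For the revenue, I would first observe that each $\cD_i$ stochastically dominates $\cD'_i$: the algorithm collapses all mass in every quantile window $[q_{l+1}, q_l)$ onto the value $v_l = v(q_l)$, which lies at or below the entire window, and drops the top-$\epsilon_1$ tail altogether. Since the Myerson mechanism designed for $\cD'$ is DSIC and $\cD \succeq \cD'$, the revenue-monotonicity theorem of \cite{devanur2016sample} yields
\[
Rev(\cM_{EQM}(\cI)) = Rev(\cM_{MRS}(\cI;\cD')) \ \geq\ Rev(\cM_{MRS}(\cI')) \ =\ OPT(\cI'),
\]
so it suffices to show $OPT(\cI') \geq OPT(\cI)/(1+\epsilon)$. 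I would split this into two steps. Step~(a): by Small-Tail Assumption~2 with $\delta_1 = \tfrac{2\epsilon}{3(1+\epsilon)}$, the expected revenue of the optimal mechanism restricted to the event $\{\forall i,\ q_i(v_i) > \epsilon_1\}$ is at least $(1-\delta_1)\,OPT(\cI)$. Step~(b): by Lemma~\ref{lem:single:discretized}, the optimal mechanism on this non-tail part of $\cD$ can be mimicked on $\cD'$ by rounding each bidder's quantile down to the next grid point $q_{l+1}$ --- equivalently, raising each critical price up to $v_{l+1}$; since consecutive break-points satisfy $q_l/q_{l+1} \leq 1+\delta$, this over-pricing loses at most a factor $1+\delta = 1+\epsilon/3$, and since the survival functions of $\cD$ and $\cD'$ agree at every grid value $v_l$ by construction, the mimicked mechanism attains the same expected revenue under $\cD'$ as the over-priced mechanism does under the non-tail part of $\cD$. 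Combining, $OPT(\cI') \geq \tfrac{1-\delta_1}{1+\delta}\,OPT(\cI) = \tfrac{1}{1+\epsilon}\,OPT(\cI)$; the parameter choices $(\delta,\delta_1) = (\tfrac{\epsilon}{3},\tfrac{2\epsilon}{3(1+\epsilon)})$ are tuned so this identity holds.

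The main obstacle is Step~(b) when $n > 1$, because Myerson's mechanism is not a simple posted-price: raising a bidder's reserve interacts with which bidder wins at any given profile. I expect Lemma~\ref{lem:single:discretized} to work directly in quantile space, rounding every bidder's quantile simultaneously to the next grid point and arguing via a pointwise coupling that the resulting allocation-and-payment rule on $\cD'$ preserves the winning bidder while dropping the probability of sale by at most an overall factor $(1+\delta)$. The technical heart is to avoid a cascade of $(1+\delta)^n$ or $n(1+\delta)$ losses when aggregating the per-bidder quantile approximations --- which is precisely what the geometric, rather than uniform, spacing of the quantile grid is designed to prevent.
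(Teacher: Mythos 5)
Your proposal is correct and follows essentially the same route as the paper: DSIC and the query count are immediate from the construction, and the revenue bound is obtained by combining $Rev(\cM_{EQM}(\cI))=Rev(\cM_{MRS}(\cI;\cD'))$, stochastic dominance plus revenue monotonicity to pass to $Rev(\cM_{MRS}(\cI'))$, and Lemma~\ref{lem:single:discretized}. Your reading of how that lemma works (per-player quantile rounding with a single $(1+\delta)$ loss, combined with the small-tail bound to give $\tfrac{1-\delta_1}{1+\delta}=\tfrac{1}{1+\epsilon}$) matches the paper's appendix proof.
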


Before proving Theorem \ref{thm:single},
we first claim the following key lemma via an imaginary Bayesian mechanism that ``over-prices''.
Recall $\cI' = (N, M, \cD')$ is the instance under $\cD'$.

\begin{lemma}\label{lem:single:discretized}
$Rev(\cM_{MRS}(\cI')) \geq \frac{1}{1+\epsilon}OPT(\cI).$
\end{lemma}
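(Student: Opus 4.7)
The plan is to exhibit an imaginary DSIC auction $\tilde{\cM}$ for the discretized instance $\cI'$ whose expected revenue is already at least $OPT(\cI)/(1+\epsilon)$; since Myerson's auction $\cM_{MRS}$ is Bayesian-optimal for any single-item Bayesian instance, the conclusion $Rev(\cM_{MRS}(\cI')) \geq Rev(\tilde{\cM}(\cI'))$ follows immediately.

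First I will record two facts about the output of $\cA_Q$. The quantile grid $1 = q_0 > q_1 > \cdots > q_k = \epsilon_1$ obeys $q_{l-1}/q_l \leq 1+\delta$ for every $l \geq 1$ (the case $l = 1$ uses the choice $k = \lceil \log_{1+\delta}(1/\epsilon_1)\rceil$). By a telescoping identity, each grid value $v_{i,l}$ has quantile exactly $q_l$ under $\cD'_i$, while its quantile under $\cD_i$ is at least $q_l$ by the oracle's tie-breaking convention. These facts produce a natural coupling between $\cD$ and $\cD'$: draw $v \sim \cD$ and set $v'_i := v_{i,l(i)}$ for $l(i) := \max\{l : q_l \geq q_i(v_i)\}$; then $v' \sim \cD'$ and pointwise $v'_i \leq v_i$.

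Next, let $\cM^*$ be the DSIC-optimal auction for $\cI$ with threshold-payment functions $\pi_i(v_{-i})$. I will define $\tilde{\cM}$ on $\cI'$ as follows: given bids $b' \in \prod_i \mathrm{supp}(\cD'_i)$, use $\cM^*$'s allocation rule on $b'$ but charge the rounded-up threshold $\tilde{\pi}_i(b'_{-i}) := \min\{v_{i,l} : v_{i,l} \geq \pi_i(b'_{-i})\}$ (with no allocation if no such $l$ exists). Because $b'_i$ is itself a grid point, $b'_i \geq \pi_i(b'_{-i})$ is equivalent to $b'_i \geq \tilde{\pi}_i(b'_{-i})$, so the allocation inherits monotonicity from $\cM^*$ and $\tilde{\cM}$ is DSIC for $\cI'$.

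For the revenue accounting, I will use Myerson's threshold decomposition to write $OPT(\cI) = \sum_i \bE_{v_{-i} \sim \cD_{-i}}[\pi_i(v_{-i}) q_i(\pi_i(v_{-i}))]$ and $Rev(\tilde{\cM}(\cI')) = \sum_i \bE_{v'_{-i} \sim \cD'_{-i}}[\tilde{\pi}_i(v'_{-i}) q'_i(\pi_i(v'_{-i}))]$, where $q'_i$ is the quantile function under $\cD'_i$. On the ``good'' event $E$ that every relevant threshold has quantile above $\epsilon_1$, the grid-ratio bound yields $q'_i(\tilde{\pi}_i) \geq q_i(\pi_i)/(1+\delta)$ while $\tilde{\pi}_i \geq \pi_i$, so the $\tilde{\cM}$-summand on $E$ is at least $\pi_i(v'_{-i}) q_i(\pi_i(v'_{-i}))/(1+\delta)$. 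Small-Tail Assumption~2 bounds the revenue loss on the complementary event by $\delta_1 \cdot OPT(\cI)$, and plugging in $\delta = \epsilon/3$ and $\delta_1 = 2\epsilon/(3(1+\epsilon))$ simplifies the overall ratio $(1-\delta_1)/(1+\delta)$ to exactly $1/(1+\epsilon)$.

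The hard part will be the per-player change of measure between $\bE_{v'_{-i} \sim \cD'_{-i}}$ and $\bE_{v_{-i} \sim \cD_{-i}}$ needed to compare the sum against $OPT(\cI)$: $\pi_i$ is monotone in $v_{-i}$ while $q_i(\pi_i(\cdot))$ is non-increasing, so the pointwise comparison of the two integrands is not immediate and the product $\pi_i \cdot q_i(\pi_i)$ can move in either direction under the coupling. Resolving this will require a joint coupling argument rather than a per-coordinate one, and the tail event on which Small-Tail Assumption~2 is invoked must be phrased in terms of the thresholds $\pi_i(v_{-i})$ rather than raw values $v_{ij}$. All remaining steps reduce to the grid-ratio inequality $q_{l-1}/q_l \leq 1+\delta$ and direct arithmetic.
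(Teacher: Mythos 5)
Your high-level strategy is the same as the paper's: build an imaginary DSIC mechanism for $\cI'$, lower-bound its revenue by $OPT(\cI)/(1+\epsilon)$ using the grid ratio $q_{l-1}/q_l\leq 1+\delta$ together with Small-Tail Assumption~2, and finish by the optimality of $\cM_{MRS}$ on $\cI'$. The arithmetic $(1-\delta_1)/(1+\delta)=1/(1+\epsilon)$ and the ``over-pricing'' use of the quantile grid also match. But your construction of the imaginary mechanism differs from the paper's in a way that creates exactly the obstruction you flag at the end, and you leave that obstruction unresolved --- so as written the proof has a genuine gap.

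Concretely, you compute the thresholds $\tilde{\pi}_i(b'_{-i})$ from the \emph{discretized} opponent profile, which forces you to compare $\sum_i \bE_{v'_{-i}\sim\cD'_{-i}}[\tilde{\pi}_i(v'_{-i})\,q'_i(\tilde{\pi}_i(v'_{-i}))]$ against $\sum_i \bE_{v_{-i}\sim\cD_{-i}}[\pi_i(v_{-i})\,q_i(\pi_i(v_{-i}))]$, a change of measure on $v_{-i}$ that you correctly observe does not follow from any per-coordinate monotonicity (the product $\pi_i\cdot q_i(\pi_i)$ is not monotone under the coupling). The paper avoids this entirely: its imaginary mechanism $\cM^*$ takes the reported $v'$ and internally \emph{resamples} a profile $v\sim\cD$ conditional on $v$ rounding down to $v'$ (a randomized resampling scheme that inverts a randomized round-down scheme, needed because $\cD_i$ may be discrete and the quantile oracle's tie-breaking puts grid points inside the support of $\cD_i$). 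It then runs $\cM_{MRS}$ for the \emph{true} prior $\cD$ on the resampled $v$ and charges the winner $p_i(v_{-i};\cD)$ only if his discretized report clears it. Because the resampled $v_{-i}$ is distributed exactly as $\cD_{-i}$, the revenue is $\sum_i\bE_{v_{-i}\sim\cD_{-i}}p_i(v_{-i};\cD)\Pr_{v_i\sim\cD_i}[v_i^-\geq p_i(v_{-i};\cD)]$, and the only discrepancy with $OPT(\cI)$ sits in the winner's own indicator. That single-coordinate discrepancy is handled by the inequality $\Pr[v_i\geq p_i\mid q_i(v_i)>\epsilon_1]\leq(1+\delta)\Pr[v_i^-\geq p_i]$, after which Small-Tail Assumption~2 applies verbatim to the event $\{\exists i:\ q_i(v_i)\leq\epsilon_1\}$ --- no reformulation of the tail event in terms of thresholds is needed. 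To repair your proof you would either have to carry out the joint coupling you defer, or switch to the paper's resample-then-price construction (and also verify DSIC of that mechanism, which requires the resampled $v_i$ to be stochastically monotone in the report $v'_i$).
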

\begin{proof}
For each player $i$, denote the support of $\cD'_i$ by $V'_i = (v'_{i;l})_{l\in \{0, \dots, k\}}$.
We first define a way to couple the values $v'_i\sim \cD'_i$ with the values $v_i\sim \cD_i$.

\vspace{-10pt}
\subparagraph{The randomized round-down scheme.}
For any value $v_i\geq v'_{i;0}$,
let $v_i^{-}$ be $v_i$ {\em rounded down} to the support of $\cD'_i$, such that $v_i^-$ is distributed according to $\cD'_i$ whenever
$v_i$ is distributed according to $\cD_i$.
% as before (i.e., the largest value in $V'_i$ that is less than or equal to $v_i$),
Recall that
%in Section \ref{sec:bounded},
under value queries, $v_i^-$ is simply the largest value in $V'_i$ that is less than or equal to $v_i$,
%and  is distributed according to $\cD'_i$ whenever $v_i$ is
%distributed according to $\cD_i$,
no matter whether $\cD_i$ is continuous or discrete.
Under quantile queries, when $\cD_i$ is continuous, the same deterministic round-down scheme still works.
However, the situation is more subtle when $\cD_i$ is discrete,
 and we
need a randomized round-down scheme to ensure the relationship between $v_i$ and $v_i^-$.
More precisely, by the definition of quantile queries,
$V'_i$ is a subset of $\cD_i$'s support.
% when $\cD_i$ is discrete.
If $v_i$ is not in $V'_i$, then it is still deterministically rounded down as before.
If $v_i$ is in~$V'_i$, say $v_i = v'_{i;l}$, then by the definition of quantile queries and the construction of $\cD'_i$, we have
$\Pr_{x\sim \cD_i}[x\geq v_i] \geq q_l = \Pr_{x\sim \cD'_i}[x\geq v_i]$.
In this case, $v_i$ is rounded down to $v'_{i; l-1}$ (i.e., $v_i^- = v'_{i; l-1}$)
with probability
$$\frac{\Pr_{x\sim \cD_i}[x\geq v_i] - \Pr_{x\sim \cD'_i}[x\geq v_i]}{\cD_i(v_i)},$$
and to $v'_{i; l}$ (i.e.,  $v_i^- = v'_{i; l}$) with probability
$$1-\frac{\Pr_{x\sim \cD_i}[x\geq v_i] - \Pr_{x\sim \cD'_i}[x\geq v_i]}{\cD_i(v_i)}.$$
Following this scheme, it is not hard to verify that
$\Pr_{v_i\sim \cD_i}[v_i^- \geq v'_{i;l}] = q_l$ for any $l\in \{0, \dots, k\}$, thus $v_i^-$ is distributed according to $\cD'_i$, as desired.

No matter what $v_i^-$ is, let $v_i^+$ be
the smallest value in $V'_{i}$ that is strictly larger than $v_i^-$
(if no such value exists, then $v_i^{+}=+\infty$).
That is,
%$v_i^-$ is $v_i$ rounded down to the support of $\cD'_i$ as before, while
$v_i^+\geq v_i$ and $v_i^+$ is $v_i$ ``rounded up'', which was not needed under value queries and is new for quantile queries.

\vspace{-10pt}
\subparagraph{The randomized resampling scheme.}
For any value $v'_i\sim \cD'_i$, let $v_i$ be {\em resampled} from $\cD_i$
conditional on ``$v_i$ rounded down to~$v'_i$'',
so that $v_i$ is distributed according to $\cD_i$ whenever $v'_i$ is distributed
according to $\cD'_i$.
Again, under value queries, the resampling
is simply conditional on $v_i \in [v'_{i; l}, v'_{i;l+1})$ when $v'_i = v'_{i; l}$,
no matter whether $\cD_i$ is continuous or discrete.
Under quantile queries, this resampling scheme still works when $\cD_i$ is continuous.
When $\cD_i$ is discrete,
we need to ``undo'' the randomized round-down scheme defined above.
More precisely, letting $v'_i = v'_{i;l}$, $v_i$ is set to be $v'_{i;l+1}$
with probability
$$p_1 = \frac{\Pr_{x\sim \cD_i}[x \geq v'_{i;l+1}] - q_{l+1}}{\cD'_i(v'_{i;l})};$$
is resampled from $\cD_i$ conditional on $v_i \in (v'_{i; l}, v'_{i;l+1})$
with probability
$$p_2 = \frac{\Pr_{x\sim \cD_i}[v'_{i;l} < x < v'_{i;l+1}]}{\cD'_i(v'_{i;l})};$$
and is set to be $v'_{i;l}$
with probability
$$p_3 = \frac{\Pr_{x\sim \cD_{i}}[x\leq v'_{i;l}] - \Pr_{x\sim \cD'_{i}}[x < v'_{i;l}]}{\cD'_i(v'_{i;l})}
= \frac{\cD_i(v'_{i;l}) - \Pr_{x\sim \cD_i}[x \geq v'_{i;l}] + q_l}{\cD'_i(v'_{i;l})}.$$
Following this resampling scheme, it is not hard to verify that $v_i$ is distributed according to $\cD_i$
whenever $v'_i$ is distributed according to $\cD'_i$.

Given the round-down and the resampling schemes above,
we consider the Bayesian mechanism~$\cM^{*}$ defined in Mechanism \ref{alg:imaginary} for~$\cI'$, and compare its revenue with that of $\cM_{MRS}$.
We first claim that $\cM^*$ is a DSIC mechanism.
Because $\cM_{MRS}$ is DSIC,
each $x_i$ is monotone in $v_i$.
Although $v_i$ is a random variable given~$v'_i$,
it is easy to see that for any two different values $v'_i\in V'_i$ and $\hat{v}'_i\in V'_i$,
the corresponding resampled values $v_i$ and $\hat{v}_i$
are such that $v'_i< \hat{v}'_i$ implies
%satisfy that
$v_i\leq  \hat{v}_i$.
Thus $x_i$ is monotone in $v'_i$ as well.
%thus in $v'_i$;
Moreover, let $\theta_i$ be player $i$'s threshold payment
in $\cM_{MRS}$ given $v_{-i}$ and $\cD$.
If $v'_i> \theta_i$ then $v_i> \theta_i$,
thus player $i$ gets the item at price $p_i = \theta_i$.
If $v'_i< \theta_i$ then player $i$ does not get the item and $p_i = 0$, no matter whether $v_i<\theta_i$ or not.
Accordingly, $\theta_i$ is also player $i$'s threshold payment in $\cM^*$ under $v_{-i}$ and $\cD'$.
Since $v_{-i}$ does not depend on $v'_i$,
$\cM^*$ is DSIC as desired.

\begin{algorithm}
\floatname{algorithm}{Mechanism}
  \caption{\hspace{-3pt} A Bayesian mechanism ${\cM}^{*}$ for instance $\cI'$}
  \label{alg:imaginary}
 \begin{algorithmic}[1]
\STATE Each player $i$ reports his value $v'_{i}$, and the mechanism discards the report that is not in $V'_i$.

   \STATE\label{step:unbounded:single:1}
   For each player $i$,
   %letting $v'_i = v'_{i;l}$,
   generate value $v_{i}$ according to $v'_i$ using our resampling scheme.
   %according to $\cD_{i}$ conditional on $v_i \in[v'_{i;l}, v'_{i;l+1})$.

 \STATE Run $\cM_{MRS}$ with the value profile $v$ and the prior distribution $\cD$, to get the price $p_i$ and the allocation $x_i\in \{0, 1\}$ for each player $i$.

  \STATE If $x_i = 1$ and $p_{i}\leq v'_{i}$, sell the item to $i$ and charge him $p_{i}$; otherwise, set $x_i = 0$ and $p_i=0$.
 \end{algorithmic}
\end{algorithm}

To analyze the revenue of $\cM^*$,
note that by construction,
when $v'_{i}$ is distributed according to~$\cD'_{i}$,
the resampled $v_{i}$ in $\cM^*$ is distributed according to $\cD_{i}$.
Moreover, each $v'_i$ is distributed as if we first sample $v_i$ from $\cD_i$ and then setting $v'_i = v_i^-$.

Thus, mechanism $\cM^{*}$ on instance $\cI'$ essentially generates the same expected  revenue as $\cM_{MRS}$ on instance $\cI$,
except for the case when $v'_{i} < p_{i} \leq v_{i}$ for the winner $i$.
Fortunately, we are able to upper-bound the probability of this event and thus upper-bound the expected revenue loss.
More precisely, for each player $i$, we write $p_i$ as $p_i(v_{-i}; \cD)$ to emphasize that it is the threshold
payment for~$i$ given $v_{-i}$ and $\cD$, and does not depend on $v_i$ or $v'_i$.
We have
\begin{eqnarray}
&& Rev(\cM^{*}(\cI'))
=\sum\limits_{i}
\mathop\mathbb{E}\limits_{v_{-i}\sim \cD_{-i}}
\mathop\mathbb{E}\limits_{v_{i}\sim \cD_{i}}
p_{i}(v_{-i}; \cD) {\bf I}_{v_{i}^-\geq p_{i}(v_{-i}; \cD)} \nonumber \\
&=& \sum\limits_{i}
\mathop\mathbb{E}\limits_{v_{-i}\sim \cD_{-i}}
p_{i}(v_{-i}; \cD) \cdot \Pr_{v_i\sim \cD_i}[v^-_{i}\geq p_{i}(v_{-i}; \cD)]. \hspace{15pt} \label{eqn:3}
\end{eqnarray}
Here the first equality holds because of the relationship between $\cD'$ and $\cD$ as established by our rounding and resampling schemes,
and because each player $i$ in $\cM^*$ pays the same threshold price as in
mechanism $\cM_{MRS}$
whenever $v'_{i}$ is at least the threshold, and pays 0 otherwise.
By the construction of the distribution~$\cD'$, we have the following claim, 
which is proved in Appendix \ref{app:unbounded}. 
\begin{claim}\label{clm:rev:gap}
$\Pr_{v_i\sim \cD_i}[v_{i}\geq p_{i}(v_{-i}; \cD) | {q_{i}(v_{i}) > \epsilon_1}]
\leq (1+\delta) \Pr_{v_i\sim \cD_i}[v^-_{i}\geq p_{i}(v_{-i}; \cD)].$
\end{claim}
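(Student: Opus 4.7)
My plan is a case analysis on where $p_i\triangleq p_i(v_{-i};\cD)$ lies relative to the support $V'_i=\{v'_{i;0},\ldots,v'_{i;k}\}$ of $\cD'_i$. Throughout I will lean on two elementary properties of the quantile grid produced by $\cA_Q$: (a) $q_l/q_{l+1}=1+\delta$ for every $l\geq 1$, and $q_0/q_1\leq 1+\delta$ as well; (b) the choice $k=\lceil\log_{1+\delta}(1/\epsilon_1)\rceil$ pins $q_1$ inside $[\tfrac{1}{1+\delta},1]$, which yields $(1+\delta)q_{l+1}\leq 1$ for every $l\geq 1$ and $(1+\delta)q_1\geq 1$. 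Two boundary cases then dispatch themselves. If $p_i>v'_{i;k}$, the $\argmax$ definition of $v'_{i;k}$ forces every $v_i\geq p_i$ to satisfy $q_i(v_i)<\epsilon_1$, so the conditioning event zeros out the numerator of the LHS. If $p_i\leq v'_{i;0}$, both $v_i\geq p_i$ and $v_i^-\geq p_i$ hold almost surely and the inequality is trivial.

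The substantive case is $p_i\in(v'_{i;l},v'_{i;l+1}]$ for some $l\in\{0,\ldots,k-1\}$. Because $v_i^-$ takes values in $V'_i$ and the smallest element of $V'_i$ at least $p_i$ is $v'_{i;l+1}$, the randomized round-down scheme gives $\Pr[v_i^-\geq p_i]=\Pr[v_i^-\geq v'_{i;l+1}]=q_{l+1}$, so the right-hand side equals $(1+\delta)q_{l+1}$. For the left-hand side I will use the inclusion chain $\{q_i(v_i)\leq\epsilon_1\}\subseteq\{v_i\geq v'_{i;k}\}\subseteq\{v_i\geq p_i\}$ (the last inclusion uses $p_i\leq v'_{i;l+1}\leq v'_{i;k}$), which yields the clean identity
\[
\Pr[v_i\geq p_i,\,q_i(v_i)>\epsilon_1]\;=\;\Pr[v_i\geq p_i]-\Pr[q_i(v_i)\leq\epsilon_1].
\]
Writing $a=\Pr[v_i\geq p_i]$ and $b=\Pr[q_i(v_i)\leq\epsilon_1]\leq\epsilon_1$, the claim becomes $\tfrac{a-b}{1-b}\leq(1+\delta)q_{l+1}$, equivalently $a\leq(1+\delta)q_{l+1}+b\bigl(1-(1+\delta)q_{l+1}\bigr)$. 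Combining $a<q_l\leq(1+\delta)q_{l+1}$ (the strict bound coming from $p_i>v'_{i;l}$ and the $\argmax$ property of $v'_{i;l}$) with property (b) finishes the subcase $l\geq 1$; the edge subcase $l=0$ is handled separately by observing that the LHS, being a probability, is at most $1$, while $(1+\delta)q_1\geq 1$ by (b).

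The main technical delicacy will be the discrete case, where $\Pr_{x\sim\cD_i}[x\geq v'_{i;k}]$ may strictly exceed $\epsilon_1$ and $v_i^-$ is produced by the randomized rounding defined earlier in the section. I will need to verify that both the identity $\Pr[v_i^-\geq v'_{i;l+1}]=q_{l+1}$ and the inclusion $\{q_i(v_i)\leq\epsilon_1\}\subseteq\{v_i\geq v'_{i;k}\}$ survive this randomization; both follow directly from the fact that the scheme was designed so that $v_i^-\sim\cD'_i$, so the remaining issue is careful bookkeeping of strict versus non-strict inequalities at the atoms rather than any new idea.
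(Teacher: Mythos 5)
Your proof is correct and follows essentially the same route as the paper's: a case split on where $p_i$ sits relative to the quantile grid, the identity $\Pr[v_i\ge p_i,\,q_i(v_i)>\epsilon_1]=\Pr[v_i\ge p_i]-\Pr[q_i(v_i)\le\epsilon_1]$ obtained from the tail inclusion, and the grid property $q_{l}\le(1+\delta)q_{l+1}$. The only cosmetic difference is that you bound $\Pr_{v_i\sim\cD_i}[v_i\ge p_i]<q_l$ directly from the oracle's $\argmax$ semantics, whereas the paper reaches the same quantity by passing through the round-up value $v_i^+$ and the correspondence between consecutive grid points for $v_i^+$ and $v_i^-$.
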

Combining Equation \ref{eqn:3}, Claim \ref{clm:rev:gap} and Small Tail Assumption 2, we are able to lower-bound  the revenue of $\cM^{*}$ as follows,
which is also proved in Appendix \ref{app:unbounded}. 

\begin{claim}\label{clm:rev:m*}
$Rev(\cM^*(\cI')) \geq \frac{1}{1+\epsilon}OPT(\cI)$.
\end{claim}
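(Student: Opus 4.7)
The plan is to combine Equation~(\ref{eqn:3}), Claim~\ref{clm:rev:gap}, and Small-Tail Assumption~2 in a single decomposition. Since $\cM_{MRS}$ is optimal for single-item Bayesian auctions, writing the revenue player-by-player gives
\begin{equation*}
OPT(\cI) \;=\; \sum_i \mathop{\bE}\limits_{v_{-i}\sim \cD_{-i}} p_i(v_{-i};\cD)\cdot\Pr_{v_i\sim\cD_i}\!\bigl[v_i\geq p_i(v_{-i};\cD)\bigr],
\end{equation*}
where $p_i(v_{-i};\cD)$ is the threshold payment for player~$i$. I split the inner probability according to whether $q_i(v_i)>\epsilon_1$ or $q_i(v_i)\leq \epsilon_1$. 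For the ``tail'' contribution, I use the pointwise bound $\mathbf{I}_{q_i(v_i)\leq\epsilon_1}\leq \mathbf{I}_{\exists j,\,q_j(v_j)\leq\epsilon_1}$ and recognize that $\sum_i p_i(v_{-i};\cD)\mathbf{I}_{v_i\geq p_i(v_{-i};\cD)}$ is exactly $Rev_{OPT}(v;\cI)$ for single-item auctions. Thus the tail contribution is at most $\bE_{v\sim\cD}\mathbf{I}_{\exists j,\,q_j(v_j)\leq\epsilon_1}\,Rev_{OPT}(v;\cI)$, which by Small-Tail Assumption~2 (applied with $\delta_1=\tfrac{2\epsilon}{3(1+\epsilon)}$ and $\epsilon_1=h(\delta_1)$, matching the parameters of $\cM_{EQM}$) is at most $\delta_1\,OPT(\cI)$.

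Consequently, the complementary ``non-tail'' contribution is at least $(1-\delta_1)OPT(\cI)$. For each fixed $v_{-i}$, the threshold $p_i(v_{-i};\cD)$ is a constant with respect to $v_i$, so I apply Claim~\ref{clm:rev:gap} pointwise in $v_{-i}$:
\begin{equation*}
\Pr_{v_i}\!\bigl[v_i\geq p_i(v_{-i};\cD),\,q_i(v_i)>\epsilon_1\bigr]
\;\leq\; \Pr_{v_i}\!\bigl[v_i\geq p_i(v_{-i};\cD)\,\bigm|\,q_i(v_i)>\epsilon_1\bigr]
\;\leq\; (1+\delta)\Pr_{v_i}\!\bigl[v_i^-\geq p_i(v_{-i};\cD)\bigr].
\end{equation*}
Multiplying by $p_i(v_{-i};\cD)$, taking expectation over $v_{-i}\sim\cD_{-i}$, summing over $i$, and identifying the right-hand side via Equation~(\ref{eqn:3}) gives $(1-\delta_1)OPT(\cI)\leq (1+\delta)\,Rev(\cM^{*}(\cI'))$.

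To finish, I substitute $\delta=\epsilon/3$ and $\delta_1=\tfrac{2\epsilon}{3(1+\epsilon)}$ (the parameters used in $\cM_{EQM}$) and check the arithmetic
\begin{equation*}
\frac{1-\delta_1}{1+\delta}\;=\;\frac{(3+\epsilon)/(3(1+\epsilon))}{(3+\epsilon)/3}\;=\;\frac{1}{1+\epsilon},
\end{equation*}
which yields the claim. The only real subtlety is the first step: one must verify that the per-player tail event $\{q_i(v_i)\leq\epsilon_1\}$ can be absorbed into the joint tail event $\{\exists j,\,q_j(v_j)\leq\epsilon_1\}$, and that the sum of per-player threshold revenues reconstructs $Rev_{OPT}(v;\cI)$; this is what lets a single invocation of Small-Tail Assumption~2 control all~$n$ tail terms simultaneously. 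Everything else is an application of the ingredients already prepared in Equation~(\ref{eqn:3}) and Claim~\ref{clm:rev:gap}.
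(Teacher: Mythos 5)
Your proof is correct and follows essentially the same route as the paper's: both decompose the threshold-payment representation of $OPT(\cI)$ into tail and non-tail parts, control the tail via Small-Tail Assumption~2 with the absorption ${\bf I}_{q_i(v_i)\leq\epsilon_1}\leq{\bf I}_{\exists j,\,q_j(v_j)\leq\epsilon_1}$, apply Claim~\ref{clm:rev:gap} together with Equation~\ref{eqn:3} to the remainder, and finish with the identity $\frac{1-\delta_1}{1+\delta}=\frac{1}{1+\epsilon}$. The only difference is cosmetic: you run the inequality chain from $OPT(\cI)$ down to $Rev(\cM^*(\cI'))$ whereas the paper runs it in the opposite direction, using the same joint-versus-conditional probability comparison at the corresponding step.
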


By the optimality of $\cM_{MRS}$, $Rev(\cM_{MRS}(\cI')) \geq Rev(\cM^{*}(\cI'))$, and
Lemma \ref{lem:single:discretized} holds.
\end{proof}
\begin{proof} [Proof of Theorem \ref{thm:single}]
First, mechanism $\cM_{EQM}$ is DSIC because  $\cM_{MRS}$ is DSIC.
Second, the query complexity of $\cM_{EQM}$ is
$O(- n\log_{1+\frac{\epsilon}{3}} {h(\frac{2\epsilon}{3(1+\epsilon)})})$, because
there are $k+1 = \lceil\log_{1+\frac{\epsilon}{3}} \frac{1}{h(\frac{2\epsilon}{3(1+\epsilon)})}\rceil+1$
quantile queries for each player and there are $n$ players in total.
By definition,

$Rev(\cM_{EQM}(\cI))=Rev(\cM_{MRS}(\cI;\cD'))$.
By construction,  $\cD'$ is stochastically dominated by $\cD$.
Thus by revenue monotonicity
$Rev(\cM_{MRS}(\cI;\cD')) \geq Rev(\cM_{MRS}(\cI'))$.
Combining these two equations with
Lemma \ref{lem:single:discretized},
Theorem \ref{thm:single} holds.
\end{proof}

Mechanism $\cM_{EQM}$ and Theorem \ref{thm:single}
immediately extend to single-parameter downward-closed settings.
Finally, when the distributions are regular, we are able to prove an even better query complexity and a matching lower-bound; see Section~\ref{sec:single:regular}.

\subsection{Unit-Demand Auctions}\label{subsec:ua}

%\paragraph{Unit-demand Auctions.}
The unit-demand mechanism $\cM_{EQUD}$ is similar (see Mechanism~\ref{alg:unit}),
and we have the following.

\vspace{-5pt}
\begin{theorem}
\label{thm:unit-demand}
$\forall \epsilon > 0$, any unit-demand instance $\cI = (N,M,\cD)$
satisfying Small-Tail Assumption 2,
 $\cM_{EQUD}$ is DSIC, has
query complexity $O(- mn\log_{1+\frac{\epsilon}{3}} h(\frac{2\epsilon}{3(1+\epsilon)}))$, and
$Rev(\cM_{EQUD}(\cI)) \geq \frac{OPT(\cI)}{24(1+\epsilon)}$.
\end{theorem}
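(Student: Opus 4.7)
The plan is to adapt the bounded-distribution argument for $\cM_{EVUD}$ (proved in the appendix of the preceding Theorem on unit-demand bounded auctions) by replacing the value-query scheme $\cA_V$ with the quantile-query scheme $\cA_Q$, invoking Lemma~\ref{lem:single:discretized} as the per-coordinate single-item building block, and absorbing the tails of $\cD$ via Small-Tail Assumption 2. DSIC and query complexity are immediate: $\cM_{EQUD}$ inherits DSIC from $\cM_{UD}$ since the learned distribution $\cD'$ is determined before any bids are seen; and the query complexity is $mn\cdot(\lceil\log_{1+\epsilon/3}(1/h(2\epsilon/(3(1+\epsilon))))\rceil + 1)$ because $\cA_Q$ is invoked once per $(i,j)$-pair.

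For the revenue guarantee, let $\cI' = (N,M,\cD')$. I plan to establish the chain
$$Rev(\cM_{EQUD}(\cI)) = Rev(\cM_{UD}(\cI;\cD')) \;\overset{(a)}{\geq}\; Rev(\cM_{UD}(\cI')) \;\overset{(b)}{\geq}\; \tfrac{1}{6}\sum_{ij}OPT_{ij}(\cD'_{ij}) \;\overset{(c)}{\geq}\; \tfrac{1}{6(1+\epsilon)}\sum_{ij}OPT_{ij}(\cD_{ij}) \;\overset{(d)}{\geq}\; \tfrac{OPT(\cI)}{24(1+\epsilon)}.$$
For $(a)$ I will use that $\cM_{UD}$ is a sequential posted-price mechanism whose prices $p_{ij}$ depend only on $\cD'$; since $\cA_Q$ constructs $\cD'$ so that $\cD$ stochastically dominates $\cD'$ coordinate-wise, each sale probability and thus total revenue can only increase when values come from $\cD$. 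For $(b)$ I will invoke Kleinberg--Weinberg's $1/6$-approximation of sequential posted-price mechanisms to the sum of per-coordinate Myerson revenues. For $(c)$ I will apply Lemma~\ref{lem:single:discretized} to each $(i,j)$-pair, treated as a single-item single-player subproblem on $\cD_{ij}$. Finally $(d)$ is the COPIES-type upper bound $OPT(\cI) \le 4\sum_{ij}OPT_{ij}(\cD_{ij})$ for unit-demand auctions from CHMS.

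The main obstacle will be step $(c)$. Lemma~\ref{lem:single:discretized} is stated for a single-item instance satisfying Small-Tail Assumption 2 for that instance, whereas the hypothesis here is the multi-dimensional small-tail on $\cI$, which is a priori weaker than a per-marginal small-tail on each $\cD_{ij}$. To bridge the gap, I would argue that for each fixed $(i,j)$, running Myerson's optimal single-item mechanism on coordinate $(i,j)$ alone (and allocating nothing elsewhere) is a valid BIC mechanism for $\cI$; hence the per-coordinate tail revenue of $OPT_{ij}(\cD_{ij})$ is dominated by the multi-dimensional tail of $Rev_{OPT}(v;\cI)$, which is at most $\delta_1\cdot OPT(\cI)$ by Assumption 2. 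Combined with the COPIES inequality $OPT(\cI)\le 4\sum_{i'j'}OPT_{i'j'}(\cD_{i'j'})$, this furnishes a per-coordinate small-tail with $\delta_1$ inflated by a constant factor. Choosing $\delta=\epsilon/3$ and $\delta_1 = 2\epsilon/(3(1+\epsilon))$ as in the mechanism then gives exactly the $(1+\epsilon)$ slack used in step $(c)$, and multiplying by the structural factor $24=4\cdot 6$ from steps $(b)$ and $(d)$ yields the claimed $24(1+\epsilon)$-approximation.
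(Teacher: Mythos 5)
Your high-level ingredients (quantile discretization, stochastic dominance, the Kleinberg--Weinberg $6$-approximation, the COPIES factor $4$, and Lemma~\ref{lem:single:discretized}) are the right ones, but the chain breaks at step $(b)$, and this propagates into $(c)$. The sequential posted-price mechanism is a $6$-approximation to the optimal BIC revenue of the \emph{constrained} COPIES instance, $OPT(\cI'^{CP})$, where each player may receive at most one item and each item may be sold at most once; it is \emph{not} a $6$-approximation to the unconstrained sum $\sum_{ij}OPT_{ij}(\cD'_{ij})$ of per-coordinate Myerson revenues. The latter can exceed the former by a factor of $n$: take $m=1$ and $n$ players each with value deterministically $1$, so each per-coordinate monopoly revenue is $1$ and the sum is $n$, while any feasible mechanism (posted-price or otherwise) collects at most $1$. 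Hence $Rev(\cM_{UD}(\cI'))\geq \frac{1}{6}\sum_{ij}OPT_{ij}(\cD'_{ij})$ is false, and the benchmark carried through $(b)$--$(d)$ must be $OPT(\cI^{CP})$ itself; the paper's chain is $Rev(\cM_{UD}(\cI;\cD'))\geq Rev(\cM_{UD}^{CP}(\cI^{CP};\cD'))=Rev(\cM_{UD}^{CP}(\cI'^{CP}))\geq\frac{1}{6}OPT(\cI'^{CP})\geq\frac{1}{6(1+\epsilon)}OPT(\cI^{CP})\geq\frac{1}{24(1+\epsilon)}OPT(\cI)$. Relatedly, your step $(a)$ asserts revenue monotonicity of the multi-item posted-price mechanism directly from coordinate-wise stochastic dominance; this is not justified, because when a player's values increase he may switch to a different utility-maximizing item, changing which items remain for later players. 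The paper avoids this by first dropping to the COPIES revenue against an online adaptive adversary and then proving an \emph{exact} coupling: since every posted price lies in the support of $\cD'$, the outcome is identical under $v$ and its (randomized) round-down $v'$.

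The second gap is the tail accounting in $(c)$. Small-Tail Assumption~2 controls only the tail of $Rev_{OPT}(v;\cI)$, not of arbitrary BIC mechanisms, and even granting your reduction, bounding each coordinate's tail loss by $\delta_1\, OPT(\cI)$ and summing over $mn$ coordinates inflates the loss by a factor of $mn$ rather than a constant, so the claimed $(1+\epsilon)$ slack does not come out. The repair is again to work at the level of the COPIES instance: it is a single-parameter downward-closed instance with $mn$ agents, the event $\{\exists\, i,j:\ q_{ij}(v_{ij})\leq\epsilon_1\}$ is exactly the tail event in Assumption~2, and the over-pricing argument in the proof of Lemma~\ref{lem:single:discretized} (the imaginary mechanism $\cM^*$ with the randomized round-down and resampling schemes) extends verbatim to that setting, charging the tail loss \emph{once} against $OPT(\cI)$ instead of $mn$ times. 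This is precisely the substitution the paper indicates: replace Lemma~5 of the sample-complexity reference by Lemma~\ref{lem:single:discretized}, and the deterministic round-down by the randomized one.
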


\begin{algorithm}[htbp]
\floatname{algorithm}{Mechanism}
  \caption{\hspace{-3pt}  Mechanism ${\cM}_{EQUD}$ for Unit-Demand Auctions}
  \label{alg:unit}
  \begin{algorithmic}[1]
\STATE Given $\epsilon > 0$, run algorithm $\cA_{Q}$ with $\delta = \frac{\epsilon}{3}$ and $\epsilon_{1} = h(\frac{2\epsilon}{3(1+\epsilon)})$ (i.e., $\delta_1 = \frac{2\epsilon}{3(1+\epsilon)}$ for Small Tail Assumption 2),
for each player~$i$'s distribution $\cD_{ij}$ on each item $j$.
Denote by $\cD'_{ij}$ the returned distribution. Let $\cD'_{i}=\times_{j\in M}\cD'_{ij}$ and $\cD'=\times_{i\in N}\cD'_{i}$.

\STATE Run mechanism $\cM_{UD}$ with $\cD'$ and the players' reported values, $b=(b_{ij})_{i\in N,j\in M}$,
to get allocation $x = (x_{ij})_{i\in N, j\in M}$ and price profile $p = (p_i)_{i\in N}$ as the outcome.
  \end{algorithmic}
\end{algorithm}

The proof of Theorem \ref{thm:unit-demand} is similar to that of Theorem \ref{thm:unit-demand:value_query},
but Lemma \ref{lem:single:discretized} above is used instead of Lemma~5 of \cite{devanur2016sample},
and the round-down scheme is replaced by the randomized round-down scheme designed in the proof of Lemma \ref{lem:single:discretized}.
The details have been omitted.

\subsection{Additive Auctions}\label{subsec:add}

For additive auctions, we cannot use Small-Tail Assumption 2, because it does not imply that
the revenue loss on the tail by running $\cM_{BVCG}$ is much less than the revenue of the optimal mechanism.
To approximate $\cM_{BVCG}$, not only we need Small-Tail Assumption 1, but
we also approximate $\cD$ by running the quantile-query algorithm $\cA_Q$ with different parameters.
The resulting mechanism $\cM_{EQBVCG}$ is defined in Mechanism~\ref{mechanism:eqbvcg},
and the mechanism $\cM_{EQIM}$ simply replaces $\cM_{BVCG}$ with $\cM_{IM}$.
Again, 
in the final mechanism $\cM_{EQA}$ the seller randomly chooses between the two query mechanisms,
according to probabilities
defined in the analysis.
We have the following theorem.

\begin{algorithm}
\floatname{algorithm}{Mechanism}
  \caption{\hspace{-3pt} Mechanism ${\cM}_{EQBVCG}$ for Additive Auctions}
  \label{mechanism:eqbvcg}
  \begin{algorithmic}[1]
\STATE Given $\epsilon > 0$, run algorithm $\cA_{Q}$ with
$\delta = (1+\frac{\epsilon}{5})^{1/m}-1$
and $\epsilon_1 = h(\frac{\epsilon}{10(1+\epsilon)})$
(i.e., $\delta_1 = \frac{\epsilon}{10(1+\epsilon)}$ for Small Tail Assumption 1),
for each player $i$'s distribution $\cD_{ij}$ on each item $j$.
Denote by $\cD'_{ij}$ the returned distribution. Let $\cD'_{i}=\times_{j\in M}\cD'_{ij}$ and $\cD'=\times_{i\in N}\cD'_{i}$.

\STATE Run $\cM_{BVCG}$ with $\cD'$ and the players' reported values, $b=(b_{ij})_{i\in N, j\in M}$,
to get allocation $x = (x_{ij})_{i\in N, j\in M}$ and price profile $p = (p_i)_{i\in N}$ as the outcome.
\end{algorithmic}
\end{algorithm}

\vspace{-5pt}

\begin{theorem} \label{thm:eca}
$\forall \epsilon > 0$, any additive instance
$\cI = (N, M, \cD)$ satisfying Small-Tail Assumption~1,
$\cM_{EQA}$ is DSIC, has query complexity
$O(-m^2n\log_{1+\frac{\epsilon}{5}} h(\frac{\epsilon}{10(1+\epsilon)})$, and
$Rev({\cM}_{EQA}(\cI))
\geq \frac{OPT(\cI)}{8(1+\epsilon)}$.
\end{theorem}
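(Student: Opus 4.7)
The plan is to mirror the structure of the bounded-case proof of Theorem~\ref{thm:veca} but with two substitutions: value queries are replaced by quantile queries through $\cA_Q$, and the bounded range $[1,H]$ is replaced by Small-Tail Assumption~1. The mechanism $\cM_{EQA}$ is a convex combination of $\cM_{EQIM}$ and $\cM_{EQBVCG}$ with mixing probabilities chosen independently of bids, so DSIC is inherited from $\cM_{IM}$ and $\cM_{BVCG}$. For query complexity, each of the $nm$ distributions $\cD_{ij}$ is queried once via $\cA_Q$ with $\delta=(1+\epsilon/5)^{1/m}-1$ and $\epsilon_1=h(\epsilon/(10(1+\epsilon)))$. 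Since $\log(1+\delta)=\Theta(\epsilon/m)$, each call makes $O(-m\epsilon^{-1}\log h(\epsilon/(10(1+\epsilon))))$ queries, giving the claimed $O(-m^2n\log_{1+\epsilon/5}h(\epsilon/(10(1+\epsilon))))$ total.

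For the revenue bound I would invoke the duality framework of~\cite{cai2016duality} (as adapted in~\cite{yao2015n}) under the true prior $\cD$. This decomposes $OPT(\cI)$ into roughly two components: a \emph{non-favorite} part bounded by the individual Myerson revenue summed over items, and a \emph{favorite} part bounded by the grand-bundle/VCG-with-entry-fee revenue. The challenge is that I need to relate these quantities, defined under $\cD$, to the quantities actually achieved by my mechanisms, which operate under the learned distribution $\cD'$ returned by $\cA_Q$.

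To bridge this gap I would split $\cD$ into a \emph{core} region where every coordinate has quantile $\geq\epsilon_1$ and a \emph{tail} region where some coordinate has quantile $<\epsilon_1$. Small-Tail Assumption~1 applied to the \emph{optimal BIC mechanism} discards the tail contribution at a cost of $\delta_1 OPT(\cI)$ with $\delta_1=\epsilon/(10(1+\epsilon))$; this is exactly why the stronger Assumption~1 (as opposed to Assumption~2 used for single-item and unit-demand) is needed, since the decomposition produces expressions that look like the revenue of some mechanism, not just the optimum. For the core region, Lemma~\ref{lem:additive:bvcg} of Appendix~\ref{app:unbounded} plays the role that Lemma~\ref{lem:single:discretized} played in the single-item case: the randomized round-down built into $\cA_Q$ corresponds to an over-pricing on each coordinate whose multiplicative loss per coordinate is $1+\delta$, so with $\delta=(1+\epsilon/5)^{1/m}-1$ the loss accumulated across the $m$ coordinates of an additive valuation is at most $(1+\delta)^m=1+\epsilon/5$ for the BVCG component. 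A parallel but easier argument, coordinate-wise via revenue monotonicity and the single-item Lemma~\ref{lem:single:discretized}, handles the individual-Myerson component with the same slack.

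Putting the pieces together, I would obtain bounds of the form $OPT(\cI)\leq \alpha\cdot Rev(\cM_{IM}(\cI;\cD'))+\beta\cdot Rev(\cM_{BVCG}(\cI;\cD'))+O(\epsilon)\cdot OPT(\cI)$ with $\alpha+\beta\leq 8$, matching the constants of~\cite{yao2015n,cai2016duality}. Choosing the mixing probability $\alpha/(\alpha+\beta)$ for $\cM_{EQIM}$ and $\beta/(\alpha+\beta)$ for $\cM_{EQBVCG}$ yields $Rev(\cM_{EQA}(\cI))\geq OPT(\cI)/(8(1+\epsilon))$ after absorbing the $O(\epsilon)$ slack into the denominator. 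The main obstacle I foresee is the per-coordinate accounting in Lemma~\ref{lem:additive:bvcg}: because BVCG computes the entry fee as an expectation of the additive welfare over the remaining players' values, replacing $\cD$ with the over-priced $\cD'$ shifts the entry fee, and one must argue that the shift is small enough that the per-player participation event is essentially preserved while the welfare collected conditional on participation is within a $(1+\delta)^m$ factor of its true value. This is precisely what forces $\delta$ to scale as $1/m$ and explains the $m^2$ dependence in the final query complexity.
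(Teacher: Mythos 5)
Your proposal follows essentially the same route as the paper: cite the bound $OPT(\cI)\leq 2\,Rev(\cM_{BVCG}(\cI))+6\,Rev(\cM_{IM}(\cI))$ under the true prior from \cite{cai2016duality}, handle the individual-Myerson part item-by-item via Lemma~\ref{lem:single:discretized}/Theorem~\ref{thm:single}, handle the BVCG part via Lemma~\ref{lem:additive:bvcg} with the $(1+\delta)^m=1+\epsilon/5$ accounting and Small-Tail Assumption~1 for the tail, and mix with probabilities $3/4$ and $1/4$. One small slip: the tail term in Lemma~\ref{lem:additive:bvcg} requires applying Assumption~1 to $\cM_{BVCG}$ itself (not to the optimal BIC mechanism, which would only need Assumption~2), though your parenthetical remark about ``the revenue of some mechanism, not just the optimum'' shows you understand this.
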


Before proving Theorem \ref{thm:eca}, we first analyze mechanism $\cM_{EQBVCG}$, and we have the following.

\begin{lemma}
\label{lem:additive:bvcg}
$\forall \epsilon > 0$, for any additive Bayesian instance $\cI = (N,M,\cD)$
satisfying Small-Tail Assumption 1,
$\cM_{EQBVCG}$ is DSIC, has query complexity
$O(- m^2n\log_{1+\frac{\epsilon}{5}} h(\frac{\epsilon}{10(1+\epsilon)}))$, and
$$Rev(\cM_{EQBVCG}(\cI))
\geq  \frac{1}{1+\frac{\epsilon}{5}}
\left (Rev(\cM_{BVCG}({\cI})) - \frac{\epsilon}{10(1+\epsilon)} OPT(\cI) \right ).
$$
\end{lemma}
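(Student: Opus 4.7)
The DSIC property and query complexity are straightforward: $\cM_{EQBVCG}$ simply substitutes $\cD'$ for $\cD$ in the entry-fee computation of $\cM_{BVCG}$, which preserves DSIC; and $\cA_{Q}$ makes $O(\log_{1+\delta}(1/\epsilon_1))$ quantile queries per distribution, which, with $\delta=(1+\epsilon/5)^{1/m}-1$, equals $\Theta(m\log_{1+\epsilon/5}(1/\epsilon_1))$, yielding the claimed total after multiplying by $mn$ distributions.

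For the revenue bound, my plan is to extend the over-pricing argument of Lemma~\ref{lem:single:discretized} to the additive BVCG setting. I set up the per-coordinate randomized round-down from that proof, coupling $v_{ij}\sim\cD_{ij}$ with $v_{ij}^{-}\sim\cD'_{ij}$ and defining $v_{ij}^{+}$ to be the smallest support point of $\cD'_{ij}$ strictly above $v_{ij}^{-}$ (or $+\infty$). Because the additive VCG utility $u_i(v_i,v_{-i})=\sum_j(v_{ij}-\beta_{ij}(v_{-i}))^{+}$ is coordinatewise monotone, $u_i(v_i,v_{-i})\leq u_i(v_i^{+},v_{-i})$ always. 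The key distributional lemma I would prove is the additive analogue of Claim~\ref{clm:rev:gap}: for every $v_{-i}$ and every entry fee $e$,
\begin{equation*}
\Pr_{v_i\sim\cD_i}\!\bigl[u_i(v_i,v_{-i})\geq e\text{ and }\forall j,\;q_{ij}(v_{ij})>\epsilon_1\bigr]\leq (1+\delta)^m\,\Pr_{v'_i\sim\cD'_i}\!\bigl[u_i(v'_i,v_{-i})\geq e\bigr].
\end{equation*}
On the non-tail event every $v_{ij}^{+}$ lies in the support of $\cD'_{ij}$, and the geometric quantile vector $q_l=\epsilon_1(1+\delta)^{k-l}$ with $q_0=1$ gives $\Pr[v_{ij}^{+}=v'_{ij;l}]/\Pr[v_{ij}^{-}=v'_{ij;l}]\leq 1+\delta$ at every support point, by the same arithmetic used at the end of the proof of Claim~\ref{clm:rev:gap}. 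Independence across items then yields a Radon--Nikodym bound of $(1+\delta)^m$ for the law of $v_i^{+}$ (restricted to coordinatewise-finite) against $\cD'_i$, and integrating over the upward-closed set $\{u_i(\cdot,v_{-i})\geq e\}$ finishes the claim.

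Next I compare entry-fee revenues player by player. Let $e^*_i(v_{-i})$ and $e^{**}_i(v_{-i})$ be the optimal entry fees under $\cD_i$ and $\cD'_i$ respectively. The coupling $v_{ij}^{-}\leq v_{ij}$ gives $\Pr_{v_i\sim\cD_i}[u_i\geq e]\geq\Pr_{v'_i\sim\cD'_i}[u_i\geq e]$ for every $e$, and optimality of $e^{**}_i$ for $\cD'_i$ gives $e^{**}_i\Pr_{v'_i\sim\cD'_i}[u_i\geq e^{**}_i]\geq e^*_i\Pr_{v'_i\sim\cD'_i}[u_i\geq e^*_i]$; chaining these with the key inequality applied at $e=e^*_i$ yields
\begin{equation*}
e^{**}_i\,\Pr_{v_i\sim\cD_i}[u_i\geq e^{**}_i]\geq \frac{1}{(1+\delta)^m}\,e^*_i\,\Pr_{v_i\sim\cD_i}\!\bigl[u_i\geq e^*_i\text{ and }\forall j,\;q_{ij}(v_{ij})>\epsilon_1\bigr].
\end{equation*}
I integrate over $v_{-i}$, sum over players, and combine with the VCG-payment terms (which depend only on the bids and not on the prior). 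Finally, Small-Tail Assumption~1 with $\delta_1=\epsilon/(10(1+\epsilon))$ and $\epsilon_1=h(\delta_1)$ absorbs the tail loss into $\delta_1\,OPT(\cI)=\frac{\epsilon}{10(1+\epsilon)}OPT(\cI)$, and using $(1+\delta)^m=1+\epsilon/5$ delivers the stated bound.

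The main obstacle I anticipate is the entanglement between entry-fee acceptance and VCG payments: a rejecting player is typically removed from VCG, so per-player revenue is not literally additive and must be bookkept in the style of the $\cM_{EVBVCG}$ analysis of Section~\ref{subsec:additive}, where optimality is invoked against a scaled-down entry fee (here, scaled by $(1+\delta)^m$ rather than a single $1+\delta$). The conceptual shift from that proof is that the per-coordinate factor $1+\delta$ must be justified via the product-measure Radon--Nikodym bound above, rather than via a value-space rounding bound of the form $v_{ij}^{-}\geq v_{ij}/(1+\delta)$, which is unavailable under quantile queries on unbounded distributions.
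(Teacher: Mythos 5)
Your proposal follows essentially the same route as the paper: upper-bound the non-tail part of $Rev(\cM_{BVCG}(\cI))$ by rounding each $v_{ij}$ up to the support of $\cD'_{ij}$, use the coordinatewise quantile ratio $(1+\delta)$ and independence to get the $(1+\delta)^m$ mass-shift factor, absorb the tail via Small-Tail Assumption 1, and connect back to $\cM_{EQBVCG}$ through the round-down coupling and the optimality of the entry fee computed from $\cD'_i$. One step is stated imprecisely: the optimal entry fee $e(\cD'_i,v_{-i})$ in $\cM_{BVCG}$ maximizes the \emph{full} expected payment $\bE_{v'_i\sim\cD'_i}\bigl[{\bf I}_{u_i\geq e}\bigl(e+\sum_j\beta_{ij}{\bf I}_{v'_{ij}\geq\beta_{ij}}\bigr)\bigr]$, not the product $e\cdot\Pr[u_i\geq e]$, so your chain $e^{**}_i\Pr_{\cD'_i}[u_i\geq e^{**}_i]\geq e^*_i\Pr_{\cD'_i}[u_i\geq e^*_i]$ does not follow from that optimality as written; the paper avoids this by running the entire comparison on the payment functional $p_i(\cdot,\cdot,v_{-i})$, which is the "bookkeeping" you yourself flag, and the fix is routine.
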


\begin{proof}
%[Proof of Lemma \ref{lem:additive:bvcg}]
First, mechanism $\cM_{EQBVCG}$ is DSIC
because $\cM_{BVCG}$ is DSIC. The query complexity is also immediate.

We now focus on the revenue of this mechanism.
We explicitly write $\cM_{BVCG}({\cI}; \cD')$
to emphasize the fact that
the seller runs mechanism $\cM_{BVCG}$ on the true valuation profile $v\sim \cD$,
but uses $\cD'$ to compute the entry fees $e_i$.
Given a player $i$ and a valuation profile $v$,
$p_i(v_i, \cD_i,v_{-i})$ is the price for~$i$ under $\cD_i$:
that is,
$$p_i(v_i, \cD_i,v_{-i})
= {\bf I}_{\sum_{j:v_{ij}\geq \beta_{ij}} ({v}_{ij}-\beta_{ij}) \geq e(\cD_i,v_{-i})}
(e(\cD_i,v_{-i}) + \sum_j \beta_{ij} {\bf I}_{v_{ij}\geq \beta_{ij}}),$$
where we omit $v_{-i}$ in $\beta_{ij}(v_{-i})$
when $v_{-i}$ is clear from the context.
% = \max\{v_{i'j}\}_{i'\neq i}$.
\footnote{If there are ties in the players' values, then we distinguish between $\beta^+_{ij}$ and $\beta^-_{ij}$,
depending on the identity of the player with the highest bid for $j$ in $N\setminus\{i\}$.}
The price $p_i(v_i, \cD'_i, v_{-i})$ is similarly defined.
By the definition of the mechanism, we have

\begin{equation}
\label{eq:additive:ecbvcg:def}
Rev(\cM_{EQBVCG}(\cI)) = Rev({\cM}_{BVCG}({\cI}; \cD'))
= \sum_{i} \mathop\mathbb{E}\limits_{v_{-i}\sim \cD_{-i}} \mathop\mathbb{E}\limits_{v_i\sim \cD_i} p_i(v_i,\cD'_i,v_{-i}).
\end{equation}

Next, let $V'_{ij}$ be the support of $\cD'_{ij}$,
$V'_i = \times_{j\in M} V'_{ij}$,
round $v_{i}$ down to
the closest valuation $v'_{i}$ in $V'_{i}$ and compare the two valuation profiles
$(v'_i, v_{-i})$ and $(v_i, v_{-i})$.
By definition, $v'_{ij} \geq \beta_{ij}$ implies $v_{ij} \geq \beta_{ij}$.
Moreover, the entry fee of $i$ is the same under both valuation profiles,
as it only depends on
$\cD'_i$ and $v_{-i}$.
Similarly, the reserve price $\beta_{ij}$ is the same for any item $j$.
Thus we have
$e(\cD'_i,v_{-i}) + \sum_j \beta_{ij} {\bf I}_{v_{ij}\geq \beta_{ij}}
\geq e(\cD'_i,v_{-i}) + \sum_j \beta_{ij} {\bf I}_{v'_{ij}\geq \beta_{ij}}$
and ${\bf I}_{\sum_{j:v_{ij}\geq \beta_{ij}} (v_{ij}-\beta_{ij}) \geq e(\cD_i,v_{-i})}
\geq {\bf I}_{\sum_{j:v_{ij}\geq \beta_{ij}} (v'_{ij}-\beta_{ij}) \geq e(\cD'_i,v_{-i})}$.
Therefore

\begin{equation}\label{eq:add:1}
\mathop\mathbb{E}\limits_{v_{-i}\sim \cD_{-i}}   \mathop\mathbb{E}\limits_{v_i\sim \cD_i} p_i(v_i,\cD'_i,v_{-i})
\geq \mathop\mathbb{E}\limits_{v_{-i}\sim \cD_{-i}} \mathop\mathbb{E}\limits_{v_i\sim \cD_i} p_i(v'_i,\cD'_i,v_{-i})
=  \mathop\mathbb{E}\limits_{v_{-i}\sim \cD_{-i}} \mathop\mathbb{E}\limits_{v'_i\sim \cD'_i} p_i(v'_i,\cD'_i,v_{-i}),
\end{equation}
where the equality is again because drawing $v_{i}$ from $\cD_{i}$ and then rounding down to $v'_{i}$
is equivalent to drawing $v'_{i}$ from $\cD'_{i}$ directly.
%\footnote{
%Again, when the distributions are discrete,
%the actual round-down procedure
%is randomized and depends on the distribution $\cD_i$ and $\cD'_i$
%in order to simulate a sample from $\cD'_i$.
%%Fortunately, this round-down procedure does not appear in the mechanism and it is only used in the analysis for the revenue bound of $\cM_{EBVCG}$.
%%Thus, Equation \ref{eq:add:1} holds for discrete distributions.
%}

In mechanism $\cM_{BVCG}$,
%design of the mechanism,
given $v_{-i}$ and $\cD'_{i}$,
$e(\cD'_{i},v_{-i})$ is the optimal entry fee
for maximizing the expected revenue generated from $i$,
where the expectation is taken over $\cD'_i$.
Accordingly,
%Then
%$e(\cD')$ is the optimal entry fee for distribution $\cD'$ fixing $v_{-i}$,
%we have
\begin{equation}\label{eq:bvcg:opt}
%\mathop\mathbb{E}\limits_{v_{-i}\sim \cD_{-i}}
\mathop\mathbb{E}\limits_{v'_i\sim \cD'_i} p_i(v'_i,\cD'_i,v_{-i})
\geq
%\mathop\mathbb{E}\limits_{v_{-i}\sim \cD_{-i}}
\mathop\mathbb{E}\limits_{v'_i\sim \cD'_i} p_i(v'_i,\cD_i,v_{-i}).
\end{equation}
Combining Equations \ref{eq:additive:ecbvcg:def}, \ref{eq:add:1} and \ref{eq:bvcg:opt},
we have
\begin{equation}\label{equ:upperbvcg}
Rev(\cM_{EQBVCG}(\cI)) \geq
%our mechanism guarantees revenue at least
\sum\limits_{i} \mathop\mathbb{E}\limits_{v_{-i}\sim \cD_{-i}} \mathop\mathbb{E}\limits_{v'_i\sim \cD'_i} p_i(v'_i,\cD_i,v_{-i}).
\end{equation}
Thus  we  will use
$\sum\limits_{i} \mathop\mathbb{E}\limits_{v_{-i}\sim \cD_{-i}} \mathop\mathbb{E}\limits_{v'_i\sim \cD'_i} p_i(v'_i,\cD_i,v_{-i})$
to upper-bound $Rev(\cM_{BVCG}({\cI}))$.

To do so, first, for any player $i$, item $j$ and value $v_{ij}$,
if $v_{ij}< v'_{ij;k}$ where $v'_{ij;k}$ is the largest value in $V'_{ij}$,
then
denote by $\overline{v}_{ij}$ the smallest value in $V'_{ij}$ that is strictly larger than $v_{ij}$;
otherwise, let $\overline{v}_{ij} = v_{ij}$.
Moreover, denote by $\underline{v}_{ij}$ the largest value in $V'_{ij}$ that is weakly smaller than $v_{ij}$.
The valuation $\overline{v}_i$ and $\underline{v}_i$ are defined correspondingly given $v_i$.
Then We have
%\begin{small}
\begin{eqnarray}\label{equ:add:bvcg:2}
&& Rev(\cM_{BVCG}({\cI}))
= \sum_i \mathop\mathbb{E}\limits_{v_{-i}\sim \cD_{-i}}
\mathop\mathbb{E}\limits_{v_i\sim \cD_i}
p_i(v_i,\cD_i,v_{-i})\nonumber\\
&=& \sum_i \mathop\mathbb{E}\limits_{v_{-i}\sim \cD_{-i}}
\mathop\mathbb{E}\limits_{v_i\sim \cD_i}
{\bf I}_{\sum_{j:v_{ij}\geq \beta_{ij}} (v_{ij}-\beta_{ij}) \geq e(\cD_i,v_{-i})}
\left(e(\cD_i,v_{-i}) + \sum_j \beta_{ij} {\bf I}_{v_{ij}\geq \beta_{ij}}\right)
\nonumber\\
&=&
\sum_i \mathop\mathbb{E}\limits_{v_{-i}\sim \cD_{-i}}
\mathop\mathbb{E}\limits_{v_i\sim \cD_i}
{\bf I}_{\forall j, q_{ij}(v_{ij}) > \epsilon_1}
{\bf I}_{\sum_{j:v_{ij}\geq \beta_{ij}} (v_{ij}-\beta_{ij}) \geq e(\cD_i,v_{-i})}
\left(e(\cD_i,v_{-i}) + \sum_j \beta_{ij} {\bf I}_{v_{ij}\geq \beta_{ij}}\right) \nonumber\\
&+& \hspace{-5pt} \sum_i \mathop\mathbb{E}\limits_{v_{-i}\sim \cD_{-i}}
\mathop\mathbb{E}\limits_{v_i\sim \cD_i}
\hspace{-5pt} {\bf I}_{\exists j, q_{ij}(v_{ij}) \leq \epsilon_1}
{\bf I}_{\sum_{j:v_{ij}\geq \beta_{ij}} (v_{ij}-\beta_{ij}) \geq e(\cD_i,v_{-i})}
\left(e(\cD_i,v_{-i}) + \sum_j \beta_{ij} {\bf I}_{v_{ij}\geq \beta_{ij}}\right).\nonumber\\
\end{eqnarray}
Below we upper-bound the last two lines in Equation \ref{equ:add:bvcg:2} separately.
For the first part, we have
\begin{eqnarray}\label{equ:bvcg:3}
& & \sum_i \mathop\mathbb{E}\limits_{v_{-i}\sim \cD_{-i}}
\mathop\mathbb{E}\limits_{v_i\sim \cD_i}
{\bf I}_{\forall j, q_{ij}(v_{ij}) > \epsilon_1}
{\bf I}_{\sum_{j:v_{ij}\geq \beta_{ij}} (v_{ij}-\beta_{ij}) \geq e(\cD_i,v_{-i})}
\left(e(\cD_i,v_{-i}) + \sum_j \beta_{ij} {\bf I}_{v_{ij}\geq \beta_{ij}}\right) \nonumber\\
&\leq&
\sum_i \mathop\mathbb{E}\limits_{v_{-i}\sim \cD_{-i}}
\mathop\mathbb{E}\limits_{v_i\sim \cD_i}
{\bf I}_{\forall j, q_{ij}(v_{ij}) > \epsilon_1}
{\bf I}_{\sum_{j:\overline{v}_{ij}\geq \beta_{ij}} (\overline{v}_{ij}-\beta_{ij}) \geq e(\cD_i,v_{-i})}
\left(e(\cD_i,v_{-i}) + \sum_j \beta_{ij} {\bf I}_{\overline{v}_{ij}\geq \beta_{ij}}\right) \nonumber\\
&=& \sum_{i} \mathop\mathbb{E}\limits_{v_{-i}\sim \cD_{-i}}
\sum_{u_i\in V'_{i} : \sum_{j:u_{ij}\geq \beta_{ij}}(u_{ij}-\beta_{ij}) \geq e(\cD_i,v_{-i})}
\hspace{-20pt} \Pr_{v_i\sim\cD_i}[\overline{v}_i = u_i]
\left(e(\cD_i,v_{-i}) + \sum_j \beta_{ij} {\bf I}_{u_{ij}\geq \beta_{ij}}\right).\nonumber\\
\end{eqnarray}
%\end{small}
The inequality above is because $v_{ij}\leq \overline{v}_{ij}$ for each player $i$ and item $j$,
which implies \\
${\bf I}_{\sum_{j:v_{ij}\geq \beta_{ij}} (v_{ij}-\beta_{ij}) \geq e(\cD_i,v_{-i})}
\leq
{\bf I}_{\sum_{j:\overline{v}_{ij}\geq \beta_{ij}} (\overline{v}_{ij}-\beta_{ij}) \geq e(\cD_i,v_{-i})}$
and
$\sum_j \beta_{ij} {\bf I}_{v_{ij}\geq \beta_{ij}} \leq \sum_j \beta_{ij} {\bf I}_{\overline{v}_{ij}\geq \beta_{ij}}$.
\\
Next, by the definition of the quantile vector $q$,
for any $u_{ij}\in V'_{ij}$
 %(thus $u_{ij}$ does not correspond to a quantile in the range $(1-\epsilon_1, 1)$ for $\cD_{ij}$),
 we have
\begin{equation*}%\label{eq:prob}
\Pr_{v_{ij} \sim \cD_{ij}}[\overline{v}_{ij} = u_{ij}]
\leq (1+\delta) \Pr_{v_{ij} \sim \cD_{ij}}[\underline{v}_{ij} = u_{ij}].
\end{equation*}
Indeed, when $u_{ij} = v'_{ij;0}$, $\Pr[v_{ij}<u_{ij}]=0< (1+\delta)(1-\epsilon_1(1+\delta)^{k-1})
= (1+\delta)(q_0-q_1) = (1+\delta) \Pr[v_{ij}\in [v'_{ij;0}, v'_{ij;1})]$.
When $u_{ij} = v'_{ij;l}$ with $0<l<k$,
$\Pr(v_{ij}\in [v'_{ij;l-1}, v'_{ij;l})) = q_{l-1}-q_l\leq (1+\delta)q_l - q_l = \delta q_l = (1+\delta)\delta q_{l+1}
=(1+\delta)((1+\delta)q_{l+1}-q_{l+1}) = (1+\delta)(q_l-q_{l+1}) = (1+\delta)\Pr[v_{ij}\in [v'_{ij;l}, v'_{ij;l+1})]$.
And when $u_{ij} = v'_{ij;k}$,
$\Pr[v_{ij}\in [v'_{ij;k-1}, v'_{ij;k}]] = q_{k-1}-q_k = \delta\epsilon_1 <\epsilon_1 = \Pr[v_{ij}\geq v'_{ij;k}]$.
% does not correspond to the quantile of range $(1-\epsilon_1, 1)$.
Since all distributions are independent, for any $u_i\in V'_{i}$ we have
\begin{equation}\label{eq:prob}
\Pr_{v_{i} \sim \cD_{i}}[\overline{v}_{i} = u_i]
\leq (1+\delta)^m \Pr_{v_{i} \sim \cD_{i}}[\underline{v}_{i} = u_i].
\end{equation}
Combining Equations \ref{equ:bvcg:3} and \ref{eq:prob}, we have
\begin{eqnarray}\label{equ:bvcg:4}
& & \sum_i \mathop\mathbb{E}\limits_{v_{-i}\sim \cD_{-i}}
\mathop\mathbb{E}\limits_{v_i\sim \cD_i}
{\bf I}_{\forall j, q_{ij}(v_{ij}) > \epsilon_1}
{\bf I}_{\sum_{j:v_{ij}\geq \beta_{ij}} (v_{ij}-\beta_{ij}) \geq e(\cD_i,v_{-i})}
\left(e(\cD_i,v_{-i}) + \sum_j \beta_{ij} {\bf I}_{v_{ij}\geq \beta_{ij}}\right) \nonumber\\
&\leq&
\sum_{i} \mathop\mathbb{E}\limits_{v_{-i}\sim \cD_{-i}}
\sum_{u_i\in V'_{i} : \sum\limits_{j:u_{ij}\geq \beta_{ij}} \hspace{-5pt} (u_{ij}-\beta_{ij}) \geq e(\cD_i,v_{-i})}
\hspace{-.6in} (1+\delta)^m \cdot \Pr_{v_i\sim\cD_i}[\underline{v}_i = u_i] \cdot
\left(e(\cD_i,v_{-i}) + \sum_j \beta_{ij} {\bf I}_{u_{ij}\geq \beta_{ij}}\right) \nonumber \\
&=& (1+\delta)^m \sum_{i} \mathop\mathbb{E}\limits_{v_{-i}\sim \cD_{-i}} \mathop\mathbb{E}\limits_{v'_{i}\sim \cD'_{i}}
{\bf I}_{\sum_{j:v'_{ij}\geq \beta_{ij}} (v'_{ij}-\beta_{ij}) \geq e(\cD_i,v_{-i})}
\left(e(\cD_i,v_{-i}) + \sum_j \beta_{ij} {\bf I}_{v'_{ij}\geq \beta_{ij}}\right)
\nonumber \\
&=& (1+\delta)^m \sum\limits_{i} \mathop\mathbb{E}\limits_{v_{-i}\sim \cD_{-i}} \mathop\mathbb{E}\limits_{v'_{i}\sim \cD'_{i}}
 p_i(v'_i,\cD_i,v_{-i})
\leq (1+\delta)^m Rev(\cM_{ECBVCG}(\hat{\cI})).
\end{eqnarray}
The first equality above holds because drawing $v_{i}$ from $\cD_{i}$ and rounding down to the support of $\cD'_{i}$
is equivalent to drawing $v'_{i}$ from $\cD'_{i}$.
The second equality is by the definition of $p_i(v'_i,\cD_i,v_{-i})$,
and the last inequality holds by Equation \ref{equ:upperbvcg}.

By Equations \ref{equ:add:bvcg:2} and \ref{equ:bvcg:4}, we have
\begin{eqnarray}\label{equ:bvcg:5}
&& Rev(\cM_{BVCG}({\cI})) \nonumber \\
&\leq& (1+\delta)^m Rev(\cM_{EQBVCG}({\cI})) \nonumber \\
&+&
\hspace{-5pt} \sum_i \mathop\mathbb{E}\limits_{v_{-i}\sim \cD_{-i}}
\mathop\mathbb{E}\limits_{v_i\sim \cD_i}
\hspace{-5pt} {\bf I}_{\exists j, q_{ij}(v_{ij}) \leq \epsilon_1}
{\bf I}_{\sum_{j:v_{ij}\geq \beta_{ij}} (v_{ij}-\beta_{ij}) \geq e(\cD_i,v_{-i})}
\left(e(\cD_i,v_{-i}) + \sum_j \beta_{ij} {\bf I}_{v_{ij}\geq \beta_{ij}}\right).\nonumber\\
\end{eqnarray}

\noindent
For the last line of Equation \ref{equ:bvcg:5},
%\begin{small}
we have
\begin{eqnarray}\label{eq:bvcg:tail}
& & \sum_i \mathop\mathbb{E}\limits_{v_{-i}\sim \cD_{-i}}
\mathop\mathbb{E}\limits_{v_i\sim \cD_i}
\hspace{-5pt} {\bf I}_{\exists j, q_{ij}(v_{ij}) \leq \epsilon_1}
{\bf I}_{\sum_{j:v_{ij}\geq \beta_{ij}} (v_{ij}-\beta_{ij}) \geq e(\cD_i,v_{-i})}
\left(e(\cD_i,v_{-i}) + \sum_j \beta_{ij} {\bf I}_{v_{ij}\geq \beta_{ij}}\right) \nonumber \\
&=&
\mathop\mathbb{E}\limits_{v\sim \cD}
\sum_i {\bf I}_{\exists j, q_{ij}(v_{ij}) \leq \epsilon_1}
{\bf I}_{\sum_{j:v_{ij}\geq \beta_{ij}} (v_{ij}-\beta_{ij}) \geq e(\cD_i,v_{-i})}
\left(e(\cD_i,v_{-i}) + \sum_j \beta_{ij} {\bf I}_{v_{ij}\geq \beta_{ij}}\right) \nonumber \\
&\leq&
\mathop\mathbb{E}\limits_{v\sim \cD}
{\bf I}_{\exists i,j, q_{ij}(v_{ij}) \leq \epsilon_1}
\sum_i
{\bf I}_{\sum_{j:v_{ij}\geq \beta_{ij}} (v_{ij}-\beta_{ij}) \geq e(\cD_i,v_{-i})}
\left(e(\cD_i,v_{-i}) + \sum_j \beta_{ij} {\bf I}_{v_{ij}\geq \beta_{ij}}\right) \nonumber \\
&=& \mathop\mathbb{E}\limits_{v\sim \cD}
{\bf I}_{\exists i,j, q_{ij}(v_{ij}) \leq \epsilon_1} Rev(\cM_{BVCG}(v; \cI)) \leq  \frac{\epsilon}{10(1+\epsilon)} OPT(\cI).
\end{eqnarray}
The first inequality above is because, for each player $i$ and valuation profile $v$,
${\bf I}_{\exists j, q_{ij}(v_{ij}) \leq \epsilon_1} \leq {\bf I}_{\exists i,j, q_{ij}(v_{ij}) \leq \epsilon_1}$.
The second inequality is by
the Small-Tail Assumption 1.

Combining Equations \ref{equ:bvcg:5} and \ref{eq:bvcg:tail}, we have
$$Rev(\cM_{BVCG}({\cI}))
\leq (1+\delta)^m Rev(\cM_{EQBVCG}(\cI)) + \frac{\epsilon}{10(1+\epsilon)} OPT(\cI).$$
By the construction of Mechanism \ref{mechanism:eqbvcg},
%$\delta = \frac{\epsilon}{3m}$ and thus
$(1+\delta)^m = 1+\frac{\epsilon}{5}$.
Therefore  Lemma \ref{lem:additive:bvcg} holds.
\end{proof}

\begin{proof}
[Proof of Theorem \ref{thm:eca}]
First, as both $\cM_{EQBVCG}$ and $\cM_{EQIM}$ are DSIC, $\cM_{EQA}$ is DSIC.
Second, note that $\cM_{EQA}$ runs both mechanisms with
$\delta = (1+\frac{\epsilon}{5})^{1/m}-1$ and $\epsilon_1 = h(\frac{\epsilon}{10(1+\epsilon)})$.
To ease the analysis, when running mechanism $\cM_{EQIM}$, let
$\delta = \frac{\epsilon}{15}$ and $\epsilon_1 = h(\frac{2\epsilon}{3(5+\epsilon)})$:
 that is, set $\epsilon' = \frac{\epsilon}{5}$ and run mechanism $\cM_{EQM}$ with parameter $\epsilon'$ for
 each item.
By Theorem \ref{thm:single},
with $O(- mn\log_{1+\frac{\epsilon}{15}} h(\frac{2\epsilon}{3(5+\epsilon)}))$ queries,
$$Rev(\cM_{EQIM}(\cI)) \geq \frac{1}{1+\frac{\epsilon}{5}} Rev(\cM_{IM}(\cI)).$$
By Lemma \ref{lem:additive:bvcg},
with $O(- m^2n\log_{1+\frac{\epsilon}{5}} h(\frac{\epsilon}{10(1+\epsilon)}))$ queries,
$$Rev(\cM_{EQBVCG}(\cI))
\geq \frac{1}{1+\frac{\epsilon}{5}} \left(Rev(\cM_{BVCG}(\cI)) - \frac{\epsilon}{10(1+\epsilon)} OPT(\cI)\right).$$
Note that the total query complexity is still $O(- m^2n\log_{1+\frac{\epsilon}{5}} h(\frac{\epsilon}{10(1+\epsilon)}))$.

Let mechanism $\cM_{EQA}$ run
$\cM_{EQBVCG}$ with probability $\frac{1}{4}$ and
$\cM_{EQIM}$ with probability $\frac{3}{4}$.
We have
\begin{eqnarray*}
&& Rev({\cM}_{EQA}(\cI))
= \frac{1}{4} Rev({\cM}_{EQBVCG}(\cI)) + \frac{3}{4} Rev({\cM}_{EQIM}(\cI))\\
&\geq&\frac{1}{4(1+\frac{\epsilon}{5})} \left(Rev(\cM_{BVCG}(\cI)) - \frac{\epsilon}{10(1+\epsilon)} OPT(\cI)\right) +
\frac{3}{4(1+\frac{\epsilon}{5})} Rev(\cM_{IM}(\cI)) \\
&\geq& \frac{1}{1+\frac{\epsilon}{5}}
\left(  \frac{1}{4}Rev(\cM_{BVCG}(\cI)) +
 \frac{3}{4}Rev(\cM_{IM}(\cI))
 - \frac{\epsilon}{10(1+\epsilon)} OPT(\cI)\right) \\
&\geq& \frac{1}{1+\frac{\epsilon}{5}}
\left( \frac{1}{8}OPT(\cI)
 - \frac{\epsilon}{10(1+\epsilon)} OPT(\cI)\right)
= \dfrac{1}{8(1+\epsilon)} OPT(\cI).
\end{eqnarray*}

The last inequality above holds because $2\cM_{BVCG}(\cI) + 6\cM_{IM}(\cI) \geq OPT(\cI)$ \cite{cai2016duality}.
Thus Theorem~\ref{thm:eca} holds.
\end{proof}

The main advantage of using quantile queries is to handle unbounded distributions.
In addition, we can use the resulting query mechanisms to construct {\em sampling} mechanisms;
see Section~\ref{sec:app:sample}.
As shown in Theorem \ref{thm:eca},
the query complexity of mechanism $\cM_{EQA}$ has an extra factor of~$m$ compared with that of $\cM_{EVA}$ (and the lower bound).
It would be interesting to see whether our lower-bounds can be improved in this scenario.

\subsection{Using Quantile Queries for Bounded Distributions}
\label{sec:bounded:quantile}

As a corollary, Theorems \ref{thm:single}, \ref{thm:unit-demand} and \ref{thm:eca}
also provide another way to approximate the optimal BIC revenue using only quantile queries when the distributions are bounded.
More precisely, we have the following.
\begin{corollary}
\label{coro:quantile:bounded}
For any $\epsilon > 0$, $H>1$,
and prior distribution $\cD$
with each $\cD_{ij}$ bounded within $[1,H]$,
there exist DSIC mechanisms that use
$O(mn\log_{1+\epsilon} \frac{nmH(1+\epsilon)}{\epsilon})$ quantile queries for single-item auctions and unit-demand auctions,
and use $O(m^2n\log_{1+\epsilon} \frac{nmH(1+\epsilon)}{\epsilon})$ quantile queries for additive auctions,
whose approximation ratios to $OPT$ are respectively $1+\epsilon$, $24(1+\epsilon)$ and $8(1+\epsilon)$.
\end{corollary}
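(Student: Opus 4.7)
The plan is to realize Corollary \ref{coro:quantile:bounded} as an immediate instantiation of Theorems \ref{thm:single}, \ref{thm:unit-demand}, and \ref{thm:eca} by exhibiting an explicit small-tail function $h$ that every distribution supported in $[1,H]$ automatically enjoys. Nothing in the three theorems is sensitive to whether the distribution is truly unbounded: what the mechanisms $\cM_{EQM}, \cM_{EQUD}, \cM_{EQA}$ consume is a function $h$, together with the corresponding small-tail inequality, and then the query complexity and approximation ratio follow mechanically. So the task is to pin down the right $h$ and simplify the $\log$.

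The first step is to derive $h$ from three elementary facts about distributions on $[1,H]$. First, because $v_{ij}\geq 1$, posting price $1$ always succeeds, which yields $OPT(\cI) \geq 1$ for single-item and unit-demand auctions and $OPT(\cI)\geq nm$ for additive auctions (every player buys every item). Second, because $v_{ij}\leq H$, the revenue of any ex-post-IR mechanism on any realization is deterministically at most $H$ in the single-item case, at most $nH$ in the unit-demand case (each player wins at most one item), and at most $nmH$ in the additive case. Third, a union bound gives $\Pr_{v\sim \cD}[\exists i,j: q_{ij}(v_{ij})\leq \epsilon_1]\leq nm\epsilon_1$. Combining the three via $\bE[{\bf I}_{\text{tail}}\cdot Rev]\leq \Pr[\text{tail}]\cdot R_{\max}$ and matching against $\delta_1 OPT$, I would set $h(\delta_1)=\delta_1/(nH)$ for single-item, $h(\delta_1)=\delta_1/(mn^2H)$ for unit-demand (both satisfying Small-Tail Assumption~2 with the optimal BIC mechanism in place of a generic $\cM$), and $h(\delta_1)=\delta_1/(nmH)$ for additive (satisfying the stronger Small-Tail Assumption~1 because the bound $R_{\max}\leq nmH$ holds uniformly over all ex-post-IR $\cM$).

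The second step is to feed these functions $h$ into the query-complexity bounds of the three theorems and absorb constants. In every case $-\log_{1+\Theta(\epsilon)} h(\Theta(\epsilon/(1+\epsilon))) = O(\log_{1+\epsilon} \tfrac{nmH(1+\epsilon)}{\epsilon})$, since the extra $n,m,n^2$ factors in the denominator of $h$ only contribute an additive $O(\log(nm))$ term inside the logarithm, which is dominated. Multiplying by the prefactors $n$, $mn$, $m^2n$ from Theorems \ref{thm:single}, \ref{thm:unit-demand}, \ref{thm:eca} respectively yields exactly the query complexities in the corollary. The approximation ratios $1+\epsilon$, $24(1+\epsilon)$, $8(1+\epsilon)$ transfer verbatim from the three theorems.

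There is no real obstacle: the proof is a verification that bounded-support distributions are a special case of small-tail distributions, followed by algebraic simplification of the log. The only place requiring mild attention is the additive case, where the worst-case tail revenue scales as $n^2m^2H\epsilon_1$ rather than $nmH\epsilon_1$; this is absorbed by the matching $OPT\geq nm$ lower bound, so the resulting $h$ still scales as $\delta_1/(nmH)$ and the claimed query complexity is preserved.
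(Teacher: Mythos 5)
Your overall strategy is exactly the paper's: exhibit an explicit tail function $h$ for distributions supported in $[1,H]$, verify the small-tail inequality via a union bound times a worst-case revenue bound, and then invoke Theorems \ref{thm:single}, \ref{thm:unit-demand} and \ref{thm:eca}. The single-item and unit-demand instantiations are correct, and the log simplification is fine.

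However, your justification for the additive case contains a false claim. You assert $OPT(\cI)\geq nm$ because ``every player buys every item,'' but items are rival goods: each item can be allocated to at most one player, so posting price $1$ on every item yields revenue $m$, not $nm$, and indeed $OPT(\cI)\geq m$ is the best such bound (in fact $OPT\leq mH$). Symmetrically, your per-profile revenue bound $nmH$ is valid but loose: by individual rationality the total payment is at most $\sum_i \sum_{j\in S_i} v_{ij}\leq mH$ since the allocated sets $S_i$ are disjoint. These two factor-of-$n$ errors cancel, so your final $h(\delta_1)=\delta_1/(nmH)$ does satisfy Small-Tail Assumption 1 --- the correct chain is $\bE[\mathbf{I}_{\mathrm{tail}}\cdot Rev]\leq (nm\epsilon_1)\cdot mH = m\delta_1 \leq \delta_1\, OPT(\cI)$ --- but the derivation as you wrote it does not establish this. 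For comparison, the paper sets $h(\delta_1)=\delta_1/(m^2nH)$ using $R_{\max}\leq mH$ and the cruder bound $OPT\geq 1$; either choice gives the claimed $O(m^2n\log_{1+\epsilon}\frac{nmH(1+\epsilon)}{\epsilon})$ query complexity since the extra polynomial factors of $n,m$ sit inside the logarithm. Please repair the additive argument by replacing $OPT\geq nm$ with $OPT\geq m$ and $R_{\max}\leq nmH$ with $R_{\max}\leq mH$.
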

\begin{proof}
We only need to show that the Small-Tail Assumptions 1 and 2 are naturally satisfied when the distributions have bounded supports.
For example, consider additive auctions where all values are in $[1, H]$, as considered in \cite{huang2015making, cole2014sample}.
Then $mH$ and $1$ are straightforward upper- and lower-bounds for $OPT(\cI)$, respectively.
Moreover, by individual rationality,
$mH$ is an upper-bound for the revenue generated under any valuation profiles.
Given $\delta_1$, let $\epsilon_{1} = h(\delta_1) =\frac{\delta_1}{m^2nH}$ and denote by $E$ the event that there exist
at least one player~$i$ and one item $j$ with $q_{ij}(v_{ij}) \leq \epsilon_{1}$. By the union bound,
$\Pr[ E ]\leq mn \epsilon_1 = mn \cdot \frac{\delta_1}{m^{2}nH} = \frac{\delta_1}{mH}$.
Therefore
$$
\mathop\bE\limits_{v \sim \cD} {\bf I}_{\exists i,j, q_{ij}(v_{ij}) \leq \epsilon_1}
Rev(\cM(v;\cI))
\leq mH\cdot \Pr[ E ] \leq \delta_1 \leq \delta_1 OPT(\cI).
$$
Combining this observation with Theorems \ref{thm:single}, \ref{thm:unit-demand} and \ref{thm:eca},
we have Corollary \ref{coro:quantile:bounded} when the values are all bounded in $[1,H]$.
\end{proof}

\section{Single-Item Auctions with Regular Distributions}
\label{sec:single:regular}

In this section, we show that when we only consider regular distributions for single-item auctions,
the query complexity can be much lower.
In fact,
we no longer need the small-tail assumptions even when the supports are unbounded.
Here our lower- and upper-bounds are {\em tight} upto a logarithmic factor,
and require different techniques from previous sections.

For the lower-bound, recall that in Section \ref{sec:lower-bound} we allow the distributions to be irregular.
To construct the desired distributions, we can first find the
un-queried quantile interval and then move the probability mass from its end points to internal points.
Because the distributions can be irregular, we have complete control on where to put the probability mass.
However, if the distributions have to be regular then this cannot be done.
Instead, we start from two different single-peak revenue curves and construct regular distributions from them.
We still want to move probability mass from the end points of the un-queried quantile interval to internal points,
but such moves must be continuous in order to preserve regularity.

For the upper-bound, we show that regular distributions satisfy the
small-tail property with a properly defined tail function.
Thus our techniques for distributions with small-tails directly apply here.

\subsection{Lower Bound}

With regular distributions, by \cite{dhangwatnotai2015revenue} it is sufficient to use a single sample to achieve $2$-approximation
in revenue for {\em single-player} single-item auctions.
Because every distribution is a uniform distribution in the quantile
space,
a sample for such auctions
can be obtained by first choosing a quantile~$q$ uniformly at random from $[0,1]$ and then making a quantile query.
Thus, a single query is also sufficient for $2$-approximation in this case.
As such, unlike Theorem \ref{thm:lower:multi} where we have proved lower bounds for the query complexity for arbitrary constant approximations,
for regular distributions we consider lower bounds for $(1+\epsilon)$-approximations, where $\epsilon$ is sufficiently small.
More precisely, we have the following.

\begin{theorem}\label{thm:low:regular}
For any constant $\epsilon \in (0, \frac{1}{64})$, there exists a constant $C$ such that,
for any $n\geq 1$,
any DSIC Bayesian mechanism $\cM$
making less than $C n\epsilon^{-1}$
non-adaptive value and quantile queries to the oracle,
there exists a multi-player single-item Bayesian auction instance $\cI = (N, M, \cD)$
where $|N|=n$ and $\cD$ is regular, such that
$Rev(\cM(\cI))<
\frac{OPT(\cI)}{1+\epsilon}$.
\end{theorem}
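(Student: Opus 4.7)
}
My approach is to invoke Yao's minimax principle: fix a distribution $\cP$ over regular single-item Bayesian instances and show that every \emph{deterministic} non-adaptive querying mechanism using fewer than $Cn\epsilon^{-1}$ queries gets expected revenue strictly less than $\bE[OPT(\cI)]/(1+\epsilon)$ under $\cP$. This will imply the claimed lower bound against randomized mechanisms. As in the proof of Theorem~\ref{thm:lower:multi}, I would first reduce to the single-player case: take $n$ independent copies of the single-player hard distribution (or use degenerate-at-zero distributions for all but one pigeonhole player when the mechanism concentrates its queries on a single bidder). Because each player's prior is independent, no query on one player informs the mechanism about another, so a total budget of $<Cn\epsilon^{-1}$ queries forces $<C\epsilon^{-1}$ queries on at least one target player; a proper averaging argument over the $n$ i.i.d.\ copies then multiplies the single-player lower bound by $n$.

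The core of the proof is the single-player $\Omega(\epsilon^{-1})$ lower bound. I would work directly in \emph{revenue-curve space}, using the well-known fact that a distribution is regular iff the map $R(q)=q\cdot v(q)$ is concave on $[0,1]$ with $R(0)=R(1)=0$; once I can build a suitable concave $R$ I recover the distribution. Starting from a fixed single-peak baseline $R_0$ (e.g.\ a tent function with a kink at its maximum), I observe that with fewer than $C\epsilon^{-1}$ non-adaptive value/quantile queries the queried quantiles leave an un-queried interval $[q_a,q_b]$ of length $L=\Omega(\epsilon)$ (a query $(v,q(v))$ and a query $(v(q),q)$ are equivalent as one point on the CDF, so both types are handled together). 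I would then construct two regular distributions $D_1,D_2$ whose revenue curves agree with $R_0$ on $[0,q_a]\cup[q_b,1]$, hence answer every query identically, but differ inside $(q_a,q_b)$: both remain globally concave by matching the one-sided slopes of $R_0$ at $q_a,q_b$, but they have peaks at two locations $q_1^\star,q_2^\star\in(q_a,q_b)$ separated by $\Theta(L)=\Theta(\epsilon)$ and of nearly equal height $H$. For such tent-shaped perturbations one can compute that the revenue at $D_2$'s peak location evaluated on $D_1$ drops by a constant fraction of $(H-R_0(q_a))$; choosing $R_0$ and the tent heights so that $H/R_0(q_a^\star)=1+\Theta(\epsilon/L)\cdot L=1+\Theta(\epsilon)$ produces the desired $(1+\epsilon)$ ratio between $OPT$ and the revenue achievable by any posted-price distribution that is not concentrated near the correct peak.

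The final step is to argue that no single mechanism can succeed on both $D_1$ and $D_2$. Since queries return the same answers on both, the mechanism is identical in the two worlds and, by the DSIC reduction to a (randomized) posted-price on each player's value, it ends up sampling prices from a single distribution $\mu$ independent of which $D_i$ is realized. I would partition $\mu$ into mass placed ``near $q_1^\star$'', ``near $q_2^\star$'', and ``elsewhere'', and show that in each case at least one of $\bE_{p\sim\mu}[R_{D_1}(p)]$ or $\bE_{p\sim\mu}[R_{D_2}(p)]$ is at most $H/(1+\epsilon)=OPT(D_i)/(1+\epsilon)$, yielding the contradiction.

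\paragraph*{Main obstacle.} The chief difficulty, and the reason the paper explicitly contrasts this case with the irregular lower bound, is that regularity forbids placing isolated probability spikes: every modification inside an un-queried gap must be implemented by a \emph{globally concave} perturbation of $R_0$. This tightly couples the maximum achievable peak height inside the gap to the one-sided slopes of $R_0$ at $q_a,q_b$ and to the gap length $L$, so I must choose the baseline $R_0$ delicately (e.g.\ with a ``kinked'' peak whose one-sided slopes leave $\Theta(L)$ of linear headroom, rather than a smooth parabolic peak whose headroom is only $\Theta(L^2)$) to get the lift to scale linearly in $L=\Theta(\epsilon)$ and so produce a genuine $(1+\epsilon)$-gap in $OPT$. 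Verifying that the same adversarial pair $(D_1,D_2)$ can be found \emph{after} seeing the query locations, while simultaneously satisfying concavity, boundary matching, and the price-separation condition of the previous paragraph, is the technical heart of the argument.
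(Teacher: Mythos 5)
Your plan matches the paper's proof in all essentials: a pigeonhole argument locating an un-queried quantile interval of length $\Theta(\epsilon)$, the construction of two regular distributions from two single-peak (piecewise-linear, concave) revenue curves that agree outside that interval but whose peaks sit at different quantiles inside it with heights differing by a $1+\Theta(\epsilon)$ factor, and a final case analysis on the mechanism's (randomized) posted price showing it must lose a $(1+\epsilon)$ factor on at least one of the two instances; the paper likewise defers the multi-player extension to the reduction of Theorem~\ref{thm:lower:multi} and neutralizes value queries by pushing the construction's values arbitrarily high. Your "main obstacle" paragraph correctly identifies the concavity/headroom issue that the paper's explicit piecewise-linear revenue curves are designed to resolve.
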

We only prove Theorem \ref{thm:low:regular} for the single-player case, as in the following lemma.
The lower bound for general multi-player single-item auctions can be proved using the same technique as in Theorem~\ref{thm:lower:multi},
thus the full proof
%of Theorem \ref{thm:low:regular}
has been omitted.

\begin{lemma}\label{lem:low:regular}
For any constant $\epsilon \in (0, \frac{1}{64})$, there exists a constant $C$ such that,
for any DSIC Bayesian mechanism $\cM$
making less than $C/\epsilon$
non-adaptive value and quantile queries to the oracle,
there exists a single-player single-item Bayesian auction instance $\cI = (N, M, \cD)$
where $\cD$ is regular, such that
$Rev(\cM(\cI))<
\frac{OPT(\cI)}{1+\epsilon}$.
\end{lemma}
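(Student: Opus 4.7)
My plan is to adapt the standard pigeonhole argument to the regular setting by carefully designing a baseline regular distribution whose revenue curve has ``sharp'' (piecewise-linear) boundary slopes, so that local perturbations inside an un-queried quantile interval can create revenue peaks whose locations differ by $\Theta(\epsilon)$, while still keeping the curve concave. The natural baseline is the symmetric tent $R_0(q) = 2q$ on $[0,1/2]$ and $R_0(q) = 2(1-q)$ on $[1/2,1]$, whose associated distribution (a $\tfrac12$ point-mass at value $2$ plus a continuous tail) is regular and has revenue curve with nonzero one-sided slopes everywhere except the peak.

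First, using pigeonhole on a partition of the quantile space $[0,1]$ into $k$ equal intervals and of the value space $[0,2]$ into $k$ equal intervals, with $k = \Theta(1/\epsilon)$, I select $C$ small enough so that fewer than $C/\epsilon$ queries leave some ``rectangle'' untouched: a quantile interval $(a,b)$ of length $\Theta(\epsilon)$ containing no quantile query, whose associated value interval $(v_0(b), v_0(a))$ contains no value query. By averaging, I can further assume this $(a,b)$ straddles the baseline peak $q=1/2$, so the baseline's left-slope at $a$ is $+2$ and right-slope at $b$ is $-2$.

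Next, I define two regular distributions $D_1, D_2$ via concave piecewise-linear revenue curves $R_1, R_2$ that agree with $R_0$ outside $(a,b)$ and inside are tents with peaks at two different locations $q_1^* = a + (b-a)/4$ and $q_2^* = b - (b-a)/4$, each at the maximum height permitted by concavity at the boundary (which works out to $h = 1 - \Theta(\epsilon)$). Concavity on $[0,1]$ follows from a direct check of the monotonicity of slopes at $a,q_i^*,b$; regularity then follows from the general fact (concave $R$ with $R(0)=0$ yields monotone non-increasing $v(q) = R(q)/q$). By construction $R_1 \equiv R_2 \equiv R_0$ outside $(a,b)$ and $v_1 \equiv v_2 \equiv v_0$ outside the un-queried value interval, so every oracle query returns the same answer under both distributions and the (randomized) DSIC mechanism must post the \emph{same} distribution over prices on $D_1$ and on $D_2$.

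Finally, I bound $\max_q (R_1(q) + R_2(q))$. Outside $(a,b)$ the sum equals $2R_0(q)\leq 2(1-\Theta(\epsilon))$; inside $(a,b)$, an explicit computation on the four linear pieces shows the sum is also bounded by $2h - \Theta(\epsilon) = 2 - \Theta(\epsilon)$, crucially because the two tents are offset: at each $q\in(a,b)$, at least one of $R_1(q), R_2(q)$ is appreciably below $h$. Linearity of expectation then gives $\min_i \mathbb{E}[R_i(q^*)] \le h - \Theta(\epsilon)$, and therefore the approximation ratio on the worse distribution is at least $h/(h-\Theta(\epsilon)) \ge 1+\epsilon$, after adjusting the hidden constants. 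The extension from single-player to $n$-player follows the reduction framework of Theorem~\ref{thm:lower:multi}, replacing Lemma~\ref{thm:low:eps} by the single-player regular bound just established.

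The main obstacle is Step 3: showing that \emph{two} concave extensions of $R_0$ on $(a,b)$ with peaks at distinct locations can simultaneously match the baseline's boundary slopes and have a $\Theta(\epsilon)$ additive gap between their sum and $2h$. A smooth (e.g., parabolic) baseline fails because the allowed perturbation scales as $(b-a)^2 = \Theta(\epsilon^2)$, giving only a $(1+\epsilon^2)$ gap and a weaker $\Omega(1/\sqrt{\epsilon})$ lower bound; the piecewise-linear tent baseline is chosen precisely to get the linear-in-$\epsilon$ gap needed for the $\Omega(1/\epsilon)$ query lower bound.
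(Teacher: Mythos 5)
There is a genuine gap at the step ``by averaging, I can further assume this $(a,b)$ straddles the baseline peak $q=1/2$.'' No averaging argument gives you this: the mechanism is quantified first, and the pigeonhole only guarantees that \emph{some} interval of your fixed partition is un-queried --- it does not let you choose which one. A mechanism that spends a single query on the quantile $1/2$ (or a few queries densely covering the middle cell) guarantees that the peak-containing cell is always touched, so the un-queried interval $(a,b)$ lies entirely on one side of the peak. There your construction collapses: if, say, $b<1/2$, the baseline tent has slope $+2$ immediately to the left of $a$ \emph{and} immediately to the right of $b$, and concavity (slopes non-increasing) forces any concave revenue curve agreeing with $R_0$ outside $(a,b)$ to have slope exactly $2$ on all of $(a,b)$, i.e.\ to coincide with $R_0$. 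So no two distinct regular perturbations exist, and the adversary has nothing to play. (A secondary fragility: with the peak fixed at $q=1/2$, all perturbed values live in an $O(\epsilon)$-window around $v=2$, so a handful of value queries near $2$ also threaten indistinguishability; your $2$D pigeonhole on a bounded value range does not obviously survive this concentration.)

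The paper's proof avoids both problems by making the construction \emph{relative to} the un-queried quantile interval $(q_t,q_{t+1})$ rather than perturbing a fixed baseline: the common part of the two revenue curves is itself a tent whose apex region is exactly $(q_t,q_{t+1})$ (linear with slope $R/q_t$ on $[0,q_t]$ and slope $-R/(1-q_{t+1})$ on $[q_{t+1},1]$), and the two regular distributions $\cD_1,\cD_2$ fill in that region with single-peaked concave pieces whose maxima sit at different quantiles and different heights, separated by $\Theta(\delta\epsilon)$. Since the apex region is wherever the mechanism failed to query, there is no fixed point for the mechanism to attack. Value queries are neutralized not by a second pigeonhole but by taking the distributions unbounded with a large scale parameter $R$, so that finitely many value queries return nothing useful. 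Your instinct that a piecewise-linear (rather than smooth) revenue curve is needed to get a $\Theta(\epsilon)$ rather than $\Theta(\epsilon^2)$ gap is correct and matches the paper; the fix you need is to let the location of the tent's apex be chosen after the un-queried interval is identified.
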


\begin{proof}
Since the distributions are unbounded, we can always construct the distributions such that for any finite number of value queries,
the responses for the value queries have almost none contribution to the optimal revenue.
Thus we only need to focus on the lower bound for quantile queries.

Letting $k \triangleq \lceil \frac{1}{\des}\rceil$ and
$C \triangleq  \frac{1-2\des}{2\delta}$.
Here $\delta$ is a constant to be determined later and $\delta, \epsilon$ satisfies that $k\geq 2$.
%Here $\delta$ is a constant smaller than $\frac{3-\sqrt{5}}{2\epsilon}$.
In our construction, we divide the quantile interval $[0, 1]$
into $k+1$ sub-intervals each, with the right-end points defined as follows:
 from left to right,
$q_{0} = 0$, $q_{t+1} = q_{t}+\des$ for each $t\in \{0, \dots, k-1\}$.

Accordingly, for any Bayesian mechanism $\cM$ that makes less than
$\frac{C}{\epsilon}$ non-adaptive quantile queries, % in expectation,
there exists a quantile interval $(q_t, q_{t+1})$ such that,
$q_{t+1} \leq 1-2\des$ and with probability at least $\frac{1}{2}$,
no quantile in $(q_{t}, q_{t+1})$ is queried. % for $\cD$.
Indeed, if this is not the case,
then with probability at least $\frac{1}{2}$,
all the quantile intervals except $(1-2\des, 1-\des)$ and $(1-\des, 1)$ are queried.
Since there are at least $k-2$ quantile intervals,
the expected total number of queries made by $\cM$
is at least $\frac{k}{2}-1 \geq \frac{1-2\des}{2\des} = \frac{C}{\epsilon}$,
a contradiction.

We now construct two different
single-player single-item Bayesian instances
$$\{\cI_z = (N, M, \cD_z)\}_{z\in\{1,2\}},$$
where the distributions
outside the quantile range
$(q_t, q_{t+1})$ are all the same.
Thus with probability at least $\frac{1}{2}$,
mechanism $\cM$ cannot distinguish the $\cI_z$'s from each other.
We then show that
when this happens,
mechanism $\cM$ cannot be a $(1+3\epsilon)$-approximation for all instances $\cI_z$.

Let $R$ be a parameter that is large enough such that no value query will get any useful response.
Then the first distribution $\cD_1$ with value bounded within $[0,\frac{R}{q_{t}}]$ is defined as follows, where $F_{1}(\cdot)$ is the cumulative probability function of $\cD_1$.
\[F_{1}(v)=\left\{\begin{array}{ll}
1- \frac{R}{(1-q_{t+1})v+R}, & 0\leq v < \frac{R}{q_{t}},\\
1, & v=\frac{R}{q_{t}}.
\end{array}\right.\]
That is there is a probability mass $\frac{q_t}{1-\des}$ at value $\frac{R}{q_{t}}$
and within interval $[0, \frac{R}{q_{t}})$ it is a continuous distribution.
Then for any quantile in range $(0,\frac{q_t}{1-\des}]$,
the oracle will response $\frac{R}{q_t}$.
For quantile $q$ in range $(\frac{q_t}{1-\des}, 1]$,
the oracle will response $v(q) = \frac{R}{1-q_{t+1}}(\frac{1}{q}-1)$.
Therefore the revenue function with related to the quantile $q$ is
\[R_{1}(q)=\left\{\begin{array}{ll}
\frac{R}{1-q_{t+1}}(1-q), & \frac{q_t}{1-\des}< q \leq 1,\\
\frac{R}{1-\des}, & q=\frac{q_t}{1-\des}.
\end{array}\right.\]
The revenue curve $R_{1}(q)$ is illustrated figure \ref{fig:regular:1}.

\begin{figure}[htbp]
\vspace{10pt}
\begin{center}
\centering
\setlength{\unitlength}{1.3cm}
\thinlines
\begin{picture}(6,4.5)

\put(0,0){\vector(1,0){6}}
\put(0,0){\vector(0,1){4.2}}

\put(-0.4,2.5){$R$}
\put(-0.7,3.5){$\frac{R}{1-\des}$}
\put(0,4.4){$R_1(q)$}

\put(2.5,-0.3){$q_{t}$}
\put(3.3,-0.3){$\frac{q_t}{1-\des}$}
\put(4,-0.3){$q_{t+1}$}
\put(0,-0.3){$0$}
\put(5.4,-0.3){$1$}
\put(5.8,-0.3){$q$}

{\thicklines
\put(0,0){\line(1,1){3.6}}
\put(3.6,3.6){\line(1,-2){1.8}}
}

\multiput(3.6,0)(0,0.2){18}{\line(0,1){0.1}}

\multiput(0,3.6)(0.2,0){18}{\line(1,0){0.1}}
\multiput(0,2.6)(0.2,0){21}{\line(1,0){0.1}}

\multiput(2.6,0)(0,0.2){13}{\line(0,1){0.1}}
\multiput(4.1,0)(0,0.2){13}{\line(0,1){0.1}}

\end{picture}
\end{center}
\caption{The revenue curve of $\cD_1$.}
\label{fig:regular:1}
\end{figure}

The second distribution $\cD_2$ with value bounded within $[0,\frac{R}{q_{t}}]$ is defined as follows,
where $F_{2}(\cdot)$ is the cumulative probability function of $\cD_2$.
Let $v^{*}=\frac{R(2-\des)}{2(1-\des)-(2-\des)(1-q_{t+1})}$.
Since $q_{t+1} \leq 1-2\des$,
%and $\delta<\frac{3-\sqrt{5}}{2\epsilon}$,
$v^{*}>0$ is well defined
and it is easy to check $v^{*}<\frac{R}{q_{t}}$.
\[F_{2}(v)=\left\{\begin{array}{ll}
1- \frac{R}{(1-q_{t+1})v+R}, & 0\leq v <v^{*} ,\\
1- \frac{R(1-\des)}{(1+q_{t}-\des)v-R}, & v^{*}\leq v < \frac{R}{q_{t}},\\
1, & v=\frac{R}{q_{t}}.
\end{array}\right.\]
That is, there is a probability mass $q_t$ at value $\frac{R}{q_{t}}$
and a two-step continuous distribution within $[q_{t},q^{*}]$ and $[q^{*},q_{t+1}]$.
Thus for any quantile in range $(0,q_t]$,
the oracle will response $\frac{R}{q_{t}}$.
It can be calculated that the quantile of value $v^{*}$ is $q^* = 1-\frac{2-\des}{2(1-\des)}\cdot (1-q_{t+1})$.
Then for quantile $q$ in range $(q_t, q^*]$,
the oracle will response $v(q) = \frac{R}{q}(1 - \frac{ q_t}{1+q_t-\des}) + \frac{R}{1+q_t-\des}$.
For quantile $q$ in range $(q^*, 1]$,
the oracle will response $v(q) = \frac{R}{1-q_{t+1}}(\frac{1}{q}-1)$.
Therefore the revenue function with related to the quantile $q$ is
\[R_{2}(q)=\left\{\begin{array}{ll}
\frac{R}{1-q_{t+1}}(1-q), & q^{*}< q \leq 1,\\
\frac{R}{1+q_{t}-\des}(1+q-\des), & q_{t}\leq q < q^{*},\\
R, & q = q_t.
\end{array}\right.\]
The revenue curve $R_{2}(q)$ is illustrated figure \ref{fig:regular:2}.

\begin{figure}[htbp]
\vspace{10pt}
\begin{center}
\centering
\setlength{\unitlength}{1.3cm}
\thinlines
\begin{picture}(6,4.1)

\put(0,0){\vector(1,0){6}}
\put(0,0){\vector(0,1){3.9}}

\put(-0.4,2.5){$R$}
\put(-1.6,3.1){$\frac{R}{2(1-\des)}+\frac{R}{2}$}
\put(0,4.1){$R_2(q)$}

\put(2.5,-0.3){$q_{t}$}
\put(3.6,-0.3){$q^*$}
\put(4,-0.3){$q_{t+1}$}
\put(-0.3,0){$0$}
\put(5.4,-0.3){$1$}
\put(5.8,-0.3){$q$}

{\thicklines
\put(0,0){\line(1,1){2.6}}
\put(2.6,2.6){\line(2,1){1.2}}
\put(3.8,3.2){\line(1,-2){1.6}}
}

%\put(2.6,2.6){\line(1,1){1.2}}
%\multiput(2.6,2.6)(0.1,0.1){3}{\line(1,1){0.1}}

\multiput(0,3.2)(0.2,0){19}{\line(1,0){0.1}}
\multiput(0,2.6)(0.2,0){21}{\line(1,0){0.1}}

\multiput(2.6,0)(0,0.2){13}{\line(0,1){0.1}}
\multiput(4.1,0)(0,0.2){13}{\line(0,1){0.1}}
\multiput(3.8,0)(0,0.2){16}{\line(0,1){0.1}}

\end{picture}
\end{center}
\caption{The revenue curve of $\cD_2$.}
\label{fig:regular:2}
\end{figure}

Indeed when the quantile query is from $[0, q_{t}] \cup [q_{t+1}, 1]$,
the oracle's answers for all distributions are the same.
Accordingly, with  probability at least $\frac{1}{2}$,
mechanism $\cM$
cannot distinguish $\cD_z$'s from each other,
which means it cannot distinguish
$\cI_z$'s from each other, as desired.

Since $\cM$ is truthful, the allocation rule for the player must be monotone
and he will pay the threshold payment set by $\cM$, denoted by $P$.
Let $P^{*}=\frac{(4-\des)R}{4(1-\des)-(4-\des)(1-q_{t+1})}$.
Here $P$ may be randomized.
Recall that $OPT(\cI_1) = \frac{R}{1-\des}$.
If with probability $\frac{1}{2}$ setting the price $P \leq P^{*}$, then for instance $\cI_1$, we have
\begin{eqnarray*}
&&Rev(\cM(\cI_1)) \leq \frac{1}{2} OPT(\cI_1) + \frac{1}{2}(\frac{3R}{4(1-\des)}+\frac{R}{4}) \\
&=& \frac{7R}{8(1-\des)} + \frac{R}{8}
=\frac{R}{1-\des}\left( 1-\frac{1}{8}\des\right)
%< \frac{R}{(1+4\epsilon)(1-\des)}
< \frac{OPT(\cI_1)}{1+4\epsilon}
\end{eqnarray*}
when $\delta \geq 32$.
On the other hand, recall that $OPT(\cI_2) = \frac{R}{2(1-\des)}+\frac{R}{2} = \frac{(2-\des)R}{2(1-\des)}$.
If with probability $\frac{1}{2}$, the price $P > P^{*}$, for instance $\cI_2$,
we have
\begin{eqnarray*}
&&Rev(\cM(\cI_2)) < \frac{1}{2} OPT(\cI_2) + \frac{(4-\des)R}{2(4-2\des)}
=  \frac{(2-\des)R}{4(1-\des)} + \frac{(4-\des)R}{2(4-2\des)} \\
% = \frac{R}{4(1-\des)} + \frac{R}{4} + \frac{(4-\des)R}{2(4-2\des)}
&=&\frac{(2-\des)R}{2(1-\des)} \left(\frac{1}{2}+\frac{(4-\des)(1-\des)}{2(2-\des)^{2}} \right)
=  OPT(\cI_2) \left(1 - \frac{\des}{2(2-\des)^{2}} \right)
%< \frac{1}{1+4\epsilon}(\frac{R}{2(1-\des)}+\frac{R}{2})
< \frac{OPT(\cI_2)}{1+4\epsilon}
\end{eqnarray*}
when $\delta \geq 32$.
Thus for any mechanism $\cM$ with $O(\frac{1}{\epsilon})$ quantile queries,
there exists $z^{*}\in \{0, 1\}$ such that when $\epsilon < \frac{1}{64}$
and $\delta = 32$,
$$Rev(\cM(\cI_{z^{*}}))
\leq \frac{OPT(\cI_{z^*})}{2} + \frac{OPT(\cI_{z^*})}{2(1+4\epsilon)}
< \frac{OPT(\cI_{z^{*}})}{1+\epsilon}.$$
Therefore Lemma \ref{lem:low:regular} holds.
\end{proof}

\subsection{Upper Bound}

Our mechanism $\cM_{EMR}$ (i.e., ``Efficient quantile Myerson mechanism for Regular distributions'')
first constructs the distribution $\cD'$ that approximates $\cD$ using the quantile-query algorithm $\cA_Q$
with parameters
$\delta = \frac{\epsilon}{4}$ and $\epsilon_1 = \frac{\epsilon^2}{256n}$;
and then runs Myerson's mechanism $\cM_{MRS}$ on $\cD'$.
Formally, we have the following theorem.

\begin{theorem}\label{thm:single:regular}
$\forall \epsilon \in (0,1)$, and for any single-item instance $\cI = (N,M,\cD)$
where $\cD$ is regular, mechanism $\cM_{EMR}$ is DSIC, has query complexity $O(n\log_{1+\epsilon} \frac{n}{\epsilon})$,
and
$Rev(\cM_{EMR}(\cI)) \geq \frac{OPT(\cI)}{1+\epsilon}$.
\end{theorem}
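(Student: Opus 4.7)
The plan is to follow the template of Theorem~\ref{thm:single}, replacing its direct appeal to Small-Tail Assumption~2 by a proof that regular distributions automatically satisfy that assumption with a tail function suited to the parameters of $\cM_{EMR}$. The routine pieces are quick: DSIC of $\cM_{EMR}$ is inherited from $\cM_{MRS}$, and the query complexity is $n\cdot(\lceil\log_{1+\epsilon/4}(256n/\epsilon^2)\rceil+1)=O(n\log_{1+\epsilon}(n/\epsilon))$ since $\cA_Q$ is invoked once per bidder with $\delta=\epsilon/4$ and $\epsilon_1=\epsilon^2/(256n)$. For the revenue bound, because $\cD'$ produced by $\cA_Q$ is stochastically dominated by $\cD$, revenue monotonicity gives $Rev(\cM_{EMR}(\cI))=Rev(\cM_{MRS}(\cI;\cD'))\geq Rev(\cM_{MRS}(\cI'))$, so it suffices to prove $Rev(\cM_{MRS}(\cI'))\geq OPT(\cI)/(1+\epsilon)$.

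To do this I would first establish that every regular single-item instance with $n$ bidders satisfies Small-Tail Assumption~2 with tail function $h(\delta_1)=\delta_1^2/(256n)$, i.e.,
\begin{equation*}
\mathop\bE_{v\sim\cD}\bigl[\mathbf{I}_{\exists i:\,q_i(v_i)\leq\epsilon_1}\,Rev_{OPT}(v;\cI)\bigr]\leq\delta_1\cdot OPT(\cI)
\quad\text{for}\quad\epsilon_1=\delta_1^2/(256n),
\end{equation*}
and then invoke Lemma~\ref{lem:single:discretized} with $\delta=\epsilon/4$ and $\delta_1=\epsilon$, so that $h(\delta_1)$ matches the parameter of $\cA_Q$. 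The crucial property exploited in the tail bound is that in Myerson's mechanism the ex-post payment of the winning bidder is its threshold value, which is a function of the other bidders' virtual values rather than the winner's own (possibly very large) value; combined with the concavity of each revenue curve $R_i(q)$, this prevents the revenue on the tail event from inheriting the heavy tail of the value distribution.

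The main obstacle is obtaining the right $n$-dependence in the tail bound. Splitting the event $\{\min_j q_j\leq\epsilon_1\}$ by the identity $i^*$ of the smallest-quantile bidder and conditioning on $v_{-i^*}$, the contribution of $i^*$ equals $\mathop\bE[v_{i^*}^{\mathrm{thr}}\cdot\min(\epsilon_1,q_{i^*}^{\mathrm{thr}})]$, since $v_{i^*}^{\mathrm{thr}}$ is a function of $v_{-i^*}$ alone and is therefore independent of $\mathbf{I}_{q_{i^*}\leq\epsilon_1}$. A naive bound $\leq 2R_{i^*}^*$ per bidder sums to $\sum_i R_i^*$, which can be as large as $n\cdot OPT(\cI)$ in unfavourable regular instances and is therefore too loose. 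I expect the tight bound to emerge from coupling the two regimes $\{q_{i^*}^{\mathrm{thr}}>\epsilon_1\}$ and $\{q_{i^*}^{\mathrm{thr}}\leq\epsilon_1\}$ against the identity $\sum_{i^*}\mathop\bE[v_{i^*}^{\mathrm{thr}}\mathbf{I}_{v_{i^*}\geq v_{i^*}^{\mathrm{thr}}}]=OPT(\cI)$: in the first regime the payment scales as $\epsilon_1\cdot v_{i^*}^{\mathrm{thr}}$ while the threshold is charged with probability at least $\epsilon_1$, producing an amortization factor that reduces the summed per-bidder bound; the second regime is handled directly via $v_{i^*}^{\mathrm{thr}}q_{i^*}^{\mathrm{thr}}=R_{i^*}(q_{i^*}^{\mathrm{thr}})\leq R_{i^*}^*$ and Jensen's inequality against the concave $R_{i^*}$. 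The quadratic scaling $\epsilon_1=\Theta(\delta_1^2/n)$ then yields the required $\delta_1\cdot OPT(\cI)$ tail bound, after which Lemma~\ref{lem:single:discretized} completes the proof.
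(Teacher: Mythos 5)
There is a genuine gap at the heart of your plan: the key lemma you propose --- that every regular single-item instance satisfies Small-Tail Assumption~2 with $h(\delta_1)=\delta_1^2/(256n)$ --- is false. Take $n=1$ and a regular distribution whose (unique) monopoly quantile $q_0$ lies below $\epsilon_1$; such distributions exist for any $\epsilon_1$, e.g.\ the one induced by a concave revenue curve peaked at $q_0=\epsilon_1/2$ (say $R(q)=1-(1-q/q_0)^2$ on $[0,q_0]$ and $R(q)=1-\tfrac{1}{2}(q-q_0)^2/(1-q_0)^2$ on $[q_0,1]$). Then the sale event $\{q(v)\leq q_0\}$ is contained in the tail event $\{q(v)\leq\epsilon_1\}$, so $\bE_v[{\bf I}_{q(v)\leq\epsilon_1}Rev_{OPT}(v;\cI)]=OPT(\cI)$, not $\delta_1\,OPT(\cI)$. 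Your heuristic --- that the winner's payment is a threshold determined by the others' values and hence does not inherit the heavy tail --- does not address the actual failure mode, which is not that the winner's value is huge but that the threshold (here, the monopoly reserve) itself sits at a quantile below $\epsilon_1$, so the entire revenue is collected on the tail event. Consequently you cannot invoke Lemma~\ref{lem:single:discretized} as a black box; its proof of Claim~\ref{clm:rev:m*} discards all revenue from the tail event, which here can be everything. (A secondary issue: even granting the lemma, plugging $\delta=\epsilon/4$, $\delta_1=\epsilon$ into the bound $\frac{1-\delta_1}{1+\delta}OPT$ gives $\frac{1-\epsilon}{1+\epsilon/4}<\frac{1}{1+\epsilon}$, so your constants would not close.)

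What is true for regular distributions --- and what the paper actually uses --- is not that the tail contributes negligible revenue, but that \emph{pricing at the tail-boundary value recovers almost all of the tail revenue}. The paper truncates each $\cD_i$ at $\hat v_i=\max\{v_i^*,\,16\,OPT(\cI)/\epsilon\}$ where $v_i^*$ is the value at quantile $q^*=\epsilon^2/(256n)$, invokes the regularity-specific truncation lemma (Lemma~2 of \cite{devanur2016sample}) to get $OPT(\bar\cI)\geq(1-\epsilon/4)OPT(\cI)$, and then runs the round-down/resampling coupling of Lemma~\ref{lem:single:discretized} against $\bar\cD$ rather than $\cD$, with a case split on whether $v_i^*\leq 16\,OPT(\cI)/\epsilon$: in that case the per-bidder loss from the top quantile interval is at most $\frac{16\,OPT(\cI)}{\epsilon}\cdot\frac{\epsilon^2}{256n}$, summing to $\frac{\epsilon}{16}OPT(\cI)$; otherwise the truncation point coincides with the queried quantile $\epsilon_1$ and only the $(1+\epsilon/4)$ discretization factor is lost. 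Your proof becomes repairable if you replace ``verify Small-Tail Assumption~2'' with this truncation step; the rest of your outline (DSIC, query count, stochastic dominance plus revenue monotonicity) is correct.
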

\begin{algorithm}[htbp]
\floatname{algorithm}{Mechanism}
  \caption{\hspace{-3pt} Efficient quantile Myerson mechanism for regular distributions, $\cM_{EMR}$}
 \label{alg:single:regular}
\begin{algorithmic}[1]
\STATE Given $\epsilon > 0$, run algorithm $\cA_{Q}$ with $\delta = \frac{\epsilon}{4}$ and $\epsilon_{1} = \frac{\epsilon^2}{256n}$
for each player $i$'s distribution $\cD_{i}$, with the returned distribution denoted by $\cD'_{i}$.
Let $\cD'=\times_{i\in N}\cD'_{i}$.

\STATE Run $\cM_{MRS}$ with $\cD'$ and the players' reported values, $b=(b_{i})_{i\in N}$,
to get allocation $x = (x_{i})_{i\in N}$ and price profile $p = (p_i)_{i\in N}$ as the outcome.
\end{algorithmic}
\end{algorithm}

\begin{proof}
Consider the quantile value $q^* = \frac{\epsilon^2}{256n}$
and $v_i^* = F_i^{-1}(1-q^*)$.
Let $\hat{v}_i = \max\{v_i^*, \frac{16OPT(\cI)}{\epsilon}\}$,
and $\bar{\cD}_1, \dots, \bar{\cD}_n$ be imaginary distributions obtained by
truncating $\cD_1, \dots, \cD_n$ at $\hat{v}_i$
(i.e., a sample $\bar{v}_i$ from $\bar{\cD}_i$ is obtained by first sampling $v_i$ from $\cD_i$
and then rounding down to $\bar{v}_i = \min\{v_i , \hat{v}_i\}$).
Finally, denote by $\bar{\cI} = (N,M,\bar{\cD})$ the imaginary Bayesian instance
where players' values are drawn from $\bar{\cD}$.

Note that $\cD'$ is also a discretization distribution for $\bar{\cD}$,
following the proof and notations of Theorem \ref{thm:single},
letting $v_i^-$ be the value first sampled from $\bar{\cD_i}$ then rounding down to the support of $\cD'$,
we have $\cM_{EMR}$ is truthful and using the technique of Mechanism \ref{alg:imaginary},
we have
%letting $\hat{v}_i$ be the largest value in the support of $\cD'_i$ that is smaller than $v_i^*$,
\begin{eqnarray}\label{eq:regular}
&&Rev(\cM_{EMR}(\cI)) = Rev(\cM_{MRS}(v, \cD')) \geq Rev(\cM_{MRS}(v', \cD')) \nonumber\\
&\geq& \sum\limits_{i}
\mathop\mathbb{E}\limits_{\bar{v}_{-i}\sim \bar{\cD}_{-i}}
p_{i}(\bar{v}_{-i}; \bar{\cD}) \cdot
\Pr_{\bar{v}_i\sim \bar{\cD}_i}[v^-_{i}\geq p_{i}(\bar{v}_{-i}; \bar{\cD})] \nonumber\\
&=& \sum\limits_{i}
\mathop\mathbb{E}\limits_{\bar{v}_{-i}\sim \bar{\cD}_{-i}}
p_{i}(\bar{v}_{-i}; \bar{\cD}) \cdot
\Pr_{\bar{v}_i\sim \bar{\cD}_i}[v^-_{i}\geq p_{i}(\bar{v}_{-i}; \bar{\cD})]
\cdot ({\bf I}_{v_i^* \leq \frac{16OPT(\cI)}{\epsilon}}
+ {\bf I}_{v_i^* > \frac{16OPT(\cI)}{\epsilon}}).
\end{eqnarray}

We bound the indicators separately.

\begin{eqnarray}\label{eq:low_tail}
&&\sum\limits_{i}
\mathop\mathbb{E}\limits_{\bar{v}_{-i}\sim \bar{\cD}_{-i}}
p_{i}(\bar{v}_{-i}; \bar{\cD}) \cdot
\Pr_{\bar{v}_i\sim \bar{\cD}_i}[v^-_{i}\geq p_{i}(\bar{v}_{-i}; \bar{\cD})]
\cdot {\bf I}_{v_i^* \leq \frac{16OPT(\cI)}{\epsilon}} \nonumber\\
&=& \sum\limits_{i}
\mathop\mathbb{E}\limits_{\bar{v}_{-i}\sim \bar{\cD}_{-i}}
p_{i}(\bar{v}_{-i}; \bar{\cD}) \cdot
\Pr_{\bar{v}_i\sim \bar{\cD}_i}[v^-_{i}\geq p_{i}(\bar{v}_{-i}; \bar{\cD})] \nonumber\\
&& \quad\quad\quad\quad\quad\quad\quad\quad\quad\quad\quad\quad\quad\quad\quad\quad
\cdot {\bf I}_{v_i^* \leq \frac{16OPT(\cI)}{\epsilon}}
\cdot ({\bf I}_{p_{i}(\bar{v}_{-i}; \bar{\cD}) < v_i^*}
+ {\bf I}_{p_{i}(\bar{v}_{-i}; \bar{\cD}) \geq v_i^*}) \nonumber\\
&\geq& \sum\limits_{i}
\mathop\mathbb{E}\limits_{\bar{v}_{-i}\sim \bar{\cD}_{-i}}
[ p_{i}(\bar{v}_{-i}; \bar{\cD})
\cdot \frac{1}{1+\frac{\epsilon}{4}} \cdot
\Pr_{\bar{v}_i\sim \bar{\cD}_i}[\bar{v}_{i}\geq p_{i}(\bar{v}_{-i}; \bar{\cD})]
\cdot {\bf I}_{v_i^* \leq \frac{16OPT(\cI)}{\epsilon}}
\cdot {\bf I}_{p_{i}(\bar{v}_{-i}; \bar{\cD}) < v_i^*} \nonumber\\
&&+ (p_{i}(\bar{v}_{-i}; \bar{\cD}) \cdot
\Pr_{\bar{v}_i\sim \bar{\cD}_i}[\bar{v}_{i}\geq p_{i}(\bar{v}_{-i}; \bar{\cD})]
- \frac{16OPT(\cI)}{\epsilon} \cdot \frac{\epsilon^2}{256n})
\cdot {\bf I}_{v_i^* \leq \frac{16OPT(\cI)}{\epsilon}}
\cdot {\bf I}_{p_{i}(\bar{v}_{-i}; \bar{\cD}) \geq v_i^*} ] \nonumber\\
&\geq& \frac{1}{1+\frac{\epsilon}{4}} \sum\limits_{i}
\mathop\mathbb{E}\limits_{\bar{v}_{-i}\sim \bar{\cD}_{-i}}
p_{i}(\bar{v}_{-i}; \bar{\cD}) \cdot
\Pr_{\bar{v}_i\sim \bar{\cD}_i}[\bar{v}_{i}\geq p_{i}(\bar{v}_{-i}; \bar{\cD})]
\cdot {\bf I}_{v_i^* \leq \frac{OPT(\cI)}{16\epsilon}}
- \frac{\epsilon}{16} \cdot OPT(\cI).
\end{eqnarray}

The first inequality here holds because for price $p_i(\bar{v}_{-i}; \bar{\cD}) < v_i^*$, we have
$$\Pr_{\bar{v}_i\sim \bar{\cD}_i}[v^-_{i}\geq p_{i}(\bar{v}_{-i}; \bar{\cD})]
\geq \frac{1}{1+\frac{\epsilon}{4}}
\Pr_{\bar{v}_i\sim \bar{\cD}_i}[\bar{v}_{i}\geq p_{i}(\bar{v}_{-i}; \bar{\cD})]$$
due to the structure of the quantile queries for $\cD'$.
For price $p_i(\bar{v}_{-i}; \bar{\cD}) \geq v_i^*$, when $v_i^* \leq \frac{16OPT(\cI)}{\epsilon}$,
by the regularity of $\cD_i$, the optimal reserve corresponds to the quantile interval $(\frac{\epsilon^2}{256n}, 1]$.
Thus we have
\begin{eqnarray*}
&& p_{i}(\bar{v}_{-i}; \bar{\cD}) \cdot
\Pr_{\bar{v}_i\sim \bar{\cD}_i}[v^-_{i}\geq p_{i}(\bar{v}_{-i}; \bar{\cD})] \geq 0\\
&\geq& p_{i}(\bar{v}_{-i}; \bar{\cD}) \cdot
\Pr_{\bar{v}_i\sim \bar{\cD}_i}[\bar{v}_{i}\geq p_{i}(\bar{v}_{-i}; \bar{\cD})]
- v_i^* \cdot \Pr_{\bar{v}_i\sim \bar{\cD}_i}[\bar{v}_{i}\geq v_i^*] \\
&\geq& p_{i}(\bar{v}_{-i}; \bar{\cD}) \cdot
\Pr_{\bar{v}_i\sim \bar{\cD}_i}[\bar{v}_{i}\geq p_{i}(\bar{v}_{-i}; \bar{\cD})]
- \frac{16OPT(\cI)}{\epsilon} \cdot \frac{\epsilon^2}{256n}
\end{eqnarray*}
since the expected revenue is non-decreasing for quantile range $[0, \frac{\epsilon^2}{256n}]$. %, distribution for each $\cD_i$ is regular.
Thus Equation \ref{eq:low_tail} holds.
Then for the second indicator for Equation \ref{eq:regular}, we have

\begin{eqnarray}\label{eq:high_tail}
&&\sum\limits_{i}
\mathop\mathbb{E}\limits_{\bar{v}_{-i}\sim \bar{\cD}_{-i}}
p_{i}(\bar{v}_{-i}; \bar{\cD}) \cdot
\Pr_{\bar{v}_i\sim \bar{\cD}_i}[v^-_{i}\geq p_{i}(\bar{v}_{-i}; \bar{\cD})]
\cdot {\bf I}_{v_i^* > \frac{16OPT(\cI)}{\epsilon}} \nonumber\\
&\geq& \frac{1}{1+\frac{\epsilon}{4}} \sum\limits_{i}
\mathop\mathbb{E}\limits_{\bar{v}_{-i}\sim \bar{\cD}_{-i}}
p_{i}(\bar{v}_{-i}; \bar{\cD}) \cdot
\Pr_{\bar{v}_i\sim \bar{\cD}_i}[\bar{v}_{i}\geq p_{i}(\bar{v}_{-i}; \bar{\cD})]
\cdot {\bf I}_{v_i^* > \frac{16OPT(\cI)}{\epsilon}}
\end{eqnarray}
also by the construction the quantile queries for $\cD'$.
Combining Equation \ref{eq:regular}, \ref{eq:low_tail} and \ref{eq:high_tail}, we have
\begin{eqnarray*}
&&Rev(\cM_{EMR}(\cI)) \nonumber\\
&\geq& \frac{1}{1+\frac{\epsilon}{4}} \sum\limits_{i}
\mathop\mathbb{E}\limits_{\bar{v}_{-i}\sim \bar{\cD}_{-i}}
p_{i}(\bar{v}_{-i}; \bar{\cD}) \cdot
\Pr_{\bar{v}_i\sim \bar{\cD}_i}[\bar{v}_{i}\geq p_{i}(\bar{v}_{-i}; \bar{\cD})]
\cdot {\bf I}_{v_i^* \leq \frac{16OPT(\cI)}{\epsilon}}
- \frac{\epsilon}{16} \cdot OPT(\cI) \nonumber\\
&& + \frac{1}{1+\frac{\epsilon}{4}} \sum\limits_{i}
\mathop\mathbb{E}\limits_{\bar{v}_{-i}\sim \bar{\cD}_{-i}}
p_{i}(\bar{v}_{-i}; \bar{\cD}) \cdot
\Pr_{\bar{v}_i\sim \bar{\cD}_i}[\bar{v}_{i}\geq p_{i}(\bar{v}_{-i}; \bar{\cD})]
\cdot {\bf I}_{v_i^* > \frac{16OPT(\cI)}{\epsilon}} \nonumber\\
&=& \frac{1}{1+\frac{\epsilon}{4}} Rev(\cM_{MRS}(\bar{v},\bar{\cD}))
- \frac{\epsilon}{16} \cdot OPT(\cI)
\end{eqnarray*}

By Lemma 2 of \cite{devanur2016sample}, for $0\leq \epsilon \leq 1$,
$Rev(\cM_{MRS}(\bar{v},\bar{\cD})) = OPT(\bar{\cI}) \geq (1-\frac{\epsilon}{4})OPT(\cI)$.
Thus we have
\begin{eqnarray*}
&&Rev(\cM_{EMR}(\cI)) \nonumber\\
&\geq& \frac{1}{1+\frac{\epsilon}{4}} (1-\frac{\epsilon}{4})OPT(\cI) - \frac{\epsilon}{16} \cdot OPT(\cI)
\geq \frac{1}{1+\epsilon} OPT(\cI).
\end{eqnarray*}
Thus Theorem \ref{thm:single:regular} holds.
\end{proof}

Following \cite{cole2014sample, huang2015making, devanur2016sample},
the sample complexity for single-item auction with regular distributions is bounded between
$\Omega(\max\{n\epsilon^{-1}, \epsilon^{-3}\})$ and $\tilde{O}(n \epsilon^{-4})$.
However, each sample is a valuation profile of the players, and thus contains $n$ values.
When $\epsilon$ is small, the query complexity in this setting is $O(n\epsilon^{-1}\log\frac{n}{\epsilon})$.
Thus the query complexity is still much lower than the sample complexity.

\section{Applications: Sampling Mechanisms}
\label{sec:app:sample}

Using our techniques for query complexity, we can easily construct sampling mechanisms for multi-parameter auctions.
Currently, the sample complexity for unit-demand auctions and additive distributions has been upper-bounded
in \cite{azar2014prophet,morgenstern2016learning,goldner2016prior,cai2017learning}
for bounded auctions.
%via PAC learning approaches, but is not constructive.
In this section, we provide another way to explicitly construct sampling mechanisms for both unit-demand and additive auctions, for arbitrary distributions with
small-tails (and for bounded distributions).

The idea is to use samples to approximate {\em quantile} queries.
Mechanism \ref{mech:sample-to-query} defines our sampling mechanism $\cM_{SM}$.
Recall that mechanisms $\cM_{MRS}$, $\cM_{UD}$ and $\cM_{A}$ are known (approximately) optimal DSIC mechanisms for
single-item, unit-demand and additive auctions respectively.
Note that in mechanism $\cM_{SM}$,
we use a different method to discretize the quantile space
for additive auctions, so as to further reduce its sample complexity.
In particular, we have the following theorem. 

\begin{algorithm}[htbp]
\floatname{algorithm}{Mechanism}
  \caption{\hspace{-3pt} Sampling Mechanism $\cM_{SM}$ } %for Single-Item/Unit-Demand/Additive Auctions}
  \label{mech:sample-to-query}
    \begin{algorithmic}[1]
\STATE For single-item auctions and unit-demand auctions, given $\epsilon > 0$, set $\delta = \frac{\epsilon}{6}$,
  $\epsilon_1 = h(\frac{2\epsilon}{3(1+\epsilon)})$ and
  $k = \lceil \log_{1+\delta}\frac{1}{\epsilon_1}\rceil$;
  define the {\em quantile vector} as
  $q=(q_{0},q_{1},\dots, q_{k-1}, q_{k})
  =(1,\epsilon_1(1+\delta)^{k-1}, \dots, \epsilon_1(1+\delta), \epsilon_1)$.

  For additive auctions, given $\epsilon>0$, set %$\delta = \frac{1}{2}(1+\frac{\epsilon}{5})^{1/m}-1$,
  $\epsilon_1 = h(\frac{\epsilon}{10(1+\epsilon)})$
  and $k = \lfloor \frac{1}{\epsilon_1}\rfloor$;
  define the {\em quantile vector} as
  $q=(q_{0},q_{1},\dots, q_{k-1}, q_{k})=
  (1,k\epsilon_1, \dots, 2\epsilon_1, \epsilon_1)$.
  \\

\STATE For each player $i$ and item $j$, given $t$ samples $V_{ij}^t = \{v_{ij}^1, \dots, v_{ij}^t\}$, without loss of generality assume $v_{ij}^1\geq v_{ij}^2\geq \cdots\geq v_{ij}^t$.
%sort the $t$ values decreasingly.
For each quantile $q_l$, set $v_{ij}^{t q_l}$ to be the value
corresponding to the quantile query $q_l$. (If $tq_l$ is not an integer then the mechanism takes $\lceil tq_l\rceil$.)\\
%Allegedly, $F_i(v_i^{t\cdot q_l}) = q_l$ for each $l\in \{0, \dots, k\}$, where $F_i$ is the cumulative probability function of $\cD_{i}$.\\

\STATE Construct a discrete distribution $\cD'_{ij}$ as follows:
$\cD'_{ij}(v_{ij}^{tq_l}) = q_l - q_{l+1}$ for each $l\in \{0, \dots, k-1\}$,
and $\cD'_{ij}(v_{ij}^{tq_k}) = \epsilon_1$.
Finally, let $\cD'_i = \times_{j\in M} \cD'_{ij}$ for each player $i$ and
%Note that the support of $\cD'_i$ is exactly $V_i$.
let $\cD'=\times_{i\in N} \cD'_i$. \\
%Each player $i$ reports value $b_i$. Round $b_i$ down to the support $V_i$, denoted by $b'_i$.
%Let $b'=\times_{i\in N}b'_i$. \\
%Run $\cM_{MRS}$ with input $b'$ and $\cD'$ to get allocation $x = (x_i)_{i\in N}$ and price profile $p = (p_i)_{i\in N}$.

\STATE Run $\cM_{MRS}$/$\cM_{UD}$/$\cM_{A}$ with distribution $\cD'$ and the players' reported values.
  \end{algorithmic}
\end{algorithm}

\begin{theorem}\label{thm:sample-to-query}
$\forall \epsilon > 0$ and $\gamma\in(0,1)$, for any Bayesian instance $\cI = (N,M,\cD)$,
\begin{itemize}
\item for single-item auctions satisfying the Small-Tail Assumption 2,
         with $\tilde{O}(h^{-2}(\frac{2\epsilon}{3(1+\epsilon)}) \cdot (\frac{\epsilon}{1+\epsilon})^{-2})$ samples,
	mechanism $\cM_{SM}$ achieves revenue at least $\frac{1}{1+\epsilon} OPT(\cI)$ with probability at least $1-\gamma$;
\item for unit-demand auctions  satisfying the Small-Tail Assumption 2,
         with $\tilde{O}(h^{-2}(\frac{2\epsilon}{3(1+\epsilon)}) \cdot (\frac{\epsilon}{1+\epsilon})^{-2})$ samples,
	mechanism $\cM_{SM}$ achieves revenue at least $\frac{1}{24(1+\epsilon)} OPT$ with probability at least $1-\gamma$;
\item for additive auctions satisfying the Small-Tail Assumption 1,
         with $\tilde{O}(h^{-2}(\frac{\epsilon}{10(1+\epsilon)})
(\frac{1}{2}-\frac{1}{1+(1+\frac{\epsilon}{5})^{1/m}}))^{-2})$ samples,
	mechanism $\cM_{SM}$ achieves revenue at least $\frac{1}{8(1+\epsilon)} OPT$ with probability at least $1-\gamma$.
\end{itemize}
\end{theorem}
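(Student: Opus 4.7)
The plan is to view the sampling mechanism as a noisy quantile-query mechanism and to transfer the revenue guarantees from Theorems~\ref{thm:single}, \ref{thm:unit-demand} and \ref{thm:eca}. Concretely, the $\lceil tq\rceil$-th order statistic of $t$ i.i.d.\ samples is an empirical version of the exact quantile query $q$, so once the sampling error is controlled, the distribution $\cD'$ constructed by Mechanism~\ref{mech:sample-to-query} behaves like the query distribution used in Section~\ref{sec:queryunbounded}.

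First I would apply a standard uniform concentration inequality (Hoeffding on each threshold, or equivalently the Dvoretzky--Kiefer--Wolfowitz inequality) to the $t$ samples from each of the $mn$ distributions $\cD_{ij}$. This guarantees that, with probability at least $1-\gamma$, for every $(i,j)$ and every $l$, the \emph{true} quantile $\tilde{q}_{ij;l}$ of the empirical value $v_{ij}^{tq_l}$ satisfies $|\tilde{q}_{ij;l}-q_l|\le \eta$, provided $t = \tilde O(\log(mn/\gamma)/\eta^2)$. On this good event, $\cD'_{ij}$ is exactly a quantile-query distribution for the perturbed quantile vector $(\tilde{q}_{ij;l})_l$, because the true mass that $\cD_{ij}$ places on the interval $[v_{ij}^{tq_{l+1}}, v_{ij}^{tq_l})$ equals $\tilde{q}_{ij;l}-\tilde{q}_{ij;l+1}$ by definition.

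I would then re-run the analyses of Theorems~\ref{thm:single}, \ref{thm:unit-demand} and \ref{thm:eca} with this perturbed quantile vector. For single-item and unit-demand, the key inequalities (e.g.\ Claim~\ref{clm:rev:gap} and the chain leading to Claim~\ref{clm:rev:m*}) only use the geometric bound $q_{l-1}\le(1+\delta)q_l$. The mechanism sets $\delta=\epsilon/6$, leaving room so that
\[
\tilde{q}_{ij;l-1}\le q_{l-1}+\eta\le (1+\tfrac{\epsilon}{3})(q_l-\eta)\le (1+\tfrac{\epsilon}{3})\,\tilde{q}_{ij;l}
\]
as long as $\eta=\Theta(\epsilon_1\epsilon/(1+\epsilon))$; combining with Small-Tail Assumption~2 then recovers the claimed $1/(1+\epsilon)$ (resp.\ $1/(24(1+\epsilon))$) bound. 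For additive auctions, the analogous ingredient in Lemma~\ref{lem:additive:bvcg} is the ratio $(\tilde{q}_{l-1}-\tilde{q}_l)/(\tilde{q}_l-\tilde{q}_{l+1})$, since arithmetic spacing yields $q_{l-1}-q_l=q_l-q_{l+1}=\epsilon_1$ on all interior buckets; requiring this ratio to be at most $\mu=(1+\epsilon/5)^{1/m}$ forces $(\epsilon_1+2\eta)/(\epsilon_1-2\eta)\le\mu$, i.e.\ $\eta\le \epsilon_1(\tfrac12-\tfrac1{1+\mu})$. Plugging each choice of $\eta$ into $t=\tilde O(\log(mn/\gamma)/\eta^2)$ reproduces the three sample-complexity bounds stated in the theorem.

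The main obstacle will be that, unlike the exact-query $\cD'$, the sampled $\cD'$ is not literally stochastically dominated by $\cD$, so I cannot appeal to revenue monotonicity as in Theorem~\ref{thm:single:value_query}. I expect this to be handled by repeating the imaginary-mechanism argument of Mechanism~\ref{alg:imaginary} directly on the empirical support: the randomised round-down coupling between $v_i\sim\cD_i$ and the empirical bucket containing it remains valid, and the only change is an extra multiplicative slack of $1+O(\eta/\epsilon_1)$ per step, which by the choice of $\eta$ above is absorbed into the overall approximation ratio. A secondary difficulty is the boundary buckets ($l=0$ and $l=k$), where the arithmetic or geometric gap is not exactly $\epsilon_1$; this perturbs the estimates only by constants that are already folded into $\mu$ and into the tail contribution controlled by the small-tail assumptions.
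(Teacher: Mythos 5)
Your proposal is correct and follows essentially the same route as the paper: reduce to the exact-quantile-query analysis by showing, via Chernoff/union bound over all $mn(k+1)$ order statistics, that the true quantiles of the selected sampled values satisfy the consecutive-ratio condition $\tilde q_{l-1}\le(1+\epsilon/3)\tilde q_l$ (resp.\ the gap-ratio condition with $\mu=(1+\epsilon/5)^{1/m}$ for additive), with exactly the error tolerances and resulting sample counts the paper uses. Your treatment is in fact slightly more careful than the paper's on the point that the empirical $\cD'$ need not be stochastically dominated by $\cD$, which you correctly propose to absorb by rerunning the imaginary-mechanism coupling rather than invoking revenue monotonicity.
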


\begin{proof}

After constructing the distributions, we simply run the existing DSIC mechanisms as a Blackbox,
and if the constructed distribution satisfies the property that for any quantile $q_l$,
\begin{equation}\label{eq:quantile_ratio}
q_{ij}(v_{ij}^{t\cdot q_{l+1}})
\geq \frac{1}{1+\frac{\epsilon}{3}}
\left(q_{ij}(v_{ij}^{t\cdot q_l}) \right).
\end{equation}
all our query complexity results for single-item and unit-demand auctions directly apply here.

Since here for sampling mechanism, we slice the quantile interval uniformly,
in the ideal case, the selected sampled values correspond to the desired quantiles and
$\cD_{ij}(v_{ij}^{t\cdot q_l}) = \cD_{ij}(v_{ij}^{t\cdot q_{l+1}})$.
However, since these samples are random, we may not obtain the ideal case.
In fact, given parameter $d = \frac{12+3\epsilon}{\epsilon}$,
if for any quantile $q_l$,
\begin{equation}\label{eq:sample:quantile}
q_l - \frac{q_l}{d}  \leq
q_{ij}(v_{ij}^{t\cdot q_l})  \leq
q_l + \frac{q_l}{d},
\end{equation}
then
$$
\dfrac{q_{ij}(v_{ij}^{t\cdot q_{l+1}})}
{q_{ij}(v_{ij}^{t\cdot q_l})}
\geq \dfrac{q_{l+1}(1-\frac{1}{d})}
{q_l(1+\frac{1}{d}) }
\geq \dfrac{\frac{1}{1+\frac{\epsilon}{6}}
(1-\frac{1}{d})}{1+\frac{1}{d} }
= \dfrac{1}{1+\frac{\epsilon}{3}}
,
$$
for any $\epsilon > 0$, that is, Equantion \ref{eq:quantile_ratio} holds.
In the following, we show how many samples are enough to obtain Inequality \ref{eq:sample:quantile}.

First, we bound the probability that $v_{ij}^{t\cdot q_{l}}$ locates in the quantile interval $[q_l - \frac{q_l}{d}, q_l + \frac{q_l}{d}]$.
Let $E_{ij,l}^{left}$ be the event that $v_{ij}^{t\cdot q_{l}}$ locates in the quantile interval $[0, q_l - \frac{q_l}{d}]$,
and $E_{ij,l}^{right}$ be the event that $v_{ij}^{t\cdot q_{l}}$ locates in the quantile interval $[q_l + \frac{q_l}{d}, 1]$.
Then
$$\Pr[E_{ij,l}^{left}] =
\sum_{s=0}^{t-t\cdot q_{l}} \binom{t}{s} \left(q_{l}-\frac{q_l}{d}\right)^{s}
\left(1-q_{l}+\frac{q_l}{d}\right)^{t-s},
$$
and
$$\Pr[E_{ij,l}^{right}] =
\sum_{s=0}^{t\cdot q_{l}} \binom{t}{s} \left(1-q_{l}-\frac{q_l}{d}\right)^{s}
\left(q_{l}+\frac{q_l}{d}\right)^{t-s}.
$$
By Chernoff's inequality and $\forall i,j,l$, letting $\Pr[E_{ij,l}^{left}]$ and $\Pr[E_{ij,l}^{right}]$ be no more than $\frac{\gamma}{2mn(k+1)}$,
$t=\tilde{O}((\frac{\epsilon_1}{d})^{-2})
= \tilde{O}((\frac{\epsilon \cdot \epsilon_1}{(1+\epsilon)})^{-2})$.
That is with $\tilde{O}(h^{-2}(\frac{2\epsilon}{3(1+\epsilon)}) \cdot (\frac{\epsilon}{1+\epsilon})^{-2})$ samples,
the probability that $v_{ij}^{t\cdot q_{l}}$ does not locate in the quantile interval
$[q_l - \frac{q_l}{d}, q_l + \frac{q_l}{d}]$
is less than $\frac{\gamma}{mn(k+1)}$.
By union bound, there exists one $v_{ij}^{t\cdot q_{l}}$ for all
$i\in [n], j\in [m], l\in [k+1]$
does not locate in the quantile interval
$[q_l - \frac{q_l}{d}, q_l + \frac{q_l}{d}]$
is less than $\gamma$.
Then with probability $1-\gamma$, Inequality \ref{eq:sample:quantile} holds.

For additive auctions,
if the constructed distribution satisfies the property that for any quantile~$q_l$,
\begin{equation}\label{eq:add:quantile_ratio}
q_{ij}(v_{ij}^{t\cdot q_l}) - q_{ij}(v_{ij}^{t\cdot q_{l+1}})
\geq \frac{1}{(1+\frac{\epsilon}{5})^{1/m}}
\left(q_{ij}(v_{ij}^{t\cdot q_{l+1}}) - q_{ij}(v_{ij}^{t\cdot q_{l+2}})\right),
\end{equation}
all our query complexity results for additive auctions directly apply here.
In fact, if for any quantile~$q_l$,
\begin{equation*}
q_l - \epsilon_1(\frac{1}{2}-\frac{1}{1+(1+\frac{\epsilon}{5})^{1/m}})
\leq q_{ij}(v_{ij}^{t\cdot q_l})
\leq q_l + \epsilon_1(\frac{1}{2}-\frac{1}{1+(1+\frac{\epsilon}{5})^{1/m}}),
\end{equation*}
then,
$$
\dfrac{q_{ij}(v_{ij}^{t\cdot q_l})  - q_{ij}(v_{ij}^{t\cdot q_{l+1}})}
{q_{ij}(v_{ij}^{t\cdot q_{l+q}}) - q_{ij}(v_{ij}^{t\cdot q_{l+2}})}
\geq \dfrac{\epsilon_{1} - \epsilon_1(1-\frac{2}{1+(1+\frac{\epsilon}{5})^{1/m}})}
{\epsilon_{1} + \epsilon_1(1-\frac{2}{1+(1+\frac{\epsilon}{5})^{1/m}})}
= \dfrac{1}{(1+\frac{\epsilon}{5})^{1/m}},
$$

Using the same technique of applying the Chernoff's inequality,
with $\tilde{O}(h^{-2}(\frac{\epsilon}{10(1+\epsilon)})
(\frac{1}{2}-\frac{1}{1+(1+\frac{\epsilon}{5})^{1/m}}))^{-2})$ samples,
Equation \ref{eq:add:quantile_ratio} holds with probability $1-\gamma$.
Thus Theorem \ref{thm:sample-to-query} holds.
\end{proof}

\subparagraph*{Remark.}
Following the convention in the literature, a logarithmic factor depending on $\gamma$ has been absorbed
in $\tilde{O}(\cdot)$.
If the values are bounded in $[1,H]$,
%then the number of samples are summarized as Table~\ref{table:sample-to-query}.
by defining the tail function $h$ according to $H$, the resulting
sample complexity is $\tilde{O}(m^{4}n^{2}H^{2}(1+\epsilon)^4\epsilon^{-4})$ for unit-demand auctions
and
$\tilde{O}(m^{4}n^{2}H^{2}(\frac{1+\epsilon}{\epsilon})^2
(\frac{1}{2}-\frac{1}{1+(1+\frac{\epsilon}{5})^{1/m}}))^{-2})$
for additive auctions.

We note that our resulting sampling mechanisms are not better than the best-known sample complexity results
as shown in Table \ref{table:sample:laterature}.
However, our mechanisms are able to deal with unbounded distributions
satisfy small-tail assumptions.

\newpage

\appendix

\section{Missing Proofs for Section \ref{sec:queryunbounded}}
\label{app:unbounded}

We now prove the claims used above.

\smallskip

\noindent
{\bf Claim \ref{clm:rev:gap}} (restated){\bf.} {\em
$\Pr_{v_i\sim \cD_i}[v_{i}\geq p_{i}(v_{-i}; \cD) | {q_{i}(v_{i}) > \epsilon_1}]
\leq (1+\delta) \Pr_{v_i\sim \cD_i}[v^-_{i}\geq p_{i}(v_{-i}; \cD)].$}

\begin{proof}
%[Proof of Claim \ref{clm:rev:gap}]

By definition, ${q_{i}(v_{i}) > \epsilon_1}$ implies $v_i \leq v'_{i;k}$,
where $v'_{i;k}$ is the largest value in $V'_i$, the support of distribution $\cD'_i$.
Note that $v_i^-\leq v'_{i;k}$ for any $v_i$.
If $p_{i}(v_{-i}; \cD) > v'_{i;k}$, then both probabilities are 0 and the inequality holds.

Below we consider the case $p_{i}(v_{-i}; \cD) \leq v'_{i;k}$.
Let
$v'_{i;-1} = -1$ and
$l\in \{0, 1, \dots, k\}$ be such that $v'_{i;l} \geq p_i(v_{-i}; \cD)$ and $v'_{i; l-1} < p_i(v_{-i}; \cD)$.
We have

\begin{eqnarray*}
&&\Pr_{v_i\sim \cD_i}[v_{i}\geq p_{i}(v_{-i}; \cD)  | q_{i}(v_{i}) > \epsilon_1] \\
&\leq& \Pr_{v_i\sim \cD_i}[v_{i}^+ \geq p_i(v_{-i}; \cD)  | q_{i}(v_{i}) > \epsilon_1] \\
&=& \Pr_{v_i\sim \cD_i}[v_{i}^+ \geq v'_{i;l}  | q_{i}(v_{i}) > \epsilon_1] \\
&=& \Pr_{v_i\sim \cD_i}[v_{i}^- \geq v'_{i;l-1}  | q_{i}(v_{i}) > \epsilon_1] \\
&=& \Pr_{v_i\sim \cD_i}[v_{i}^- \geq v'_{i;\max\{0, l-1\}}  | q_{i}(v_{i}) > \epsilon_1] \\
&= & \frac{\Pr_{v_i\sim \cD_i}[v_{i}^- \geq v'_{i;\max\{0, l-1\}} \mbox{ and } q_{i}(v_{i}) > \epsilon_1]}{\Pr_{v_i\sim \cD_i}[q_{i}(v_{i}) > \epsilon_1]} \\
&=& \frac{\Pr_{v_i\sim \cD_i}[v_{i}^- \geq v'_{i;\max\{0, l-1\}}] - \Pr_{v_i\sim \cD_i}[v_{i}^- \geq v'_{i;\max\{0, l-1\}} \mbox{ and } q_{i}(v_{i}) \leq \epsilon_1]}
{\Pr_{v_i\sim \cD_i}[q_{i}(v_{i}) > \epsilon_1]} \\
&=& \frac{\Pr_{v_i\sim \cD_i}[v_{i}^- \geq v'_{i;\max\{0, l-1\}}] - \Pr_{v_i\sim \cD_i}[q_{i}(v_{i}) \leq \epsilon_1]}
{\Pr_{v_i\sim \cD_i}[q_{i}(v_{i}) > \epsilon_1]} \\
&\leq& \Pr_{v_i\sim \cD_i}[v_{i}^- \geq v'_{i;\max\{0, l-1\}}] \\
& = & \Pr_{v'_i\sim \cD'_i}[v'_i \geq v'_{i;\max\{0, l-1\}}] =  q_{\max\{0, l-1\}} \leq (1+\delta) q_l \\
& = & (1+\delta) \Pr_{v'_i\sim \cD'_i}[v'_i \geq v'_{i;l}] =  (1+\delta) \Pr_{v'_i\sim \cD'_i}[v'_i \geq p_{i}(v_{-i}; \cD)] \\
&=& (1+\delta) \Pr_{v_i\sim \cD_i}[v_{i}^-\geq p_{i}(v_{-i}; \cD)],
\end{eqnarray*}

\noindent
as desired.
Indeed, the first inequality is because $v_i^+>v_i$,
and
the first equality is because $v_i^+\in V'_i\cup\{+\infty\}$ and thus
$v_i^+\geq p_i(v_{-i}; \cD)$ if and only if $v_i^+\geq v'_{i;l}$.
Similarly, the second equality is because $(v_i^-, v_i^+)$ and $(v'_{i;l-1}, v'_{i;l})$
are two
pairs of consecutive values in $V'_i\cup\{-1, +\infty\}$,
thus $v_i^+\geq v'_{i;l}$ if and only if $v_i^-\geq v'_{i;l-1}$.
The third
equality is because $v_i^-\geq v'_{i;0}$ always.
The sixth equality is because $q_{i}(v_{i}) \leq \epsilon_1$ implies $v_i\geq v'_{i;k}\geq v'_{i;l}$, thus $v_{i}^- \geq v'_{i;\max\{0, l-1\}}$.
The seventh equality is by the definition of the round-down scheme.
The following two equalities  and the inequality are by the construction of $\cD'_i$ and the definition of the quantile vector $q$.
Indeed, $(1+\delta)q_0 =1+\delta>1=q_0$,
$(1+\delta)q_1 = \epsilon_1(1+\delta)^k\geq
\epsilon_1 (1+\delta)^{\log_{1+\delta}\frac{1}{\epsilon_1}}
= \epsilon_1 \cdot \frac{1}{\epsilon_1} = 1 = q_0$,
and
$(1+\delta)q_l = q_{l-1}$ for any $l\geq 2$.
The second-last equality is because $v'_i\in V'_i$, thus $v'_i \geq v'_{i;l}$ if and only if $v'_i\geq p_{i}(v_{-i}; \cD)$.
Finally, the last equality is again by the definition of the round-down scheme.
%$v'_{i;l}$ and $v_i^-$.
\end{proof}

\noindent
{\bf Claim \ref{clm:rev:m*}} (restated){\bf.} {\em
$Rev(\cM^*(\cI')) \geq \frac{1}{1+\epsilon}OPT(\cI)$.}
\begin{proof}
Combining Equation \ref{eqn:3} and Claim \ref{clm:rev:gap}, we have
$$Rev(\cM^{*}(\cI'))
\geq \frac{1}{1+\delta}\sum\limits_{i}
\mathop\mathbb{E}\limits_{v_{-i}\sim \cD_{-i}}
p_{i}(v_{-i}; \cD) %\cdot \Pr_{v_{i}\sim \cD_{i}}({F_{i}(v_{i}) \leq 1-\epsilon_1})
\cdot \Pr_{v_i\sim \cD_i}[v_{i}\geq p_{i}(v_{-i}; \cD) | {q_{i}(v_{i}) > \epsilon_1}].$$
Accordingly,
\begin{eqnarray}\label{equ:M*OPTI}
&& Rev(\cM^{*}(\cI'))
\geq \frac{1}{1+\delta}\sum\limits_{i}
\mathop\mathbb{E}\limits_{v_{-i}\sim \cD_{-i}}
p_{i}(v_{-i}; \cD) %\cdot \Pr_{v_{i}\sim \cD_{i}}({F_{i}(v_{i}) \leq 1-\epsilon_1})
\cdot \Pr_{v_i\sim \cD_i}[v_{i}\geq p_{i}(v_{-i}; \cD) | {q_{i}(v_{i}) > \epsilon_1}] \nonumber \\
&\geq & \frac{1}{1+\delta}\sum\limits_{i}
\mathop\mathbb{E}\limits_{v_{-i}\sim \cD_{-i}}
p_{i}(v_{-i}; \cD) \cdot
\Pr_{v_{i}\sim \cD_{i}}[q_{i}(v_{i}) > \epsilon_1 \mbox{ and } v_{i}\geq p_{i}(v_{-i}; \cD)] \nonumber \\
&=& \frac{1}{1+\delta} \sum\limits_{i}
\mathop\mathbb{E}\limits_{v_{-i}\sim \cD_{-i}}
\mathop\mathbb{E}\limits_{v_{i}\sim \cD_{i}}
p_{i}(v_{-i}; \cD) \cdot {\bf I}_{q_{i}(v_{i}) > \epsilon_1} \cdot {\bf I}_{v_{i}\geq p_{i}(v_{-i}; \cD)} \nonumber \\
&=& \frac{1}{1+\delta}
\mathop\mathbb{E}\limits_{v\sim \cD}
\sum\limits_{i} p_{i}(v_{-i}; \cD) \cdot {\bf I}_{q_{i}(v_{i}) > \epsilon_1} \cdot {\bf I}_{v_{i}\geq p_{i}(v_{-i}; \cD)} \nonumber \\
&\geq& \frac{1}{1+\delta}
\mathop\mathbb{E}\limits_{v\sim \cD} {\bf I}_{\forall i, q_{i}(v_{i}) > \epsilon_1} \cdot
\sum\limits_{i} p_{i}(v_{-i}; \cD) {\bf I}_{v_{i}\geq p_{i}(v_{-i}; \cD)} \nonumber \\
&=& \frac{1}{1+\delta}
\mathop\mathbb{E}\limits_{v\sim \cD} {\bf I}_{\forall i, q_{i}(v_{i}) > \epsilon_1} \cdot Rev_{OPT}(v;\cI) \nonumber \\
&\geq&
\frac{1-\delta_1}{1+\delta}OPT(\cI).
\end{eqnarray}

\noindent
Here the second last equality holds by the definition of $p_{i}(v_{-i}; \cD)$ and $Rev_{OPT}(v;\cI)$,
and last inequality holds by the Small-Tail Assumption 2.
Since $\delta = \frac{\epsilon}{3}$ and $\delta_1 = \frac{2\epsilon}{3(1+\epsilon)}$, we have
$$\frac{1-\delta_1}{1+\delta}
%\geq \frac{1-\delta}{1+\delta}
= \frac{1}{1+\epsilon},$$
thus Claim \ref{clm:rev:m*} holds.
\end{proof}

\newpage

\bibliographystyle{abbrv}
\bibliography{ref}

\begin{thebibliography}{10}

\bibitem{azar2013optimal}
P.~Azar, C.~Daskalakis, S.~Micali, and S.~M. Weinberg.
\newblock Optimal and efficient parametric auctions.
\newblock In {\em 24th Symposium on Discrete Algorithms (SODA'13)}, pages
  596--604, 2013.

\bibitem{azar2013parametric}
P.~Azar and S.~Micali.
\newblock Parametric digital auctions.
\newblock In {\em 4rd Innovations in Theoretical Computer Science Conference
  (ITCS'13)}, pages 231--232, 2013.

\bibitem{azar2014prophet}
P.~D. Azar, R.~Kleinberg, and S.~M. Weinberg.
\newblock Prophet inequalities with limited information.
\newblock In {\em 25th Annual ACM-SIAM Symposium on Discrete Algorithms
  (SODA'14)}, pages 1358--1377, 2014.

\bibitem{babaioff2014simple}
M.~Babaioff, N.~Immorlica, B.~Lucier, and S.~M. Weinberg.
\newblock A simple and approximately optimal mechanism for an additive buyer.
\newblock In {\em 55th Annual IEEE Symposium on Foundations of Computer Science
  (FOCS'14)}, pages 21--30, 2014.

\bibitem{brier1950verification}
G.~W. Brier.
\newblock Verification of forecasts expressed in terms of probability.
\newblock {\em {Monthly Weather Weview}}, 78(1):1--3, 1950.

\bibitem{cai2017learning}
Y.~Cai and C.~Daskalakis.
\newblock Learning multi-item auctions with (or without) samples.
\newblock In {\em 58th Symposium on Foundations of Computer Science (FOCS'17)},
  pages 516--527, 2017.

\bibitem{cai2012algorithmic}
Y.~Cai, C.~Daskalakis, and S.~M. Weinberg.
\newblock An algorithmic characterization of multi-dimensional mechanisms.
\newblock In {\em 44th Annual ACM Symposium on Theory of Computing (STOC'12)},
  pages 459--478, 2012.

\bibitem{cai2012optimal}
Y.~Cai, C.~Daskalakis, and S.~M. Weinberg.
\newblock Optimal multi-dimensional mechanism design: Reducing revenue to
  welfare maximization.
\newblock In {\em 53rd Symposium on Foundations of Computer Science (FOCS'12)},
  pages 130--139, 2012.

\bibitem{cai2016duality}
Y.~Cai, N.~R. Devanur, and S.~M. Weinberg.
\newblock A duality based unified approach to {Bayesian} mechanism design.
\newblock In {\em 48th Annual ACM Symposium on Theory of Computing (STOC'16)},
  pages 926--939, 2016.

\bibitem{cai2017simple}
Y.~Cai and M.~Zhao.
\newblock Simple mechanisms for subadditive buyers via duality.
\newblock In {\em 49th Symposium on Theory of Computing (STOC'17)}, pages
  170--183, 2017.

\bibitem{cervera1996proper}
J.~Cervera and J.~Munoz.
\newblock Proper scoring rules for fractiles.
\newblock {\em Bayesian Statistics}, 5:513--519, 1996.

\bibitem{chawla2010multi}
S.~Chawla, J.~D. Hartline, D.~L. Malec, and B.~Sivan.
\newblock Multi-parameter mechanism design and sequential posted pricing.
\newblock In {\em 43th ACM Symposium on Theory of Computing (STOC'10)}, pages
  311--320, 2010.

\bibitem{chawla2016mechanism}
S.~Chawla and J.~B. Miller.
\newblock Mechanism design for subadditive agents via an ex ante relaxation.
\newblock In {\em 17th Conference on Economics and Computation (EC'16)}, pages
  579--596. ACM, 2016.

\bibitem{jing2016crowd}
J.~Chen, B.~Li, and Y.~Li.
\newblock From {Bayesian} to crowdsourced {Bayesian} auctions.
\newblock {\em arXiv:1702.01416}, 2016.

\bibitem{cole2014sample}
R.~Cole and T.~Roughgarden.
\newblock The sample complexity of revenue maximization.
\newblock In {\em 46th Annual ACM Symposium on Theory of Computing (STOC'14)},
  pages 243--252, 2014.

\bibitem{cremer1988full}
J.~Cremer and R.~P. McLean.
\newblock Full extraction of the surplus in {Bayesian} and dominant strategy
  auctions.
\newblock {\em Econometrica}, 56(6):1247--1257, 1988.

\bibitem{daskalakis2013mechanism}
C.~Daskalakis, A.~Deckelbaum, and C.~Tzamos.
\newblock Mechanism design via optimal transport.
\newblock In {\em 14th Conference on Electronic Commerce (EC'13)}, pages
  269--286, 2013.

\bibitem{devanur2016sample}
N.~R. Devanur, Z.~Huang, and C.-A. Psomas.
\newblock The sample complexity of auctions with side information.
\newblock In {\em 48th Annual ACM Symposium on Theory of Computing (STOC'16)},
  pages 426--439, 2016.

\bibitem{dhangwatnotai2015revenue}
P.~Dhangwatnotai, T.~Roughgarden, and Q.~Yan.
\newblock Revenue maximization with a single sample.
\newblock {\em Games and Economic Behavior}, 91:318--333, 2015.

\bibitem{dobzinski2016computational}
S.~Dobzinski.
\newblock Computational efficiency requires simple taxation.
\newblock In {\em 57th Symposium on Foundations of Computer Science (FOCS'16)},
  pages 209--218, 2016.

\bibitem{goldner2016prior}
K.~Goldner and A.~R. Karlin.
\newblock A prior-independent revenue-maximizing auction for multiple additive
  bidders.
\newblock In {\em 12th International Conference on Web and Internet Economics
  (WINE'16)}, pages 160--173, 2016.

\bibitem{gonczarowski2017efficient}
Y.~A. Gonczarowski and N.~Nisan.
\newblock Efficient empirical revenue maximization in single-parameter auction
  environments.
\newblock In {\em 49th Symposium on Theory of Computing (STOC'17)}, pages
  856--868, 2017.

\bibitem{guesnerie1981taxation}
R.~Guesnerie.
\newblock On taxation and incentives: further remarks on the limits to
  redistribution.
\newblock {\em University of Bonn}, 89, 1981.

\bibitem{hammond1979straightforward}
P.~J. Hammond.
\newblock Straightforward individual incentive compatibility in large
  economies.
\newblock {\em The Review of Economic Studies}, 46(2):263--282, 1979.

\bibitem{hart2012approximate}
S.~Hart and N.~Nisan.
\newblock Approximate revenue maximization with multiple items.
\newblock In {\em 13th ACM Conference on Electronic Commerce (EC'12)}, pages
  656--656, 2012.

\bibitem{hart2013menu}
S.~Hart and N.~Nisan.
\newblock The menu-size complexity of auctions.
\newblock In {\em 14th ACM Conference on Electronic Commerce (EC'13)}, pages
  565--566, 2013.

\bibitem{hart2012maximal}
S.~Hart and P.~J. Reny.
\newblock Maximal revenue with multiple goods: Nonmonotonicity and other
  observations.
\newblock {\em Theoretical Economics}, 10:893--922, 2015.

\bibitem{huang2015making}
Z.~Huang, Y.~Mansour, and T.~Roughgarden.
\newblock Making the most of your samples.
\newblock In {\em 16th ACM Conference on Economics and Computation (EC'15)},
  pages 45--60, 2015.

\bibitem{kleinberg2012matroid}
R.~Kleinberg and S.~M. Weinberg.
\newblock Matroid prophet inequalities.
\newblock In {\em 44th Annual ACM Symposium on Theory of Computing (STOC'12)},
  pages 123--136, 2012.

\bibitem{morgenstern2016learning}
J.~Morgenstern and T.~Roughgarden.
\newblock Learning simple auctions.
\newblock In {\em 29th Conference on Learning Theory (COLT'16)}, pages
  1298--1318, 2016.

\bibitem{myerson1981optimal}
R.~B. Myerson.
\newblock Optimal auction design.
\newblock {\em Mathematics of Operations Research}, 6(1):58--73, 1981.

\bibitem{ronen2001approximating}
A.~Ronen.
\newblock On approximating optimal auctions.
\newblock In {\em 3rd ACM Conference on Electronic Commerce (EC'01)}, pages
  11--17, 2001.

\bibitem{roughgarden2016ironing}
T.~Roughgarden and O.~Schrijvers.
\newblock Ironing in the dark.
\newblock In {\em 17th ACM Conference on Economics and Computation (EC'16)},
  pages 1--18, 2016.

\bibitem{rubinstein2015simple}
A.~Rubinstein and S.~M. Weinberg.
\newblock Simple mechanisms for a subadditive buyer and applications to revenue
  monotonicity.
\newblock In {\em 16th ACM Conference on Economics and Computation (EC'15)},
  pages 377--394, 2015.

\bibitem{yao2015n}
A.~C.-C. Yao.
\newblock An n-to-1 bidder reduction for multi-item auctions and its
  applications.
\newblock In {\em 26th Annual ACM-SIAM Symposium on Discrete Algorithms
  (SODA'15)}, pages 92--109, 2015.

\bibitem{yao2017dominant}
A.~C.-C. Yao.
\newblock Dominant-strategy versus bayesian multi-item auctions: Maximum
  revenue determination and comparison.
\newblock In {\em 18th Conference on Economics and Computation (EC'17)}, pages
  3--20, 2017.

\end{thebibliography}

\end{document}